\newtheorem{thm}{\textbf{\text{Theorem}}}
\newtheorem{lem}{\textbf{\text{Lemma}}}
\newtheorem{pro}{\textbf{\text{Proposition}}}
\newtheorem{rmk}{\textbf{\text{Remark}}}
\newtheorem{cnd}{\textbf{\text{Condition}}}
\newcommand{\ie}{\textit{i.e.}}
\newcommand{\eg}{\textit{e.g.}}
\title{\LARGE \bf
Global Attitude Synchronization for Heterogeneous Multi-agent Systems on $SO(3)$}
\author{Mouaad Boughellaba, Soulaimane Berkane, and Abdelhamid Tayebi% <-this % stops a space
\thanks{This work was supported by the National Sciences and Engineering Research Council of Canada (NSERC), under the grants NSERC-DG RGPIN 2020-06270 and NSERC-DG RGPIN-2020-04759, and by Fonds de recherche du Qu\'ebec (FRQ). Preliminary results of this paper have been presented in \cite{Mouaad_ACC24}.} 
\thanks{M. Boughellaba and A. Tayebi are with the Department of Electrical Engineering, Lakehead University, Thunder Bay, ON P7B 5E1, Canada \tt\small \{mboughel,atayebi\}@lakeheadu.ca.} 
\thanks{S. Berkane is with the Department of Computer Science and Engineering, University of Quebec in Outaouais, Gatineau, QC, Canada. {\tt\small Soulaimane.Berkane@uqo.ca}}
}
\begin{document}

\maketitle
\thispagestyle{empty}
\pagestyle{empty}

%%%%%%%%%%%%%%%%%%%%%%%%%%%%%%%%%%%%%%%%%%%%%%%%%%%%%%%%%%%%%%%%%%%%%%%%%%%%%%%%
\begin{abstract}
In this paper, we address the problem of attitude synchronization for a group of rigid body systems evolving on $SO(3)$. The interaction among these systems is modeled through an undirected, connected, and acyclic graph topology. First, we present an almost global continuous distributed attitude synchronization scheme with rigorously proven stability guarantees. Thereafter, we propose two global distributed hybrid attitude synchronization schemes on $SO(3)$. The first scheme is a hybrid control law that leverages angular velocities and relative orientations to achieve global alignment to a common orientation. The second scheme eliminates the dependence on angular velocities by introducing dynamic auxiliary variables, while ensuring global asymptotic attitude synchronization. This velocity-free control scheme relies exclusively on attitude information. The proposed schemes are applicable to heterogeneous multi-agent systems, where agents may have distinct inertia matrices. Simulation results are provided to illustrate the effectiveness of the proposed distributed attitude synchronization schemes.
\end{abstract}

\section{Introduction}
\label{sec:introduction}
Attitude synchronization for multi-agent rigid-body systems consists of aligning the agent's orientations to a common orientation using local information exchange. This problem has garnered considerable attention from the research community over the past few decades due to its significant implications in various areas. For instance, many of the existing multi-agent rigid body formation control schemes assume that the agents' absolute orientations are known to allow the use of local relative measurements (\eg, positions, distances, or bearings) in the formation control laws. However, if the agents' absolute attitudes are unknown, they can still achieve the desired formation up to a common rotation by first synchronizing their attitudes to a common orientation and then using this common orientation together with local relative measurements in the formation control law. Note that the two tasks (\ie, the attitude synchronization and formation control) can be performed simultaneously \cite{Oh_TAC2014,Moshtagh_TAC2009}. \\
\indent A number of works have investigated the problem of attitude synchronization using different attitude parameterizations such as the Euler Angles (EA), Modified Rodriguez Parameters (MRP), and unit quaternions. The authors in \cite{Dimos_SCL_2009,Bayezit_TIE_2013,Ren_TCAT_2010,Xin_automatic_2020,Ziyang_automatic_2010,Chen_TAES2019} used EA and MRP representations to study the attitude synchronization problems. Unfortunately, these attitude representations evolve on the Euclidean space $\mathbb{R}^3$ and achieve only local synchronization due to the fact that the EA and MPR are not homomorphic to $SO(3)$ \cite{Ren_TCAT_2010}. Since the unit-quaternion represents the attitude of a rigid body globally \cite{Shuster1993ASO}, several works have addressed the attitude synchronization problem using this representation \cite{Ren_IJACSP2007,BAI20083170,Liu_automatic2018,Pedro_TAC2020,SAVINO2020142,Zhang_TCS2022}. In the same context, the authors in \cite{Abdessameud_TAC2009,Abdessameud_TAC2012} proposed quaternion-based attitude synchronization schemes that use the virtual dynamics approach, initially proposed in \cite{Tayebi_TAC2008}, to eliminate the need for the angular velocity measurements. Although the unit-quaternion representation does not suffer from the singularity problem, unit-quaternion space is a double cover of the special orthogonal group $SO(3)$. Consequently, the use of unit-quaternion, without extra care, can result in the undesirable unwinding phenomenon. Motivated by this, the authors in \cite{Mayhew_TAC2011,GUI2018225,Huang_automatic2021} proposed hybrid quaternion-based attitude synchronization schemes endowed with global asymptotic stability guarantees, while  effectively avoiding the unwinding phenomenon through the use of an appropriately designed logic variable to determine the sign of the torque input.\\
\indent Unlike other attitude parameterizations, the rotation matrix representation, which belongs to the special orthogonal group $SO(3)$, is the only representation that uniquely and globally represents the attitude of a rigid body. However, the topological obstruction to global asymptotic stability induced by the fact that $SO(3)$ is not homeomorphic to any Euclidean space \cite{Koditschek,Bhat_SCL2000}, poses a challenge in extending the classical Euclidean consensus schemes to consensus schemes on smooth manifolds such as $SO(3)$. Despite this challenge, several attitude synchronization schemes on $SO(3)$ have been proposed in the literature (\eg, \cite{Maadani_ACC2020,Tran_ACC22,Tron_CDC2012,Tron_TAC2012,Markdahl_TAC2020,SARLETTE2009572,SARLETTE20072232,Alain_SIAM,Wei_TAC2018,Maadani_2022}). Unfortunately, none of these papers was able to provide global asymptotic stability results.
 Therefore, the design of attitude synchronization schemes (with and without angular velocity measurements) on $SO(3)$ with \textit{global asymptotic stability} guarantees remains an open problem.
A promising way to circumvent this fundamental limitation is through the use of hybrid control techniques. Indeed, while continuous time-invariant feedback cannot overcome the topological obstruction on $SO(3)$, hybrid dynamics allow one to combine continuous flows with discrete switching mechanisms in order to eliminate undesired equilibria and achieve global attractivity (see, \eg, \cite{Mayhew_TAC2013,Lee2015Global, Berkane_TAC2016,miaomiao_TAC2022}). Hybrid approaches have proven successful in stabilization and tracking problems on $SO(3)$, yet their systematic integration into the multi-agent synchronization problem has remained largely unexplored.

 In this paper, we consider the attitude synchronization problem for a group of heterogeneous rigid body systems evolving on $SO(3)$ under an undirected, acyclic and connected graph topology. We begin by presenting a continuous distributed attitude synchronization scheme with almost global asymptotic stability guarantees. This scheme is based on the gradient of a smooth potential function defined on the rotation manifold. However, due to the topological obstruction to global asymptotic stability on $SO(3)$ via continuous time-invariant feedback, this potential function admits undesired critical points that prevent the continuous scheme from achieving global asymptotic stability. To overcome this limitation, we construct a new potential function by augmenting the original design with scalar variables. 
 %This construction is inspired by the hybrid stabilization technique on $SO(3)$ introduced in \cite{miaomiao_TAC2022}, but here it is systematically extended to the multi-agent synchronization setting. 
 The scalar variables are dynamically updated through a multi-agent switching mechanism, resulting in a hybrid feedback control scheme making the desired configuration the only attractor and leading to global attitude synchronization.
The proposed hybrid feedback strategy relies on the angular velocity measurements as well as the relative attitude information. Eliminating the need for velocity measurements in a network with a large number of agents can significantly reduce the cost associated with sensors and the communication flow between agents. Additionally, it ensures a certain level of immunity against angular velocity sensor failures. Moreover, it enhances robustness against potential angular velocity sensor failures. Motivated by the auxiliary systems approach introduced in \cite{Tayebi_TAC2008,miaomiao_TAC2022}, we develop a velocity-free distributed hybrid feedback control law for attitude synchronization that depends solely on attitude measurements, while guaranteeing global asymptotic stability. 
 This velocity-free law uses the outputs of some auxiliary dynamical systems to generate the necessary damping to compensate for the lack of angular velocity information. To the best of the author's knowledge, these are the first results in the literature dealing with global attitude synchronization on $SO(3)$ with and without angular velocity measurements. A preliminary version of this work was presented in \cite{Mouaad_ACC24}. However, that version did not include the continuous distributed attitude synchronization scheme or the hybrid distributed attitude synchronization scheme presented in this paper. Furthermore, compared to the preliminary conference version \cite{Mouaad_ACC24}, this paper provides a more thorough stability analysis, supplemented by further discussions, critical remarks, and more illustrative simulations.

The remainder of this paper is structured as follows: Section \ref{s2} provides some preliminaries and the notations used in this paper. Section \ref{s3} defines the problem of attitude synchronization on $SO(3)$. Section \ref{s34} introduces the continuous distributed attitude synchronization scheme. Section \ref{s4} introduces a generic potential function on $SO(3)^M \times \mathbb{R}^M$ used to derive our results. In Section \ref{s5}, we propose a generic distributed hybrid feedback control scheme for global attitude synchronization, which relies on angular velocity and relative attitude measurements. This scheme is further extended in Section \ref{s6} to eliminate the requirement for velocity measurements. Finally, Sections \ref{s8} and \ref{s9} present simulation results and concluding remarks, respectively.

\section{Preliminaries}\label{s2}

\subsection{Notation}
\noindent The sets of real numbers and the $n$-dimensional Euclidean space  are denoted by $\mathbb{R}$ and $\mathbb{R}^n$, respectively. The set of unit vectors in $\mathbb{R}^n$ is defined as $\mathbb{S}^{n-1}:=\{x\in \mathbb{R}^n~|~x^\top  x =1\}$. Given two matrices $A$,$B$ $\in \mathbb{R}^{m\times n}$, their Euclidean inner product is defined as $\langle \langle A,B \rangle \rangle=\text{tr}(A^\top  B)$. The Euclidean norm of a vector $x \in \mathbb{R}^n$ is defined as $||x||=\sqrt{x^\top  x}$, and the Frobenius norm of a matrix $A \in \mathbb{R}^{n\times n}$ is given by $||A||_F=\sqrt{\text{tr}(A^\top  A)}$. The matrix $I_n \in \mathbb{R}^{n \times n}$ denotes the identity matrix, and $\textbf{1}_n=[1\hdots1]^\top  \in \mathbb{R}^n$. Consider a Riemannian manifold $\mathcal{Q}$ with $\mathcal{T}_x \mathcal{Q}$ being its tangent space at point $x \in \mathcal{Q}$. Let $f: \mathcal{Q} \rightarrow \mathbb{R}_{\geq 0}$ be a continuously differentiable real-valued function. The function $f$ is a potential function on $\mathcal{Q}$ with respect to set $\mathcal{B} \subseteq \mathcal{Q}$ if $f(x)=0$, $\forall x \in \mathcal{B}$, and $f(x) > 0$, $\forall x\notin \mathcal{B}$. The gradient of $f$ at $x \in \mathcal{Q}$, denoted by $\nabla_x f(x)$, is defined as the unique element of $\mathcal{T}_x \mathcal{Q}$ such that \cite{Mahony_book_OAMM}:
\begin{equation}
    \dot f(x)=\langle \nabla_x f(x), \eta\rangle_x, ~~~~~ \forall \eta \in \mathcal{T}_x \mathcal{Q},
\end{equation}
where $\langle~,~\rangle_x:\mathcal{T}_x \mathcal{Q} \times \mathcal{T}_x \mathcal{Q} \rightarrow \mathbb{R}$ is a Riemannian metric associated to $\mathcal{Q}$. The point $x \in \mathcal{Q}$ is said to be a critical point of $f$ if $\nabla_x f(x)=0$.

The attitude of a rigid body is represented by a rotation matrix $R$ which belongs to the special orthogonal group $SO(3):= \{ R\in \mathbb{R}^{3\times 3} | \hspace{0.1cm}\text{det}(R)=1, R^\top R=I_3\}$. The $SO(3)$ group has a compact manifold structure and its tangent space is given by $\mathcal{T}_RSO(3):=\{R \hspace{0.1cm}\Omega\in \mathbb{R}^{3\times 3} \hspace{0.2cm} | \hspace{0.2cm} \Omega \in \mathfrak{so}(3)\}$ where $\mathfrak{so}(3):=\{ \Omega \in \mathbb{R}^{3\times 3} | \Omega^\top =-\Omega\}$ is the Lie algebra of the matrix Lie group $SO(3)$. The map $[.]^{\times}: \mathbb{R}^3 \rightarrow \mathfrak{so}(3)$ is defined such that $[x]^\times y=x \times y$, for any $x,y \in \mathbb{R}^3$, where $\times$ denotes the vector cross product on $\mathbb{R}^3$. The inverse map of $[.]^{\times}$ is $\text{vex}: \mathfrak{so}(3) \rightarrow \mathbb{R}^3$ such that $\text{vex}([\omega]^\times)=\omega$, and $[\text{vex}(\Omega)]^\times=\Omega$ for all $\omega \in \mathbb{R}^3$ and $\Omega \in \mathfrak{so}(3)$. Also, let $\mathbb{P}_a : \mathbb{R}^{3\times 3} \rightarrow \mathfrak{so}(3)$ be the projection map on the Lie algebra $\mathfrak{so}(3)$ such that $\mathbb{P}_a(A):=(A-A^\top )/2$. Given a 3-by-3 matrix $C:=[c_{ij}]_{i,j=1,2,3}$, one has  $\psi(C) := \text{vex} \circ \mathbb{P}_a (C)=\text{vex}(\mathbb{P}_a(C))=\frac{1}{2}[c_{32}-c_{23},c_{13}-c_{31},c_{21}-c_{12}]^\top $. For any $R\in SO(3)$, the normalized Euclidean distance on $SO(3)$, with respect to the identity $I_3$, is defined as $|R|_I^2:=\frac{1}{4}\text{tr}(I_3-R)$ $\in[0,1]$. The angle-axis parameterization of $SO(3)$, is given by $\mathcal{R}(\theta, u):=I_3+\sin\hspace{0.05cm}\theta \hspace{0.2cm}[u]^\times + (1-\cos\hspace{0.05cm}\theta)([u]^\times)^2$, where $u\in \mathbb{S}^2$ and  $\theta \in \mathbb{R}$ are the rotational axis and angle, respectively. The Kronecker product of two matrices $A$ and $B$ is denoted by $A \otimes B$.

\subsection{Graph Theory}
Consider a network of $N$ agents. The interaction topology between the agents is described by an undirected graph $\mathcal G = (\mathcal V,\mathcal E)$, where $\mathcal V=\{1,...,N\}$ and $\mathcal E \subseteq \mathcal V \times \mathcal V $ represent the vertex (or agent) set and the edge set of graph $\mathcal{G}$, respectively. In undirected graphs, the edge $(i,j) \in \mathcal E$ indicates that agents $i$ and $j$ interact with each other without any restriction on the direction, which means that agent $i$ can obtain information (via communication, measurements, or both) from agent $j$ and vice versa. The set of neighbors of agent $i$ is defined as $\mathcal N_i = \{j \in \mathcal V : (i,j) \in \mathcal E \}$. The undirected path is a sequence of edges in an undirected graph. An undirected graph is called connected if there is an undirected path between every pair of distinct agents of the graph. An undirected graph has a cycle if there exists an undirected path that starts and ends at the same agent \cite{Ren_book}. An acyclic undirected graph is an undirected graph without a cycle. An undirected tree is an undirected graph in which any two agents are connected by exactly one path (\ie, an undirected tree is an undirected, connected, and acyclic graph). An oriented graph is obtained from an undirected graph by assigning an arbitrary direction to each edge \cite{Mesbahi_book}.

Consider an oriented graph where each edge is indexed by a number. Let $M=|\mathcal{E}|$ and $\mathcal{M}=\{1,\hdots, M\}$ be the total number of edges and the set of edge indices, respectively. The incidence matrix, denoted by $H\in\mathbb{R}^{N\times M}$, is defined as follows \cite{BAI20083170}:
\begin{equation}\label{h_matrix}
   H:=[h_{ik}]_{N\times M} \hspace{0.4cm} \text{with} \hspace{0.2cm} h_{ik}=\begin{cases}
      +1 & k\in\mathcal{M}_i^+\\
      -1 & k\in\mathcal{M}_i^-\\
      0 & \text{otherwise}
    \end{cases},    
\end{equation}   
where $\mathcal{M}_i^+ \subset \mathcal{M}$ denotes the subset of edge indices in which agent $i$ is the head of the edges and $\mathcal{M}_i^- \subset \mathcal{M}$ denotes the subset of edge indices in which agent $i$ is the tail of the edges. For a connected graph, one verifies that $H^\top \textbf{1}_N=0$ and $\text{rank}(H)=N-1$. Moreover, the columns of $H$ are linearly independent if the graph is an undirected tree.

\subsection{Hybrid Systems Framework}
A hybrid system consists of continuous dynamics called flows and discrete dynamics called jumps. Given a manifold $\mathcal{Y}$ embedded in $\mathbb{R}^n$, according to \cite{Goebel_automatica_2006, Goebel_ieee_magazine,goebel2012hybrid}, the hybrid system dynamics, for every $y\in \mathcal{Y}$, is given by:
\begin{align}\label{H_sys}
    \mathcal{H}:
    \begin{cases}
       \dot{y} \in F(y) &\quad y\in \mathcal{F}\\
       y^+ \in G(y) & \quad y\in \mathcal{J}
    \end{cases}
\end{align}
where $F$ and $G$ represent the flow and jump maps, respectively, which govern the dynamics of the state $y$ through a continuous flow (if $y$  belongs to the flow set $\mathcal{F}$) and a discrete jump (if  $y$ belongs to the jump set $\mathcal{J}$). Note that $F:\mathcal{Y} \rightrightarrows \mathcal{T}\,\mathcal{Y}$ and $G:\mathcal{Y} \rightrightarrows \mathcal{Y}$ are set-valued mappings, where $\mathcal{T}\,\mathcal{Y}$ denotes the \textit{tangent bundle}\footnote{The disjoint union of all tangent spaces represents the \textit{tangent bundle} of $\mathcal{Y}$, \ie, $\mathcal{T}\,\mathcal{Y}=\bigcup_{y\in \mathcal{Y}}  \mathcal{T}_y \mathcal{Y}.$} of $\mathcal{Y}$. According to the nature of the hybrid system dynamics, which allows continuous flows and discrete jumps, the solutions of the hybrid system are parameterized by $t \in \mathbb{R}_{\geq 0}$ to indicate the amount of time spent in the flow set and $j \in \mathbb{N}$ to track the number of jumps that occur. The structure that represents this parameterization, which is known as a \textit{hybrid time domain}, is a subset of $\mathbb{R}_{\geq 0} \times \mathbb{N}$ and is denoted as \textup{dom}$\,y$. If the solution of the hybrid system $\mathcal{H}$ cannot be extended by either flowing or jumping, it is called \textit{maximal}, and \textit{complete} if its domain \textup{dom}$\,y$ is unbounded.

\section{Problem Statement}\label{s3}
Consider $N$-agent system governed by the following rigid-body rotational dynamics:
\begin{align}\label{R_dynamics_i}
    \begin{cases}
       \dot{R}_i &= R_i[\omega_i]^{\times}\\
       J_i \dot{\omega}_i &= -[\omega_i]^\times J_i \omega_i + \tau_i
    \end{cases}
\end{align}
with $R_i \in SO(3)$ represents the orientation of the body-attached frame of agent $i$ with respect to the inertial frame, $\omega_i\in \mathbb{R}^3$ is the body-frame angular velocity of agent $i$, $\tau_i \in \mathbb{R}^3$ is the control torque that will be designated later, and $J_i \in \mathbb{R}^{3 \times 3}$ is the inertial matrix associated with agent $i$. We emphasize that the agents are allowed to have different inertia matrices $J_i$, which are constant and symmetric positive definite.\\
Let the graph $\mathcal G$ describe the interaction between agents, which implies that if two agents are neighbors, their relative orientation is available to each of them, either by measurement if the agents are equipped with a relative attitude sensor, or by construction if they share their absolute orientations through communication. The relative orientation between agent $i$ and agent $j$ is defined as follows
\begin{equation}\label{measurement_model_R}
    R_{ij} := R_i^\top  R_j,
\end{equation}
where $(i,j)\in\mathcal E$. In this work, the interaction graph $\mathcal G$ is assumed to be an undirected tree. Now, let us formally introduce the problem that will be addressed in this work. 

\textbf{\textit{Problem}}: Consider a network of $N$ agents rotating according to the rigid-body rotational dynamics given in \eqref{R_dynamics_i}. Assume that the measurement \eqref{measurement_model_R} is available and the interaction graph $\mathcal G$ is an undirected tree. For each $i\in \mathcal{V}$, design a distributed feedback control torque $\tau_i$ such that, for any initial conditions, the orientations of all agents are synchronized to a common constant orientation.

Based on the above problem statement, the objective of this work is to design $\tau_i$ such that $R_j^T R_i = I_3$ and $\omega_i = 0$ for all $i \in \mathcal{V}$ and $j \in \mathcal{N}_i$, starting from any initial conditions.

\section{Distributed Attitude Synchronization on $SO(3)$}\label{s34}
 In this section, we present a continuous distributed attitude synchronization scheme. This discussion aims to provide context and motivation for the main results of the present paper, complemented by a rigorous stability analysis that demonstrates the scheme's ability to ensure almost global asymptotic stability. For every $i \in \mathcal{V}$, consider the following distributed feedback control law
\begin{equation}\label{continuous_tau} 
    \tau_i = - k_R \sum_{j \in \mathcal{N}_i} \psi(A R_j^\top R_i)-k_\omega \omega_i -\bar k_\omega \sum_{j\in\mathcal{N}_i}(\omega_i-\omega_j),
\end{equation}
where $k_R, k_\omega >0$, $\bar k_\omega \geq 0$ and $A \in \mathbb{R}^{3 \times 3}$ is a symmetric and positive definite matrix with three distinct eigenvalues. Note that the continuous distributed feedback control scheme \eqref{continuous_tau} has been previously proposed in \cite{SARLETTE2009572} with $A=I_3$ and $\bar k_\omega=0$, where only local asymptotic stability was claimed.
The first term in the control law \eqref{continuous_tau} is derived from the gradient of a potential function in terms of the orientation mismatches between neighboring agents (see the proof of Theorem \ref{theorem_continuous} for details). The second term acts as a local damping mechanism that ensures exponential convergence of each agent’s angular velocity to zero. The third term facilitates coordination by promoting consensus among the agents' angular velocities during the transient phase, resulting in more collective rotational behavior. In other words, the continuous feedback control law \eqref{continuous_tau} is designed to ensure that, for every \(i \in \mathcal{V}\) and \(j \in \mathcal{N}_i\), the relative orientation \(R_j^\top R_i\) converges to the identity matrix, and the angular velocity \(\omega_i\) converges to zero.

It is important to note that each edge in the graph has two possible relative orientations (\ie, for every \((i, j) \in \mathcal{E}\), both \(R_j^\top R_i\) and \(R_i^\top R_j\) are defined). However, the convergence of one inherently guarantees the convergence of the other. To simplify the stability analysis and eliminate redundancy, we consider only one relative orientation for each edge. This is achieved by assigning a virtual (arbitrary) orientation to the graph \(\mathcal{G}\) and indexing each oriented edge with an integer number. Accordingly, for any two agents \(i\) and \(j\) connected by an oriented edge \(k\), the relative attitude is defined as \(\bar{R}_k := R_j^\top R_i\), where \(\{k\} = \mathcal{M}_i^+ \cap \mathcal{M}_j^- \in \mathcal{M}\). It follows from \eqref{R_dynamics_i} that
\begin{align}
       \dot{\bar R}_k &= \bar R_k[\bar \omega_k]^{\times} \label{R_bar_dynamics_k}\\
       J_i \dot{\omega}_i &= -[\omega_i]^\times J_i \omega_i + \tau_i, \label{w_dynamics_k}
\end{align}
where $i \in \mathcal{V}$, \(\{k\} = \mathcal{M}_i^+ \cap \mathcal{M}_j^-\) and $\bar \omega_k :=  \omega_i - \bar R_k^\top  \omega_j$ for every $(i, j) \in \mathcal{E}$. Note that, for every $i \in \mathcal{V}$ and $ j \in \mathcal{N}_i$, the intersection between the sets $\mathcal{M}_i^+$ and $\mathcal{M}_j^-$ is either a subset of $\mathcal{M}$ with a single element (if agent $i$ and agent $j$ are the head and tail, respectively, of the directed edge connecting them) or an empty set otherwise. Let $\bar \omega =[\bar \omega_1^\top , \bar \omega_2^\top , \hdots, \bar \omega_M^\top ]^\top  \in \mathbb{R}^{3M}$ and $\omega =[\omega_1^\top , \omega_2^\top , \hdots, \omega_N^\top ]^\top  \in \mathbb{R}^{3N}$. One can verify that \cite{BAI20083170}
\begin{equation}\label{s_bar}
    \bar \omega = \Bar{H}(t)^\top  \omega,
\end{equation}
where the time-varying matrix $\bar H$ is defined as follows:
\begin{equation}\label{H_bar}
   \Bar{H}(t):=[\Bar{h}_{ik}]_{N\times M} \hspace{0.3cm} \text{with} \hspace{0.3cm} \Bar{h}_{ik}=\begin{cases}
      I_3 & k\in\mathcal{M}_i^+\\
      -\bar R_k & k\in\mathcal{M}_i^-\\
      0 & \text{otherwise}
    \end{cases}.   
\end{equation} 
The matrix $\bar H(t)$ is obviously influenced by the interaction graph $\mathcal{G}$ and its orientation. Note that the arbitrary orientation assigned to the graph $\mathcal{G}$ is only a dummy orientation and does not change the nature of the interaction graph $\mathcal{G}$ from being an undirected graph.

By assigning a relative attitude to each edge based on the virtual orientation introduced to the graph, the objective of this work can be reformulated as designing a feedback law $\tau_i$ for each $i \in \mathcal{V}$, such that the equilibrium $(\bar{R}_1 = I_3, \dots, \bar{R}_M = I_3, \omega_1 = 0, \dots, \omega_N = 0)$ is globally asymptotically stable.

Before introducing the design of the two proposed distributed hybrid feedback schemes, we first present the following theorem. This theorem establishes the stability properties of the dynamics described in \eqref{R_bar_dynamics_k}-\eqref{w_dynamics_k} under the continuous control torque \eqref{continuous_tau}. It also highlights the limitations of using a continuous feedback control scheme on the rotation manifold and characterizes the equilibrium set of these dynamics.

Define $\mathcal{A}_z:=\{z\in \mathcal{S}_z: \forall k\in \mathcal{M}, \forall i \in \mathcal{V}, \hspace{0.1cm} \bar R_k=I_3, \omega_i=0\}$, where $z:= \left(\bar R_1, \hdots, \bar R_M, \omega_1, \hdots, \omega_N\right)\in \mathcal{S}_z$ with $\mathcal{S}_z:=SO(3)^M\times\mathbb{R}^{3N}$. Our first main result can be stated as follows:
%The first theorem is stated as follows:

\begin{thm}\label{theorem_continuous}
    Let a network of $N$ agents rotate according to the rigid-body rotational dynamics given in \eqref{R_dynamics_i}. Assume that the measurement \eqref{measurement_model_R} is available and the interaction graph $\mathcal G$ is an undirected tree. Consider the dynamics \eqref{R_bar_dynamics_k}-\eqref{w_dynamics_k} with control torque \eqref{continuous_tau}. Then, the following statements hold:
    \begin{enumerate}[i)]
    \item All solutions of \eqref{R_bar_dynamics_k}-\eqref{w_dynamics_k} with \eqref{continuous_tau} converge to the set of equilibria $\Upsilon_z : =\mathcal{A}_z \cup \{z \in \mathcal{S}_z : \bar{R}_m = I_3, \, \bar R_n=\mathcal{R}(\pi, u_{\beta_n}), \, \omega_i=0, \, \forall m \in \mathcal{M}^I, \, \forall n \in \mathcal{M}^\pi, \, \forall i \in \mathcal{V} \}$, where $\mathcal{M}^I \cup \mathcal{M}^\pi=\mathcal{M}$, $|\mathcal{M}^\pi|>0$, $|\mathcal{M}^I|\geq0$, $\beta_n \in \{1, 2, 3\}$, and $u_{\beta_n} \in \mathcal{E}(A)$ with $\mathcal{E}(A)\subset \mathbb{S}^2$ denotes the set of unit eigenvectors of matrix $A$.\label{set_of_equilibrium}
    \item The set of all undesired equilibrium points $\Upsilon_z \setminus \mathcal{A}_z$ is unstable.\label{unstability_of_equilibrium}
    \item The desired equilibrium set $\mathcal{A}_z$ is \textit{almost globally asymptotically stable}.\footnote{ The set $\mathcal{A}_z$ is said to be almost globally asymptotically stable if it is asymptotically stable, and attaractive from all initial conditions except a set of zero Lebesgue measure.} \label{stability_of_equilibrium}
    \end{enumerate}
\end{thm}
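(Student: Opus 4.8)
The plan is to certify the result with a single energy-type Lyapunov function, locate the $\omega$-limit set by an invariance argument, and then upgrade ``convergence to $\Upsilon_z$'' to ``almost global asymptotic stability of $\mathcal A_z$'' using instability of the remaining equilibria together with a measure-zero stable-manifold argument. I would take as Lyapunov candidate
\[
V(z)=\frac{k_R}{2}\sum_{k\in\mathcal M}\langle\langle A,\,I_3-\bar R_k\rangle\rangle+\frac12\sum_{i\in\mathcal V}\omega_i^\top J_i\omega_i,
\]
which is well defined and nonnegative on $\mathcal S_z$ since $A=A^\top\succ0$ gives $\langle\langle A,I_3-\bar R_k\rangle\rangle\ge0$ with equality iff $\bar R_k=I_3$, and each $J_i\succ0$. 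Using $\dot{\bar R}_k=\bar R_k[\bar\omega_k]^\times$, the identity $\text{tr}(M[\omega]^\times)=-2\psi(M)^\top\omega$, and $\bar\omega=\bar H^\top\omega$, while differentiating the kinetic term along \eqref{w_dynamics_k} and using $\omega_i^\top[\omega_i]^\times J_i\omega_i=0$, the potential-coupling terms cancel by construction, leaving
\[
\dot V=-k_\omega\|\omega\|^2-\bar k_\omega\!\!\sum_{(i,j)\in\mathcal E}\!\!\|\omega_i-\omega_j\|^2\le0 .
\]
Because $SO(3)^M$ is compact and the kinetic term is radially unbounded in $\omega$, the sublevel sets of $V$ are compact; hence every maximal solution is bounded and complete.

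On $\{\dot V=0\}$ the term $-k_\omega\|\omega\|^2$ forces $\omega\equiv0$, and invariance then gives $\dot\omega\equiv0$, so by \eqref{w_dynamics_k} the torque reduces to $\tau_i=-k_R\sum_{j\in\mathcal N_i}\psi(AR_j^\top R_i)$ and must vanish for all $i$. Writing $v:=[\psi(A\bar R_1)^\top,\dots,\psi(A\bar R_M)^\top]^\top$, the same computation that produced $\dot V$ shows that the stacked torque equals $-k_R\bar H v$, so the condition becomes $\bar H v=0$. The decisive structural step is that, for an undirected tree, $\bar H v=0$ implies $v=0$: a degree-one node forces its single incident edge variable $\psi(A\bar R_k)$ to vanish (directly if the node is the edge head, and after left multiplication by the invertible $\bar R_k^\top$ if it is the tail), and deleting that leaf reduces the problem to a smaller tree, so induction yields $v=0$. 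Thus $\psi(A\bar R_k)=0$ for each $k$ independently, i.e. $A\bar R_k$ is symmetric; since $A$ has three distinct eigenvalues, the classical critical-point characterization gives $\bar R_k\in\{I_3\}\cup\{\mathcal R(\pi,u_\beta):u_\beta\in\mathcal E(A)\}$. Together with $\omega=0$ and constant $\bar R_k$, this is exactly $\Upsilon_z$, proving statement (i); note also that every point of $\Upsilon_z$ is an isolated equilibrium of the closed loop in $\mathcal S_z$.

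For statement (ii), I would linearize the closed loop about an undesired equilibrium, i.e. one with at least one $\bar R_n=\mathcal R(\pi,u_{\beta_n})$. Because the critical-point condition decouples edge by edge, the relevant potential Hessian coincides with that of $R\mapsto\langle\langle A,I_3-R\rangle\rangle$ at a $\pi$-rotation about a principal axis of $A$, which---thanks to the distinct eigenvalues---is indefinite and in particular admits a strict descent direction. Combining this descent direction with the strict dissipation in $\dot V$, one either exhibits an eigenvalue of the linearization with positive real part or, equivalently, constructs a Chetaev function from $V$ restricted to that direction, establishing instability of every point of $\Upsilon_z\setminus\mathcal A_z$.

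Finally, for statement (iii): $\mathcal A_z$ is the unique zero of $V$, so $V$ certifies Lyapunov stability of $\mathcal A_z$, and by (i) every solution converges to $\Upsilon_z$. By (ii) each point of the finite set $\Upsilon_z\setminus\mathcal A_z$ is unstable, hence possesses a stable manifold of dimension strictly smaller than $\dim\mathcal S_z$; the finite union of these stable manifolds has zero Lebesgue measure, and any solution not converging to $\mathcal A_z$ must lie in it. This yields attractivity of $\mathcal A_z$ from all initial conditions outside a measure-zero set, i.e. almost global asymptotic stability. I expect statement (ii) to be the main obstacle: rigorously certifying instability on the product manifold $SO(3)^M\times\mathbb R^{3N}$ requires controlling the gyroscopic and velocity-consensus couplings and showing they do not neutralize the negative Hessian direction; the tree-induced edgewise decoupling of the critical points is precisely what makes both this Hessian computation and the subsequent measure-zero argument tractable.
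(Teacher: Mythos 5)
Your items (i) and (ii) essentially reproduce the paper's route: the same Lyapunov function $\mathcal V_z(z)=k_R\sum_{k}\mathrm{tr}\left(A(I_3-\bar R_k)\right)+\omega^\top \mathbf J\,\omega$, the same derivative $\dot{\mathcal V}_z=-2k_\omega\|\omega\|^2-2\bar k_\omega\|(H^\top\otimes I_3)\omega\|^2$, the same LaSalle reduction to $\omega=0$ and $\bar H\Psi=0$, and the same edgewise classification $\psi(A\bar R_k)=0\Rightarrow \bar R_k\in\{I_3\}\cup\{\mathcal R(\pi,u_\beta):u_\beta\in\mathcal E(A)\}$. Your leaf-stripping induction is a correct direct proof of the full-column-rank property of $\bar H$ over a tree, which the paper instead imports from a cited lemma; that is a small self-contained improvement. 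Your saddle-point observation for item (ii) matches the paper's Lemma on critical points (the paper computes the Hessian explicitly as the block-diagonal matrix with blocks $A_k^*=\mathrm{tr}(A\bar R_k^*)I_3-A\bar R_k^*$ and lists the eigenvalues) followed by a Chetaev argument.

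The genuine gap is in item (iii). You infer from ``each undesired equilibrium is unstable'' that ``its stable manifold has dimension strictly smaller than $\dim\mathcal S_z$, hence the union has measure zero.'' That inference is invalid without more: Chetaev-type nonlinear instability carries no spectral information, and if the linearization has eigenvalues on the imaginary axis, the set of initial conditions attracted to the equilibrium need not be contained in any lower-dimensional invariant manifold — one needs hyperbolicity, or at least one eigenvalue with strictly positive real part combined with a center-stable manifold argument. (Your aside in item (ii) that exhibiting a positive-real-part eigenvalue is ``equivalent'' to constructing a Chetaev function is likewise wrong in the direction you need.) The paper devotes the bulk of its proof of item (iii) to closing precisely this hole: it linearizes to obtain the Jacobian $\mathbf M$, proves $\det\mathbf M\neq0$ (using the full column rank of the constant incidence-type matrix $\bar H_c$ and invertibility of $\mathbf A^*$, which again relies on the distinct eigenvalues of $A$), and then rules out purely imaginary eigenvalues by contradiction — substituting $\mathbf M z=i\lambda z$, eliminating $z_1$, multiplying by $z_2^\top\mathbf J$, and reading off from the real part that $k_\omega\|z_2\|^2+\bar k_\omega\|(H^\top\otimes I_3)z_2\|^2=0$, forcing $z=0$; note this is exactly where $k_\omega>0$ enters, so your worry about ``gyroscopic and velocity-consensus couplings'' is resolved there, not in the Hessian computation. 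Only after establishing the absence of imaginary-axis eigenvalues does instability yield an eigenvalue with positive real part, hyperbolicity, and the stable manifold theorem's measure-zero conclusion. Your proposal must add this spectral analysis (or an equivalent center-stable-manifold argument) for item (iii) to stand.
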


\begin{proof}
    See Appendix \ref{app_1}
\end{proof}

\begin{rmk}
    The matrix $A$ is required to be symmetric positive definite with three distinct eigenvalues to establish the result of Theorem \ref{theorem_continuous}. This condition is not strictly necessary for implementing the control law \eqref{continuous_tau}; in fact, previous works (\eg, \cite{SARLETTE2009572}) have used $A=I_3$ (which has repeated eigenvalues). However, the assumption of distinct eigenvalues is crucial in our stability analysis, as it ensures that the undesired critical points of the associated potential function are isolated and occur only at rotations of angle $\pi$ about the eigenvectors of $A$. This property allows us to rigorously establish almost global asymptotic stability and characterize the instability of undesired equilibria in Theorem \ref{theorem_continuous}. Note that almost global asymptotic stability is the strongest stability result that can be achieved with smooth vector fields on the rotation manifold, as discussed in \cite{Bhat_SCL2000}.
\end{rmk}
%\begin{rmk}
%Note that in Theorem \ref{theorem_continuous} we showed that the distributed feedback control law \eqref{continuous_tau} drives the agents' orientations from almost any initial condition to a common constant orientation. However, when $k_\omega = 0$ and $\bar{k}_\omega > 0$, the control law \eqref{continuous_tau} drives the agents' orientations to a common time-varying orientation as shown later in the Simulation Results section (Figure \ref{R_contin_k_w_0}). 
%\end{rmk} 

\section{Distributed Hybrid Feedback for Global Attitude Synchronization on $SO(3)$}\label{s5}
It follows from the discussion in the previous section that the trajectories of the dynamics \eqref{R_bar_dynamics_k}–\eqref{w_dynamics_k}, under the continuous control torque \eqref{continuous_tau}, may converge to a level set that includes the undesired equilibrium set $\Upsilon_z \setminus \mathcal{A}_z$. This behavior stems from the topological constraints of $SO(3)$, which preclude the existence of any continuous feedback control scheme that can guarantee global asymptotic stability of the desired equilibrium set $\mathcal{A}_z$ \cite{Koditschek,Bhat_SCL2000}. To overcome this limitation, we will subsequently propose two hybrid feedback control schemes, one with angular velocity measurement and one without, that guarantee global asymptotic stability. Since the two proposed hybrid distributed attitude synchronization schemes are designed in terms of a generic potential function, we first introduce this function in the following subsection along with some of its interesting properties, which will be central to the hybrid design.

\subsection{Generic Potential Function on $SO(3)^M\times \mathbb{R}^M$}\label{s4}
As shown in the proof of Theorem \ref{theorem_continuous}, the continuous distributed feedback scheme \eqref{continuous_tau} was derived using the gradient of a potential function composed of a smooth potential function defined on the rotation manifold. This potential function on $SO(3)$ is the weighted trace function, $\text{tr}(A(I-R))$, where $A = A^\top > 0$ with distinct eigenvalues and $R \in SO(3)$. This function is well-established in the literature and has been widely used in feedback control and estimation design on $\mathrm{SO}(3)$; see, for example, \cite{mahony_tac2008,Mayhew_ACC2011}. However, as discussed earlier, designing gradient-based control laws using smooth potential functions on $SO(3)$ often leads to non-global results. To address this issue, numerous authors have proposed hybrid gradient-based solutions that guarantee the existence of a unique global attractor \cite{Mayhew_ACC2011, Mayhew_ACC2011_2, Mayhew_TAC2013, 7462234}. The core idea behind these solutions is to employ a switching mechanism that alternates between a family of smooth potential functions, generating a non-smooth gradient with a single global attractor. However, the construction of this family of smooth potential functions depends on the compactness of the manifold, rendering these approaches inapplicable to non-compact manifolds. Recently, the authors of \cite{miaomiao_TAC2022} introduced a novel hybrid scheme that overcomes this limitation. Their approach relies on a single potential function parameterized by a scalar variable governed by hybrid dynamics. By appropriately switching the value of this variable, the potential function is adjusted such that the resulting non-smooth gradient possesses only one global attractor. Unlike the methods in \cite{Mayhew_ACC2011, Mayhew_ACC2011_2, Mayhew_TAC2013, 7462234}, the hybrid scheme in \cite{miaomiao_TAC2022} is simpler to implement and does not require any compactness assumptions on the manifold, making it applicable even to non-compact manifolds such as $SE(3)$. In the present work, inspired by the hybrid approach on $SO(3)$ in \cite{miaomiao_TAC2022}, we design a new potential function on $SO(3)^M \times \mathbb{R}^M$ suitable for application of hybrid techniques for multi-agent systems evolving on \( SO(3) \) leading to global asymptotic stability guarantees.
%by parametrize the potential function $\mathcal{V}_z(z)$ by $M$ auxiliary variable, namely $\xi_k$ 
%{\color{red} don't use auxiliary variable for $\xi_k$....otherwise we will get confusion with the auxiliary variables for the velocity-free scheme....}, 
%each associated with an edge $k \in \mathcal{M}$. 
%By designing a suitable multi-agent switching mechanism that governs the evolution of these variables, we propose a hybrid gradient-based control scheme with a unique global attractor. This ensures global asymptotic stability of the desired configuration, corresponding to attitude synchronization at a common, constant orientation.

Let $\mathcal{A}:=\{x\in \mathcal{S}: \forall k\in \mathcal{M}, \hspace{0.1cm} R_k=I_3, \xi_k=0\}$, where $x:= \left(R_1, \hdots, R_M, \xi_1, \hdots, \xi_M\right)\in \mathcal{S}$ with $\mathcal{S}:=SO(3)^M\times\mathbb{R}^M$. Consider the following potential function, on $\mathcal{S}$, with respect to $\mathcal{A}$:
\begin{align}\label{potential_fct}
    \bar U(x) &:= \sum_{k=1}^{M}U( R_k,\xi_k),
\end{align} 
where $U: SO(3)\times \mathbb{R} \rightarrow \mathbb{R}_{\geq0}$ is a potential function with respect to $(I_3, 0)$. The following set represents the set of all critical points of $\bar U(x)$:
\begin{align}
    \bar \Upsilon := \{x \in \mathcal{S}: \forall k\in \mathcal{M}, \nabla_{R_k}\bar U(x)=0,~\nabla_{\xi_k}\bar U(x)=0\},\nonumber
\end{align}
where $\nabla_{R_k}\bar U(x)$ and $\nabla_{\xi_k}\bar U(x)$ are the gradients of $\bar U(x)$ with respect to $R_k$ and $\xi_k$, respectively.  The potential function $\bar U$ is chosen such that $\mathcal{A} \subset \bar \Upsilon$.

Next, inspired by \cite{miaomiao_TAC2022}, we will introduce an instrumental condition to our proposed hybrid feedback control design.
\begin{cnd}\label{hybrid_ass}
    Consider the potential function \eqref{potential_fct}. There exist a scalar $\delta_{\bar R}>0$ and a nonempty finite set $\Xi \subset \mathbb{R}$ such that for every $x= \left(R_1, \hdots, R_M, \xi_1, \hdots, \xi_M\right)\in \bar \Upsilon \setminus \mathcal{A}$
     \begin{equation}\label{cond1}
        U(R_k, \xi_k) - \underset{\bar{\xi}_k \in \Xi}{\text{min}}~U(R_k, \bar{\xi}_k) > \delta_{\bar R},
    \end{equation}
   for every $k \in \mathcal{M}$ such that $R_k \neq I_3$.
\end{cnd}
\begin{rmk}

\indent The set $\bar{\Upsilon} \setminus \mathcal{A}$ is the set of all undesired critical points of $\bar{U}(x)$. Inequality \eqref{cond1} indicates that whenever the state belongs to  $\bar \Upsilon \setminus \mathcal{A}$, there will always exist $\bar \xi_k \in \Xi$, for every $k \in \mathcal{M}$, where $R_k \neq I_3$, such that $U(R_k, \bar{\xi}_k)$ is lower than $U(R_k, \xi_k)$ by a constant gap $\delta_{\bar R}$. This fact will be used to design the switching mechanism associated with the hybrid parameters $\xi_k$, for every $k \in \mathcal{M}$, in our proposed hybrid schemes.
\end{rmk}
To satisfy Condition \ref{hybrid_ass}, one can, for instance, consider the potential function $\bar U(x)$ in \eqref{potential_fct}, where $U(R_k, \xi_k)$ is defined as follows:
\begin{equation}\label{expl_pf}
    U(R_k,\xi_k):=\text{tr}\Big(A\left(I_3-R_k\mathcal{R}(\xi_k,u)\right)\Big)+\frac{\gamma}{2}\xi_k^2,
\end{equation}
where $A \in \mathbb{R}^{3 \times 3}$ is a symmetric and positive definite matrix with three distinct eigenvalues, $u\in \mathbb{S}^2$ is a constant unit vector and $\gamma$ is a positive scalar. 
%\textcolor{blue}{Note that $U(R_k,0)=\text{tr}\left(A(I_3-R_k)\right)$}. 
The potential function $U$ was introduced in \cite{miaomiao_TAC2022} for the global attitude tracking via hybrid feedback. Note that the potential function \eqref{expl_pf} is based on the weighted trace function, where $U(R_k,0)=\text{tr}\left(A(I_3-R_k)\right)$. The hybrid switching parameter $\xi_k$ appears in the first term as the angle of rotation about a fixed unit vector $u$ applied to $\bar{R}_k$, and in the second term as a quadratic term. As we will demonstrate in the sequel, these parameters $\xi_k$, for all $k \in \mathcal{M}$, are introduced to shape the potential function $\bar U$ in a way that ensures it admits a unique global critical point corresponding to the desired synchronization configuration. Define the set of parameters $\mathcal{P}:=\{\Xi, A, u, \gamma, \delta_{\bar R} \}$. The next proposition gives the possible choices of parameters in the set $\mathcal{P}$ in which Condition \ref{hybrid_ass} is satisfied.
\begin{pro}\label{pro_set}
	Consider the potential function \eqref{potential_fct} with $U( R_k,\xi_k)$ given in \eqref{expl_pf}. Then, Condition 1 holds under the following set of parameters $\mathcal{P}$:
\begin{equation} {\mathcal{P}}: \begin{cases} {\Xi = \left\{ {\left| {{\phi _i}} \right| \in (0,\pi ],i = 1, \cdots ,l} \right\}} \\ {A:0 < \lambda_1 \leq \lambda_2 < \lambda _3} \\ {u = {\alpha _1}q_1 + {\alpha _2}q_2 + {\alpha _3}q_2} \\ {\gamma < \frac{{4{\Delta ^{\ast}}}}{{{\pi ^2}}}} \\ {0<\delta_{\bar R} < \left( {\frac{{4{\Delta ^{\ast}}}}{{{\pi ^2}}} - \gamma } \right)\frac{{\phi_L^2}}{2},{\phi_L}: = \mathop {\max }\limits_{\phi \in \Xi } \left| \phi \right|} \end{cases}\end{equation}
where $\alpha_1^2+\alpha_2^2+\alpha_3^2=1$ and $\Delta^*>0$ are given as follows:
\begin{itemize}
    \item If $\lambda_1=\lambda_2$, $\alpha_3^2=1-\frac{\lambda_2}{\lambda_3}$ and $\Delta^*=\lambda_1(1-\frac{\lambda_2}{\lambda_3})$.
    \item If $\lambda_2\geq \frac{\lambda_1 \lambda_3}{\lambda_3-\lambda_1}$, $\alpha_i^2=\frac{\lambda^A_i}{\lambda_2+\lambda_3}$,$i\in \{2,3\}$ and $\Delta^*=\lambda_1$.
    \item If $\lambda_1<\lambda_2<\frac{\lambda_1\lambda_3}{\lambda_3-\lambda_1}$, $\alpha_i^2=1-\frac{4\prod_{l\neq i}\lambda_l}{\sum_{l=1}^{3}\sum_{k\neq l}^{3}\lambda_l\lambda_k}, \forall i\in \{1, 2, 3\}$, and $\Delta^*=\frac{4\prod_{l}\lambda_l}{\sum_{l=1}^{3}\sum_{k\neq l}^{3}\lambda_l\lambda_k}$.
\end{itemize}
where $(\lambda_i, q_i)$ denotes the $i$-th pair of eigenvalue-eigenvector of matrix $A$.
\end{pro}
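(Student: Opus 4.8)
The plan is to reduce Condition~\ref{hybrid_ass} to a finite-dimensional optimization over the axis $u\in\mathbb{S}^2$, carried out in four steps.

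\emph{Step 1 (critical set).} Since $\bar U(x)=\sum_k U(R_k,\xi_k)$ separates across $k$, I would first show $x\in\bar\Upsilon$ iff each pair $(R_k,\xi_k)$ is a critical point of $U$. Writing $Q_k:=R_k\mathcal{R}(\xi_k,u)$ and using $\tfrac{\partial}{\partial\xi}\mathcal{R}(\xi,u)=\mathcal{R}(\xi,u)[u]^\times$ together with the identity $\text{tr}(M[u]^\times)=-2\psi(M)^\top u$, the $\xi$-gradient is $\nabla_{\xi_k}\bar U=2\psi(AQ_k)^\top u+\gamma\xi_k$. Because right translation by $\mathcal{R}(\xi_k,u)$ is an isometry of $SO(3)$, $\nabla_{R_k}\bar U=0$ is equivalent to $Q_k$ being a critical point of the weighted trace, i.e. $\psi(AQ_k)=0$. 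Substituting this into the $\xi$-gradient forces $\xi_k=0$, hence $Q_k=R_k$ and $R_k\in\{I_3\}\cup\{\mathcal{R}(\pi,q):q\in\mathcal{E}(A)\}$. Thus every $x\in\bar\Upsilon\setminus\mathcal{A}$ has, at each $k$ with $R_k\neq I_3$, the form $R_k=\mathcal{R}(\pi,q_j)=2q_jq_j^\top-I_3$ and $\xi_k=0$, with $AR_k$ symmetric.

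\emph{Step 2 (potential gap).} For such an $R_k$ I would expand $\mathcal{R}(\bar\xi,u)=I_3+\sin\bar\xi\,[u]^\times+(1-\cos\bar\xi)([u]^\times)^2$; the term linear in $\sin\bar\xi$ drops because $\psi(AR_k)=0$, leaving
\[
U(R_k,0)-U(R_k,\bar\xi)=(1-\cos\bar\xi)\,\text{tr}\!\left(AR_k([u]^\times)^2\right)-\tfrac{\gamma}{2}\bar\xi^2 .
\]
Using $([u]^\times)^2=uu^\top-I_3$, $Aq_j=\lambda_j q_j$, and $u=\sum_i\alpha_i q_i$, a direct computation yields $\text{tr}(AR_k([u]^\times)^2)=c_j$, where, with $\{j,p,q\}=\{1,2,3\}$,
\[
c_j:=\alpha_j^2(\lambda_p+\lambda_q)+\alpha_p^2(\lambda_q-\lambda_j)+\alpha_q^2(\lambda_p-\lambda_j).
\]
Invoking $1-\cos\bar\xi\ge\tfrac{2}{\pi^2}\bar\xi^2$ for $|\bar\xi|\le\pi$, I then obtain $U(R_k,0)-U(R_k,\bar\xi)\ge\tfrac{\bar\xi^2}{2}\big(\tfrac{4c_j}{\pi^2}-\gamma\big)$. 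Setting $\Delta^\ast:=\min_j c_j$ and evaluating at the element of $\Xi$ of largest magnitude $\phi_L\le\pi$ gives $U(R_k,0)-\min_{\bar\xi\in\Xi}U(R_k,\bar\xi)\ge\tfrac{\phi_L^2}{2}\big(\tfrac{4\Delta^\ast}{\pi^2}-\gamma\big)$, which exceeds $\delta_{\bar R}$ precisely under the stated choices $\gamma<4\Delta^\ast/\pi^2$ and $0<\delta_{\bar R}<(\tfrac{4\Delta^\ast}{\pi^2}-\gamma)\tfrac{\phi_L^2}{2}$. This verifies Condition~\ref{hybrid_ass}, provided $\Delta^\ast>0$; note that the separability of $\bar U$ and the shared set $\Xi$ make this bound uniform over all edges $k$.

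\emph{Step 3 (the optimization --- main obstacle).} The remaining and most delicate task is to choose $u$ so that $\Delta^\ast=\min_j c_j$ is positive and maximal, i.e. to solve $\max_{\sum_i\alpha_i^2=1}\min_j c_j$. I expect this to be the crux: $c_1$ has only nonnegative coefficients (since $\lambda_1\le\lambda_2<\lambda_3$) and is never binding, whereas $c_2,c_3$ contain negative contributions from the larger eigenvalues, so the mass of $u$ must be balanced across the eigendirections. The three cases reflect the location of the optimizer. If $\lambda_1=\lambda_2$ the undesired critical points form a circle in $\mathrm{span}(q_1,q_2)$, and I would show that its worst direction and the $q_3$ direction attain the common value $\Delta^\ast=\lambda_1(1-\lambda_2/\lambda_3)$. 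If $\lambda_2\ge\lambda_1\lambda_3/(\lambda_3-\lambda_1)$ the optimizer lies on the boundary $\alpha_1=0$, equalizing $c_2=c_3=\lambda_1$ with $c_1\ge\lambda_1$, so $\Delta^\ast=\lambda_1$. In the intermediate regime the optimizer is interior with $c_1=c_2=c_3=\Delta^\ast=4\prod_l\lambda_l/\sum_l\sum_{k\neq l}\lambda_l\lambda_k$. In each case I would verify that the prescribed $\alpha_i^2$ are feasible (lie in $[0,1]$) under the corresponding eigenvalue condition and that the stated $\Delta^\ast$ is indeed the minimum; positivity of $\Delta^\ast$ then closes the proof.
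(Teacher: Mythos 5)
Your proposal is correct and takes essentially the same route as the paper, whose proof simply defers to \cite[Proposition 2]{miaomiao_TAC2022} with the axis design from \cite[Proposition 2]{Berkane_TAC2016}: your Steps 1--2 reproduce the critical-point characterization ($\xi_k=0$, $R_k\in\{I_3\}\cup\{\mathcal{R}(\pi,q):q\in\mathcal{E}(A)\}$) and the gap bound via $1-\cos\bar\xi\geq 2\bar\xi^2/\pi^2$, and your $c_j$ is exactly $\Delta(q_j,u)$ in the cited max--min optimization whose three-case solution matches the stated $\alpha_i^2$ and $\Delta^\ast$. The only caveat is cosmetic: when $\lambda_1=\lambda_2$ the undesired rotation axes form a continuum, so $\Delta^\ast$ in Step 2 should be defined as the minimum of $\Delta(v,u)$ over all unit eigenvectors $v$ rather than over $j\in\{1,2,3\}$ --- a point your Step 3 already accounts for.
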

\begin{proof}
Following the same arguments given in the proof of \cite[Proposition 2]{miaomiao_TAC2022}, one can prove Proposition \ref{pro_set}.
\end{proof}
\begin{rmk}
    As shown in \cite[Proposition 2]{Berkane_TAC2016}, the design of $u$ given in Proposition \ref{pro_set} is based on the following optimization $\max_{u \in \mathbb{S}^2} \left( \min_{v \in \mathcal{E}(A)} \Delta(v,u) \right)$, where $\Delta(v,u) = u^\top \left( \text{tr}(A) I_3 - A - 2 v^\top A v (I_3 - v v^\top) \right) u$.
\end{rmk}

\subsection{Distributed Hybrid Feedback for Global Attitude Synchronization on $SO(3)$ with Velocity Measurements}\label{s5}
In this section, we present the first proposed distributed hybrid feedback control scheme, which ensures global asymptotic synchronization of the agents' attitudes using relative attitude and angular velocity measurements. This scheme is developed based on the gradient of the generic potential function introduced in the previous section. 
For every $i \in \mathcal{V}$, we propose the following distributed hybrid feedback control scheme

{\small
\begin{align}
             &\underbrace{
                  \begin{aligned}\label{R_obs_f}
                    \tau_i =&k_R\bigg(\sum_{l\in \mathcal{M}_i^-} \bar R_l \psi \left(\bar R_l^\top  \nabla_{\bar R_l}\bar U\right)-\hspace{-0.13cm}\sum_{n\in \mathcal{M}_i^+} \psi \left(\bar R_n^\top  \nabla_{\bar R_n}\bar U\right)\bigg)\\
                    &-k_\omega \omega_i -\bar k_\omega \sum_{j\in\mathcal{N}_i}(\omega_i-\omega_j)\\
                    \dot{\xi}_k =&-k_\xi \nabla_{\xi_k}\bar U
                 \end{aligned}
                }_{x \in \mathcal{F}_i}
                \\
             &\underbrace{
                \begin{aligned}\label{R_obs_j}
                    \xi_k^+&\in 
                    \begin{cases}
                    \xi_k \hspace{1.3cm}\text{if} \hspace{0.5cm} U(\bar{R}_k, \xi_k)-U(\bar{R}_k, \xi^*_k) \leq \delta_{\bar R}\\
                    \xi^*_k \hspace{1.3cm}\text{if} \hspace{0.5cm} U(\bar{R}_k, \xi_k)-U(\bar{R}_k, \xi^*_k) \geq \delta_{\bar R}
                    \end{cases}
                \end{aligned}
                }_{x \in \mathcal{J}_i}
\end{align}}where $k_\xi, k_R, k_\omega >0$, $\bar k_\omega \geq 0$, $\xi^*_k := \text{arg} \underset{\bar{\xi}_k \in \Xi}{\text{min}} U(\bar{R}_k, \bar{\xi}_k)$ and $k \in \mathcal{M}_i^+$. The flow set $\mathcal{F}_i$ and the jump set $\mathcal{J}_i$, for agent $i$, are defined as follows:
{\small
\begin{align}    
    \mathcal{F}_i&:=\{x\in \mathcal{S} : \forall k\in \mathcal{M}_i^+,\hspace{0cm} U(\bar{R}_k,\xi_k)-\underset{\bar{\xi}_k\in \Xi}{\text{min}} U(\bar{R}_k,\bar{\xi}_k)\leq \delta_{\bar R}\},\nonumber\\
    \mathcal{J}_i&:=\{x \in \mathcal{S}: \exists k\in \mathcal{M}_i^+, 
    \hspace{0cm} U(\bar{R}_k,\xi_k)-\underset{\bar{\xi}_k\in \Xi}{\text{min}} U(\bar{R}_k,\bar{\xi}_k)\geq \delta_{\bar R}\}.\nonumber
\end{align}}It is clear from the definitions of the flow set \( \mathcal{F}_i \) and the jump set \( \mathcal{J}_i \), for each \( i \in \mathcal{V} \), that the constraints characterizing these sets are distributed, as they depend only on the edge states (\ie, $\bar R_k$ and $\xi_k$) where the agent $i$ is the head of the oriented edge (\ie, $k \in \mathcal{M}_i^+$). Similar to the continuous feedback scheme \eqref{continuous_tau}, the control torque, given in  \eqref{R_obs_f}, incorporates gradient-based terms (the first two) derived from the parametrized potential function \eqref{potential_fct}, which drive the system toward attitude synchronization. The remaining two terms facilitate coordinated convergence to a common constant orientation. However, unlike \eqref{continuous_tau}, the  control torque in \eqref{R_obs_f} employs hybrid variables with a switching mechanism \eqref{R_obs_j} that prevents the gradient-based terms from vanishing at undesired configurations. This ensures global attitude synchronization to a common orientation.

%{\color{red} change the term "auxiliary" to "hybrid" for instance, to avoid confusion with the auxiliary variables used in velocity-free scheme}
\begin{rmk}
The dynamics of the hybrid variable $\xi_k$, for each $ k \in \mathcal{M}$, are assumed to be executed by agent $i$, while agent $j$ receives information about $\xi_k$ from agent $i$ via communication, for every $(i,j) \in \mathcal{E}$ such that ${k} = \mathcal{M}_i^+ \cap \mathcal{M}_j^-$. The virtual (arbitrary) orientation assigned to the graph $\mathcal{G}$ provides a uniform way to implement the dynamics of the auxiliary states $\xi_k$ on the agents for each $k \in \mathcal{M}$. However, for distributed implementation, relying on arbitrary head–tail assignments may hinder scalability.
To ensure consistent and scalable deployment, we propose the following local rule: whenever a new agent joins the network, it is assigned the role of the \emph{head} of each newly formed edge and is therefore responsible for implementing the corresponding \(\xi_k\) dynamics. This assignment requires only local agreement between neighboring agents and can be applied consistently without global coordination.
This approach is particularly effective in tree-structured interaction graphs, where each new agent connects via exactly one edge. As a result, each agent is responsible for at most one \(\xi_k\) variable, and computational responsibilities are naturally distributed across the network. Moreover, this rule prevents the overloading of central agents in topologies such as stars and enables the controller to scale efficiently as the network grows. 
\end{rmk}

Define the new state $\bar x := \left(x, \omega_1, \hdots, \omega_N\right)\in \bar{\mathcal{S}}$, where $\Bar{\mathcal{S}}:= SO(3)^M \times \mathbb{R}^M\times \mathbb{R}^{3N}$. In view of \eqref{R_bar_dynamics_k}-\eqref{w_dynamics_k} and \eqref{R_obs_f}-\eqref{R_obs_j}, one can derive the following  multi-agent hybrid dynamics:
\begin{equation}\label{hybrid_sys}
\bar{\mathcal{H}}:\begin{cases} {\dot{\bar x} = \bar F(\bar x),}&{\bar x \in \bar{\mathcal{F}}:=\{\bar x \in \bar{\mathcal{S}}: x \in \mathcal{F}\}} \\ {{\bar x^ + } \in \bar G(\bar x),}&{\bar x \in \bar{\mathcal{J}}:=\{\bar x \in \bar{\mathcal{S}}: x \in \mathcal{J}\}} \end{cases}
\end{equation}
where
\begin{eqnarray}\label{network_f_j_set}
\mathcal{F}:=\bigcap_{i=1}^{N}\mathcal{F}_{i},\qquad \mathcal{J}:=\bigcup_{i=1}^{N}\mathcal{J}_{i},
\end{eqnarray}
and
{\small
\begin{equation}
    \bar F(\bar x) := \hspace{-0.1 cm}\left[\hspace{-0.2 cm} {\begin{array}{c} {\bar R_1[\bar \omega_1]^{\times}} \\ \vdots\\ {\bar R_M[\bar \omega_M]^{\times}}\\{-k_\xi \nabla_{\xi_1}\bar U} \\ \vdots\\ {-k_\xi \nabla_{\xi_M}\bar U}\\{J_1^{-1} \left(-[\omega_1]^\times J_1 \omega_1+\tau_1\right)} \\ \vdots\\ {J_N^{-1} \left(-[\omega_N]^\times J_N \omega_N+\tau_N\right)} \end{array}}\hspace{-0.2 cm}\right],~\bar G(\bar x) := \hspace{-0.1 cm}\left[\hspace{-0.2 cm} {\begin{array}{c} {\bar R_1} \\ \vdots\\ {\bar R_M}\\{\{\xi_1, \xi_1^*\}} \\ \vdots\\ {\{\xi_M, \xi_M^*\}}\\{\omega_1} \\ \vdots\\ {\omega_N} \end{array}}\hspace{-0.2 cm}\right] \nonumber,
\end{equation}}with $\tau_i$ is given in \eqref{R_obs_f} for each $i \in \mathcal{V}$. From equations \eqref{hybrid_sys}-\eqref{network_f_j_set}, one can deduce that $\bar{\mathcal{F}}\cup \bar{\mathcal{J}} = \bar{\mathcal{S}}$ and the hybrid closed-loop system \eqref{hybrid_sys} is autonomous. The next lemma shows that the hybrid closed-loop system \eqref{hybrid_sys} is well-posed\footnote{See \cite[Definition 6.2]{goebel2012hybrid} for the definition of well-posedness.} by verifying the hybrid basic conditions given in \cite[Assumption 6.5]{goebel2012hybrid}.

\begin{lem}\label{lem_hbc}
    The hybrid closed-loop system (\ref{hybrid_sys}) satisfies the following hybrid basic conditions:
    \begin{enumerate}[i)]
        \item $\bar{\mathcal{F}}$ and $\bar{\mathcal{J}}$ are closed subsets;\label{hbc_1}
        \item $\bar F$ is outer semicontinuous and locally bounded relative to $\bar{\mathcal{F}}$,
           $\bar{\mathcal{F}} \in \text{dom}~ \bar F$ , and $\bar F(\bar x)$ is convex for every $\bar x \in \bar{\mathcal{F}}$;\label{hbc_2}
        \item $\bar G$ is outer semicontinuous and locally bounded relative to $\bar{\mathcal{J}}$
           and $\bar{\mathcal{J}} \in \text{dom}~\bar G$.\label{hbc_3}  
    \end{enumerate}
\end{lem}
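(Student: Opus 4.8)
The plan is to verify each of the three hybrid basic conditions \cite[Assumption 6.5]{goebel2012hybrid} directly from the explicit data $(\bar{\mathcal{F}},\bar{\mathcal{J}},\bar F,\bar G)$, leaning on the continuity of $U$ and the finiteness of $\Xi$. First, for the closedness in \ref{hbc_1}, I would introduce the scalar map $g_k(x):=U(\bar R_k,\xi_k)-\min_{\bar\xi_k\in\Xi}U(\bar R_k,\bar\xi_k)$ for each $k\in\mathcal{M}$. Since $U$ is continuous (being a potential function it is at least $C^1$, and it is in fact smooth for the explicit choice \eqref{expl_pf}) and $\Xi$ is finite, the pointwise minimum over $\Xi$ is continuous, so each $g_k$ is continuous. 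Then $\mathcal{F}_i=\bigcap_{k\in\mathcal{M}_i^+}g_k^{-1}((-\infty,\delta_{\bar R}])$ and $\mathcal{J}_i=\bigcup_{k\in\mathcal{M}_i^+}g_k^{-1}([\delta_{\bar R},\infty))$ are finite intersections and unions of preimages of closed half-lines, hence closed; taking the finite intersection $\mathcal{F}=\bigcap_i\mathcal{F}_i$ and the finite union $\mathcal{J}=\bigcup_i\mathcal{J}_i$ preserves closedness, and since $\bar{\mathcal{F}}=\mathcal{F}\times\mathbb{R}^{3N}$ and $\bar{\mathcal{J}}=\mathcal{J}\times\mathbb{R}^{3N}$ are products of closed sets inside the closed embedded state space $\bar{\mathcal{S}}$, both are closed.

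Next, for \ref{hbc_2}, I would observe that $\bar F$ is single-valued with every component continuous: the rigid-body terms $\bar R_k[\bar\omega_k]^\times$ are smooth, the scalar flows $-k_\xi\nabla_{\xi_k}\bar U$ are continuous because $U$ is $C^1$, and each torque $\tau_i$ in \eqref{R_obs_f} is continuous since the rotational gradients $\nabla_{\bar R_k}\bar U$, the linear map $\psi$, and the velocity-coupling terms are all continuous while $J_i^{-1}$ is constant. A continuous single-valued map is automatically outer semicontinuous and locally bounded, is defined on all of $\bar{\mathcal{S}}\supseteq\bar{\mathcal{F}}$ so that $\bar{\mathcal{F}}\subseteq\text{dom}\,\bar F$, and takes singleton (hence convex) values; this settles \ref{hbc_2}.

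Finally, for \ref{hbc_3}, I would note that $\bar G$ is single-valued and continuous in every component except the $\xi_k$-components, each of which equals $\{\xi_k\}\cup\arg\min_{\bar\xi_k\in\Xi}U(\bar R_k,\bar\xi_k)$. The main obstacle, which I expect to be the crux, is the outer semicontinuity of this set-valued selection: the minimizer need not be unique, so the argmin is genuinely set-valued wherever two elements of $\Xi$ tie for the minimum. I would dispatch this with the standard fact that the argmin over a finite index set of continuous functions is outer semicontinuous: if $\bar R_k^\nu\to\bar R_k$ and $\xi^\nu\in\arg\min_{\bar\xi_k\in\Xi}U(\bar R_k^\nu,\bar\xi_k)$ with $\xi^\nu\to\xi$, then finiteness of $\Xi$ forces $\xi^\nu=\xi\in\Xi$ along the tail, and continuity of $U$ and of the minimum yields $U(\bar R_k,\xi)=\min_{\bar\xi_k\in\Xi}U(\bar R_k,\bar\xi_k)$, so $\xi$ is a minimizer. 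Since unions and products of outer semicontinuous maps remain outer semicontinuous, $\bar G$ is outer semicontinuous; local boundedness follows from boundedness of $\Xi$ together with compactness of $SO(3)^M$ and continuity of the identity components; and $\bar G$ is defined on all of $\bar{\mathcal{S}}\supseteq\bar{\mathcal{J}}$, giving $\bar{\mathcal{J}}\subseteq\text{dom}\,\bar G$. Combining the three parts completes the verification.
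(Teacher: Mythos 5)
Your proof is correct and is, in substance, the argument the paper leaves implicit: the paper's own ``proof'' merely defers to the analogous verification in \cite[Lemma 2]{boughellaba2023distributed}, and your direct check --- closedness of $\bar{\mathcal{F}}$ and $\bar{\mathcal{J}}$ via continuity of $g_k(x)=U(\bar R_k,\xi_k)-\min_{\bar\xi_k\in\Xi}U(\bar R_k,\bar\xi_k)$ (the minimum over the finite set $\Xi$ being continuous), outer semicontinuity, local boundedness and singleton (hence convex) values of the continuous single-valued $\bar F$, and the closed-graph argument for the set-valued $\xi_k$-components of $\bar G$ --- is exactly the standard verification that citation stands in for. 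You also correctly isolated and dispatched the only genuinely nontrivial point, namely that $\xi_k^*=\arg\min_{\bar\xi_k\in\Xi}U(\bar R_k,\bar\xi_k)$ may be set-valued at ties, where finiteness of $\Xi$ forces convergent minimizer sequences to be eventually constant so the argmin map has closed graph.
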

\begin{proof}
Following similar arguments given in the proof of \cite[Lemma 2]{boughellaba2023distributed}, one can prove Lemma \ref{lem_hbc}.
\end{proof}

\begin{rmk}
 Condition \ref{hybrid_ass} implies that the set of all undesired critical points belongs to the jump set $\mathcal{J}$, \textit{i.e.,} $\bar \Upsilon \setminus \mathcal{A} \subset \mathcal{J}$. The jump map $\bar G$ will reset the states to values resulting in a decrease of $\bar U(x)$. 
 %Thanks to Condition \ref{hybrid_ass}, all of these implications will be instrumental in proving the global stability results achieved in this work.
\end{rmk}

Now, we will present our second main result.

\begin{thm}\label{theorem1}
    Let $k_R, k_\xi, k_\omega>0$ and $\bar k_\omega \geq 0$. Suppose Condition \ref{hybrid_ass} is satisfied. Then, the number of jumps of the multi-agent hybrid closed-loop system (\ref{hybrid_sys}) is finite and the set $\bar{\mathcal{A}}:=\{\bar x \in \bar{\mathcal{S}}: x \in \mathcal{A}, ~\omega=0\}$ is globally asymptotically stable for the multi-agent hybrid closed-loop system (\ref{hybrid_sys}). \label{th1_1}
\end{thm}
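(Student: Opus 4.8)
The plan is to construct a single energy-like Lyapunov function for the hybrid closed-loop system \eqref{hybrid_sys}, show that it does not increase along flows and strictly decreases at jumps, and then combine a finite-jump argument with an invariance analysis to upgrade attractivity of $\bar{\mathcal{A}}$ into global asymptotic stability. The natural candidate is the total energy
\[
V(\bar x) := \frac{k_R}{2}\,\bar U(x) + \frac{1}{2}\sum_{i=1}^N \omega_i^\top J_i \omega_i ,
\]
which is continuous, nonnegative, and — since $U$ is a potential function with respect to $(I_3,0)$ and each $J_i>0$ — vanishes exactly on $\bar{\mathcal{A}}$; hence $V$ is positive definite relative to $\bar{\mathcal{A}}$.

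First I would analyze the flow. Differentiating the kinetic part along \eqref{w_dynamics_k} and using $\omega_i^\top[\omega_i]^\times J_i\omega_i=0$ yields $\sum_i\omega_i^\top\tau_i$. For the potential part, using $\dot{\bar R}_k=\bar R_k[\bar\omega_k]^\times$, the identity $\text{tr}(X[\omega]^\times)=-2\psi(X)^\top\omega$, and $\psi(C^\top)=-\psi(C)$, one obtains $\langle\langle \nabla_{\bar R_k}\bar U,\bar R_k[\bar\omega_k]^\times\rangle\rangle = 2\psi(\bar R_k^\top\nabla_{\bar R_k}\bar U)^\top\bar\omega_k$. Writing $\Psi_k:=\psi(\bar R_k^\top\nabla_{\bar R_k}\bar U)$, stacking, and using $\bar\omega=\bar H(t)^\top\omega$ from \eqref{s_bar}, the attitude-driven rate becomes $2(\bar H(t)\Psi)^\top\omega$, while the gradient terms of $\tau_i$ in \eqref{R_obs_f} assemble into $-k_R\bar H(t)\Psi$. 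The factor $k_R/2$ in $V$ is chosen precisely so that these two contributions cancel, leaving
\[
\dot V = -\frac{k_Rk_\xi}{2}\sum_{k\in\mathcal{M}}\big(\nabla_{\xi_k}\bar U\big)^2 - k_\omega\|\omega\|^2 - \bar k_\omega\,\omega^\top(L\otimes I_3)\omega \le 0,
\]
where $L$ is the positive semidefinite graph Laplacian arising from the standard rewriting $\bar k_\omega\sum_i\omega_i^\top\sum_{j\in\mathcal{N}_i}(\omega_i-\omega_j)\ge0$. At a jump only some $\xi_k$ are reset to $\xi_k^\ast$, and by the jump rule \eqref{R_obs_j} each reset lowers $U(\bar R_k,\xi_k)$ by at least $\delta_{\bar R}$, so $V$ strictly decreases by at least $\frac{k_R}{2}\delta_{\bar R}$ per jump. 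As $V$ is nonincreasing during flow and bounded below, the number of jumps is at most $2V(\bar x(0,0))/(k_R\delta_{\bar R})$, hence finite; completeness and precompactness of solutions follow from Lemma \ref{lem_hbc}, $\bar{\mathcal{F}}\cup\bar{\mathcal{J}}=\bar{\mathcal{S}}$, and compactness of the sublevel sets of $V$ (the kinetic term bounds $\omega$, $SO(3)^M$ is compact, and the $\frac{\gamma}{2}\xi_k^2$ growth of $U$ bounds $\xi$).

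Next I would invoke an invariance argument. Because jumps are finite, every complete solution eventually flows for all remaining ordinary time inside $\bar{\mathcal{F}}$, so the classical LaSalle principle applies to this flowing tail and forces convergence to the largest invariant set on which $\dot V\equiv0$. From the expression for $\dot V$ this gives $\omega\equiv0$ and $\nabla_{\xi_k}\bar U\equiv0$ for all $k$; then $\dot\omega\equiv0$ and \eqref{w_dynamics_k} give $\tau_i\equiv0$, which with $\omega=0$ reduces to $k_R\bar H(t)\Psi=0$. Here the tree hypothesis is essential: for an undirected tree $\bar H(t)$ has full column rank (its reduced block form is triangular with invertible diagonal blocks $\pm I_3$ or $\pm\bar R_k$), so $\bar H(t)\Psi=0$ implies $\Psi_k=0$, i.e.\ $\nabla_{\bar R_k}\bar U=0$, for every $k$. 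Thus the limit set consists of critical points of $\bar U$ with $\omega=0$, a subset of $\bar\Upsilon$ with $\omega=0$. Finally, Condition \ref{hybrid_ass} enters: at any point of $\bar\Upsilon\setminus\mathcal{A}$ the \emph{strict} inequality \eqref{cond1} places an entire neighborhood inside $\bar{\mathcal{J}}\setminus\bar{\mathcal{F}}$, so a solution flowing forever cannot approach such a point; hence the invariant set reduces to $\bar{\mathcal{A}}$, establishing global attractivity. Stability in the Lyapunov sense follows from $V$ being positive definite relative to the compact set $\bar{\mathcal{A}}$ and nonincreasing along both flows and jumps, and together with global attractivity this yields global asymptotic stability.

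The main obstacle I anticipate is the invariance step, specifically proving $\bar H(t)\Psi=0\Rightarrow\Psi=0$ in the time-varying, rotation-weighted setting — this is exactly where acyclicity is indispensable, since for a graph with cycles $\bar H(t)$ loses full column rank and the undesired critical set is no longer isolated by $\Psi=0$ alone. A secondary delicate point is the interface with the hybrid machinery: one must argue carefully that finiteness of jumps legitimately reduces the problem to an eventually-flowing solution, so that the continuous LaSalle-type analysis identifies the limit set, and that the \emph{strict} gap in Condition \ref{hybrid_ass} genuinely excludes the undesired equilibria from the flow set rather than merely from the interior of the jump set.
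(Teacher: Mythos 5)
Your proposal is correct and follows essentially the same route as the paper: the same Lyapunov function (up to an overall factor of $1/2$), the same flow dissipation $-k_R k_\xi\|\Psi_\nabla^{\xi}\|^2 - 2k_\omega\|\omega\|^2 - 2\bar k_\omega\|(H^\top\otimes I_3)\omega\|^2$ and per-jump drop of order $k_R\delta_{\bar R}$ giving the finite-jump bound, the same full-column-rank property of $\bar H$ for a tree (the paper cites \cite[Lemma 2]{Mouaad_ACC23}) to pass from $\bar H\Psi_\nabla^{\bar R}=0$ to $\Psi_\nabla^{\bar R}=0$, and the same use of Condition \ref{hybrid_ass} to conclude $\mathcal{F}\cap\bar\Upsilon=\mathcal{A}$. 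The only cosmetic difference is that you reduce to an eventually-flowing solution (legitimized by finiteness of jumps) and apply classical LaSalle, whereas the paper invokes the hybrid invariance principle \cite[Corollary 8.4]{goebel2012hybrid} with the bounding functions $u_{\bar{\mathcal{F}}}$, $u_{\bar{\mathcal{J}}}$ — both close the argument identically.
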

\begin{proof}
    See Appendix \ref{app_2}
\end{proof}

\begin{rmk}\label{rmk_tv_ori}
    According to Theorem~\ref{theorem1}, the hybrid distributed feedback control law \eqref{R_obs_f}-\eqref{R_obs_j} drives the agents' orientations from any initial condition to a common constant orientation. In contrast, Theorem~\ref{theorem_continuous} demonstrates that the distributed feedback control law \eqref{continuous_tau} achieves the same objective but only from almost any initial condition. However, when both schemes \eqref{R_obs_f}-\eqref{R_obs_j} and \eqref{continuous_tau} are applied with $k_\omega = 0$ and $\bar{k}_\omega > 0$, the agents' orientations converge to a common time-varying orientation, as illustrated in the Simulation section (Figures~\ref{R_contin_k_w_0} and~\ref{R_hybrid_k_w_0}).
\end{rmk}
\begin{rmk}
    The term $-k_\omega \omega_i$ in the proposed hybrid control scheme \eqref{R_obs_f} ensures the convergence of $\omega_i$ to zero, which is essential for achieving the result of Theorem \ref{theorem1}. On the other hand, the last term $-\bar k_\omega \sum_{j\in\mathcal{N}_i}(\omega_i-\omega_j)$ is not necessary to prove the result established in Theorem \ref{theorem1}, as the theorem remains valid even when $\bar k_\omega = 0$. However, this term provides inter-agent damping that contributes to transient performance improvement.
    %Specifically, it acts as a coupling term that accelerates the alignment of angular velocities, thereby leading to faster convergence to the synchronized state.\\
\end{rmk}

The distributed hybrid feedback control law \eqref{R_obs_f}-\eqref{R_obs_j} was developed based on a generic potential function defined on $SO(3)^M \times \mathbb{R}^M$. In the following, we derive the explicit forms of the feedback control law \eqref{R_obs_f}-\eqref{R_obs_j} using a specific potential function. Considering the potential function defined in \eqref{expl_pf}, the explicit form of the distributed hybrid feedback law \eqref{R_obs_f}-\eqref{R_obs_j} is obtained as follows:

{\small
\begin{align}
             &\underbrace{
                  \begin{aligned}\label{explicit_11}
                    \tau_i =&-k_R\bigg(\sum_{j \in \mathcal{O}_i} \psi\left(A \mathcal{R}_a(\xi_n, u)^\top  R_j^\top  R_i\right)+ \sum_{j \in \mathcal{I}_i} \mathcal{R}_a(\xi_p, u)\\
                    &\psi\left(A R_j^\top  R_i \mathcal{R}_a(\xi_p, u)\right)\bigg)-k_\omega \omega_i -\bar k_\omega \sum_{j\in\mathcal{N}_i}(\omega_i-\omega_j)\\
                    \dot{\xi}_k =&-k_\xi \left(\gamma \xi_k+2 u^\top  \psi\left(A \bar R_k \mathcal{R}_a(\xi_k, u)\right)\right)
                 \end{aligned}
                }_{x \in \mathcal{F}_i}
                \\
             &\underbrace{
                \begin{aligned}\label{explicit_12}
                    \xi_k^+&\in 
                    \begin{cases}
                    \xi_k \hspace{1.3cm}\text{if} \hspace{0.5cm} U(\bar{R}_k, \xi_k)-U(\bar{R}_k, \xi^*_k) \leq \delta_{\bar R}\\
                    \xi^*_k \hspace{1.3cm}\text{if} \hspace{0.5cm} U(\bar{R}_k, \xi_k)-U(\bar{R}_k, \xi^*_k) \geq \delta_{\bar R}
                    \end{cases}
                \end{aligned}
                }_{x \in \mathcal{J}_i}
\end{align}}where $i\in\mathcal{V}$, $ k \in \mathcal{M}_i^+$, $\{p\}=\mathcal{M}_i^+\cap \mathcal{M}_j^- \in \mathcal{M}$, $\{n\}=\mathcal{M}_i^-\cap \mathcal{M}_j^+ \in \mathcal{M}$, $\mathcal{I}_i:=\{j\in \mathcal{N}_i: j \hspace{0.1cm} \text{is the tail of the edge} \hspace{0.1cm} (i,j) \in \mathcal{E}\}$ and $\mathcal{O}_i:=\{j\in \mathcal{N}_i: j \hspace{0.1cm} \text{is the head of the edge} \hspace{0.1cm} (i,j) \in \mathcal{E}\}$. where $i \in \mathcal{V}$. From the fact that $\mathcal{N}_i = \mathcal{I}_i \cup \mathcal{O}_i$, it is clear that the hybrid feedback control law \eqref{explicit_11}-\eqref{explicit_12} is distributed in the sense that each agent relies solely on information from neighboring agents. Furthermore, the implementation of the proposed distributed hybrid feedback law, as described in equations \eqref{explicit_11}-\eqref{explicit_12}, depends only on relative attitude and angular velocity measurements. These measurements can be readily obtained using onboard sensors or through inter-agent communication within the network.

\subsection{Distributed Hybrid Feedback for Global Attitude Synchronization  on $SO(3)$ without Velocity Measurements}\label{s6}
The distributed hybrid feedback scheme presented in \eqref{R_obs_f}-\eqref{R_obs_j} requires each agent to have access to its angular velocity. However, this requirement can be resource-intensive, particularly in networks with a large number of agents. To address this challenge, we propose a velocity-free distributed hybrid synchronization scheme. This scheme introduces an auxiliary dynamic system for each agent, which generates the necessary damping to compensate for the lack of angular velocity measurements.

Before detailing the velocity-free distributed hybrid synchronization scheme, we first define the dynamics of the auxiliary states, $(Q_i, \zeta_i) \in SO(3) \times \mathbb{R}$, for each agent $i \in \mathcal{V}$ as follows:

{\small 
\begin{align}
             &\underbrace{
                  \begin{aligned}\label{q_11}
                    \dot{Q}_i =&k_Q Q_i\left[\Tilde{Q}_i \psi\left(\Tilde{Q}^\top_i \nabla_{\Tilde{Q}_i}U(\Tilde{Q}_i, \zeta_i)\right)\right]^\times\\
                    \dot{\zeta}_i =&-k_\zeta \nabla_{\zeta_i}U(\Tilde{Q}_i, \zeta_i)
                 \end{aligned}
                }_{(Q_i, \zeta_i) \in \mathcal{F}_i^{\tilde Q}}\\
             &\underbrace{
                \begin{aligned}\label{q_21}
                    Q_i^+&=Q_i\\
                    \zeta_i^+&\in 
                    \begin{cases}
                    \zeta_i \hspace{1.3cm}\text{if} \hspace{0.5cm} U(\Tilde{Q}_i, \zeta_i)-U(\Tilde{Q}_i, \zeta^*_i) \leq \delta_{\tilde Q}\\
                    \zeta^*_i \hspace{1.3cm}\text{if} \hspace{0.5cm} U(\Tilde{Q}_i, \zeta_i)-U(\Tilde{Q}_i, \zeta^*_i) \geq \delta_{\tilde Q}
                    \end{cases}
                \end{aligned}
                }_{(Q_i, \zeta_i) \in \mathcal{J}_i^{\tilde Q}}
\end{align}
}where $k_Q, k_\zeta > 0$, $Q_i(0) \in SO(3)$, $\zeta_i(0) \in \mathbb{R}$, $\Tilde{Q}_i:= Q^\top_i R_i$ and $\zeta^*_i := \text{arg} \underset{\bar{\zeta}_i \in \Pi}{\text{min}} U(\tilde Q_i, \bar{\zeta}_i)$ . After introducing the following condition, adopted from \cite{miaomiao_TAC2022}, we define the flow set $\mathcal{F}_i^{\tilde Q}$ and the jump set $\mathcal{J}_i^{\tilde Q}$ shown in \eqref{q_11}-\eqref{q_21}.
\begin{cnd}\label{hybrid_ass_2}
    Let $U$ be a potential function on $SO(3)\times\mathbb{R}$, with respect to $(I_3, 0)$. Let $(I_3, 0)\in \Upsilon$, where $\Upsilon:=\{(\tilde Q_i, \zeta_i) \in SO(3) \times \mathbb{R}:~\nabla_{\tilde Q_i}U(\tilde Q_i, \zeta_i)=0, \nabla_{\zeta_i}U(\tilde Q_i, \zeta_i)=0\}$ is the set of all critical points of $U(\tilde Q_i, \zeta_i)$. There exist a scalar $\delta_{\tilde Q}>0$ and a nonempty finite set $\Pi$ such that, for every $(\tilde Q_i, \zeta_i)\in \Upsilon \setminus \{(I_3, 0)\}$, one has
    \begin{equation}
    U(\tilde Q_i, \zeta_i)-\underset{\bar \zeta_i \in \Pi}{\text{min}} U(\tilde Q_i, \bar \zeta_i) > \delta_{\tilde Q}.
    \end{equation}
\end{cnd}
\begin{rmk}
The motivation behind Condition \ref{hybrid_ass_2} is similar to that of Condition \ref{hybrid_ass}. Condition \ref{hybrid_ass_2} implies that all undesired critical points in $\Upsilon \setminus \{ (I_3, 0) \}$ are inside the jump set $\mathcal{J}_i^{\tilde Q}$, and as such, the jump map in \eqref{q_21} will take care of steering the state away from the undesired critical points $\Upsilon \setminus \{ (I_3, 0) \}$.   
\end{rmk}

\begin{rmk}
    Consider the potential function $U$ defined in \eqref{expl_pf}. Proposition 2 in \cite{miaomiao_TAC2022} gives the possible choices of parameters $\{\Pi, A, u, \gamma, \delta_{\tilde Q}\}$ for which Condition \ref{hybrid_ass_2} is satisfied.
\end{rmk}

Based on Condition \ref{hybrid_ass_2}, for each $i \in \mathcal{V}$, one defines the flow set $\mathcal{F}_i^{\tilde Q}$ and the jump set $\mathcal{J}_i^{\tilde Q}$ as follows:
{\small
\begin{align}    
    \mathcal{F}_i^{\tilde Q}&\hspace{-0.1cm}:=\hspace{-0.1cm}\{(\tilde Q_i, \zeta_i)\in SO(3)\times \mathbb{R} \hspace{-0.1cm}:U(\tilde Q_i, \zeta_i)-\underset{\bar \zeta_i \in \Xi}{\text{min}} U(\tilde Q_i, \bar \zeta_i)\leq \delta_{\tilde Q}\},\nonumber\\
    \mathcal{J}_i^{\tilde Q}&\hspace{-0.1cm}:=\hspace{-0.1cm}\{(\tilde Q_i, \zeta_i)\in SO(3)\times \mathbb{R} \hspace{-0.1cm}:U(\tilde Q_i, \zeta_i)-\underset{\bar \zeta_i \in \Xi}{\text{min}} U(\tilde Q_i, \bar \zeta_i)\geq \delta_{\tilde Q}\}.\nonumber
\end{align}}Since $\Tilde{Q}_i:= Q^\top_i R_i$, it follows from \eqref{q_11}-\eqref{q_21} and \eqref{R_dynamics_i} that
{\small
\begin{align}
             &\underbrace{
                  \begin{aligned}\label{cl_q_1}
                    \dot{\tilde{Q}}_i =& \tilde{Q}_i \left[\omega_i-k_Q \psi\left(\Tilde{Q}^\top_i \nabla_{\Tilde{Q}_i}U(\Tilde{Q}_i, \zeta_i)\right)\right]^\times\\
                    \dot{\zeta}_i =&-k_\zeta \nabla_{\zeta_i}U(\Tilde{Q}_i, \zeta_i)
                 \end{aligned}
                }_{(Q_i, \zeta_i) \in \mathcal{F}_i^{\tilde Q}}\\
             &\underbrace{
                \begin{aligned}\label{cl_q_2}
                    \tilde Q_i^+ &= \tilde Q_i\\
                    \zeta_i^+&\in 
                    \begin{cases}
                    \zeta_i \hspace{1.3cm}\text{if} \hspace{0.5cm} U(\Tilde{Q}_i, \zeta_i)-U(\Tilde{Q}_i, \zeta^*_i) \leq \delta_{\tilde Q}\\
                    \zeta^*_i \hspace{1.3cm}\text{if} \hspace{0.5cm} U(\Tilde{Q}_i, \zeta_i)-U(\Tilde{Q}_i, \zeta^*_i) \geq \delta_{\tilde Q}
                    \end{cases}
                \end{aligned}
                }_{(Q_i, \zeta_i) \in \mathcal{J}_i^{\tilde Q}}
\end{align}}The primary objective of designing the auxiliary system \eqref{q_11}-\eqref{q_21} is to achieve an indirect asymptotic estimation of the angular velocity measurement for each agent. This compensates for the lack of angular velocities required in the control scheme \eqref{R_obs_f}-\eqref{R_obs_j} and thus ensures closed-loop stability without the need for angular velocity measurements. For further illustration, consider the hybrid closed-loop system \eqref{cl_q_1}-\eqref{cl_q_2}. It is clear that the convergence of $\tilde{Q}_i$ to $I_3$ inherently drives the term $\psi\left(\tilde{Q}^\top_i \nabla_{\tilde{Q}_i} U(\tilde{Q}_i, \zeta_i)\right)$ to asymptotically match $\omega_i$ for all $i \in \mathcal{V}$. As a result, this framework acts as an asymptotic observer for $\omega_i$, allowing $\omega_i$ to be replaced by $\psi\left(\tilde{Q}^\top_i \nabla_{\tilde{Q}_i} U(\tilde{Q}_i, \zeta_i)\right)$ to provide the necessary damping in the feedback control input $\tau_i$.

For every $i \in \mathcal{V}$, considering the auxiliary system \eqref{q_11}-\eqref{q_21}, we propose the following distributed hybrid velocity-free feedback control law:
{\small
\begin{align}
             &\underbrace{
                  \begin{aligned}\label{tau_w_v}
                    \tau_i =&k_R\bigg(\sum_{l\in \mathcal{M}_i^-} \bar R_l \psi \left(\bar R_l^\top  \nabla_{\bar R_l}\bar U\right)-\hspace{-0.13cm}\sum_{n\in \mathcal{M}_i^+} \psi \left(\bar R_n^\top  \nabla_{\bar R_n}\bar U\right)\bigg)\\
                    &-k_{\tilde Q}\psi\left(\tilde Q_i^\top  \nabla_{\tilde Q_i} U(\tilde Q_i, \zeta_i)\right)\\
                    \dot{\xi}_k =&-k_\xi \nabla_{\xi_k}\bar U
                 \end{aligned}
                }_{x \in \mathcal{F}_i}
                \\
             &\underbrace{
                \begin{aligned}\label{tau_w_v_1}
                    \xi_k^+&\in 
                    \begin{cases}
                    \xi_k \hspace{1.3cm}\text{if} \hspace{0.5cm} U(\bar{R}_k, \xi_k)-U(\bar{R}_k, \xi^*_k) \leq \delta_{\bar R}\\
                    \xi^*_k \hspace{1.3cm}\text{if} \hspace{0.5cm} U(\bar{R}_k, \xi_k)-U(\bar{R}_k, \xi^*_k) \geq \delta_{\bar R}
                    \end{cases}
                \end{aligned}
                }_{x \in \mathcal{J}_i}
\end{align}}In the feedback control law presented above, the angular velocity $\omega_i$, previously used in the second term of the proposed torque \eqref{R_obs_f}, is replaced with the new term $\psi\left(\tilde{Q}_i^\top  \nabla_{\tilde{Q}_i} U(\tilde{Q}_i, \zeta_i)\right)$. As already discussed, this term can be constructed using the outputs of the auxiliary system described in equations \eqref{q_11}-\eqref{q_21}. Furthermore, intuitively, the last term in the proposed torque \eqref{R_obs_f} can be replaced with:
\begin{equation}\label{aux_term}
\sum_{j\in\mathcal{N}_i}\left(\psi\left(\tilde{Q}_i^\top  \nabla_{\tilde{Q}_i} U(\tilde{Q}_i, \zeta_i)\right)-\psi\left(\tilde{Q}_j^\top  \nabla_{\tilde{Q}_j} U(\tilde{Q}_j, \zeta_j)\right)\right).
\end{equation}
However, incorporating this term in the feedback control law \eqref{tau_w_v}-\eqref{tau_w_v_1} introduces challenges in establishing the stability properties of the proposed velocity-free distributed hybrid attitude synchronization scheme, as no suitable Lyapunov function has been identified to proceed with the stability proof. Despite this theoretical limitation, simulations indicate that the scheme demonstrates convergence when the term in \eqref{aux_term} is considered. An alternative approach involves designing a hybrid auxiliary system, similar to the one described in \eqref{q_11}-\eqref{q_21}, to provide the required damping and compensate for the absence of relative angular velocity measurements (\ie, the last term in the proposed torque \eqref{R_obs_f}). However, this solution may increase the complexity of implementing the synchronization scheme and impose additional computational overhead.

Now, let $\hat x :=(\bar x, \tilde Q_1, \hdots, \tilde Q_N, \zeta_1, \hdots, \zeta_N) \in \hat{\mathcal{S}}:= \Bar{\mathcal{S}}\times SO(3)^N \times \mathbb{R}^N$. One can derive the following hybrid dynamics
{\small
\begin{equation}\label{hybrid_sys_w_v}
\hat{\mathcal{H}}:\begin{cases} {\dot{\hat x} = \hat F(\hat x),}&{\hat x \in \hat{\mathcal{F}}} \\ {{\hat x^ + } \in \hat G(\hat x),}&{\hat x \in \hat{\mathcal{J}}} \end{cases}
\end{equation}
}where
\begin{align}
    \hat{\mathcal{F}}&:=\{\hat x \in \hat{\mathcal{S}}: \bar x \in \bar{\mathcal{F}}~\text{and}~ \forall i \in \mathcal{V}, (\tilde Q_i, \zeta_i)\in \mathcal{F}_i^{\tilde Q}\}\nonumber\\
    \hat{\mathcal{J}}&:=\{\hat x \in \hat{\mathcal{S}}:  \bar x \in \bar{\mathcal{J}}~\text{or}~ \exists i \in \mathcal{V}, (\tilde Q_i, \zeta_i)\in \mathcal{J}_i^{\tilde Q}\}\nonumber
\end{align}
and $\hat G(\hat x_h) :=(\bar R_1, \hdots, \bar R_M, \{\xi_1, \xi_1^*\}, \hdots, \{\xi_M, \xi_M^*\}, \omega_1$ $, \hdots, \omega_N, \tilde Q_1, \hdots, \tilde Q_N, \{\zeta_1, \zeta_1^*\}, \hdots, \{\zeta_N, \zeta_N^*\})$,
{\small
\begin{equation}
    \hat F(\hat x) := \left[ {\begin{array}{c} {\bar R_1[\bar \omega_1]^{\times}} \\ \vdots\\ {\bar R_M[\bar \omega_M]^{\times}}\\{-k_\xi \nabla_{\xi_1}\bar U} \\ \vdots\\ {-k_\xi \nabla_{\xi_M}\bar U}\\{J_1^{-1} \left(-[\omega_1]^\times J_1 \omega_1+\tau_1\right)} \\ \vdots\\ {J_N^{-1} \left(-[\omega_N]^\times J_N \omega_N+\tau_N\right)}\\{\tilde{Q}_1 \left[\omega_1-k_Q \psi\left(\Tilde{Q}_1^\top  \nabla_{\Tilde{Q}_1}U(\Tilde{Q}_1, \zeta_1)\right)\right]^\times} \\ \vdots\\ {\tilde{Q}_N \left[\omega_N-k_Q \psi\left(\Tilde{Q}_N^\top  \nabla_{\Tilde{Q}_N}U(\Tilde{Q}_N, \zeta_N)\right)\right]^\times}\\{k_\zeta \nabla_{\zeta_1}U(\tilde Q_1, \zeta_1)}\\ \vdots\\{k_\zeta \nabla_{\zeta_N}U(\tilde Q_N, \zeta_N)} \end{array}}\right].\nonumber
\end{equation}}with $\tau_i$ is given in \eqref{tau_w_v} for each $i \in \mathcal{V}$. It follows from \eqref{hybrid_sys_w_v} that $\hat{\mathcal{F}}\cup \hat{\mathcal{J}} = \hat{\mathcal{S}}$. In addition, $\hat{\mathcal{F}}$ and $\hat{\mathcal{J}}$ are closed sets, and the hybrid closed-loop system \eqref{hybrid_sys_w_v} is autonomous and satisfies the hybrid basic conditions \cite[Assumption 6.5]{goebel2012hybrid}. Our third main result is stated in the following theorem:
\begin{thm}\label{theorem2}
    Let $k_{\tilde Q}, k_Q, k_\zeta, k_R, k_\xi >0$ and suppose Conditions \ref{hybrid_ass} and \ref{hybrid_ass_2} are satisfied. Then, the set $\hat{\mathcal{A}}:=\{ \hat x \in \hat{\mathcal{S}}:~\bar x \in \bar{\mathcal{A}}, \forall i \in \mathcal{V}, (\tilde Q_i, \zeta_i)=(I_3, 0)\}$ is globally asymptotically stable for the multi-agent hybrid closed-loop system (\ref{hybrid_sys_w_v}) and the number of jumps is finite.
\end{thm}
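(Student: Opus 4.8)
The plan is to build a single Lyapunov function for $\hat{\mathcal H}$, prove dissipation along flows and a uniform drop across jumps, and then run a hybrid invariance argument. I would take
\begin{equation}
V(\hat x) := k_R\,\bar U(x) + \tfrac{1}{2}\sum_{i=1}^{N}\omega_i^\top J_i\omega_i + k_{\tilde Q}\sum_{i=1}^{N} U(\tilde Q_i,\zeta_i),\nonumber
\end{equation}
which is nonnegative and vanishes exactly on $\hat{\mathcal A}$, since $\bar U=0\Leftrightarrow x\in\mathcal A$, $U(\tilde Q_i,\zeta_i)=0\Leftrightarrow(\tilde Q_i,\zeta_i)=(I_3,0)$, and the kinetic term vanishes iff $\omega=0$. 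Writing $\Psi_k:=\psi(\bar R_k^\top\nabla_{\bar R_k}\bar U)$ and $\Phi_i:=\psi(\tilde Q_i^\top\nabla_{\tilde Q_i}U)$, differentiation along $\hat F$ gives $\tfrac{d}{dt}\bar U=\sum_k\Psi_k^\top\bar\omega_k-k_\xi\sum_k(\nabla_{\xi_k}\bar U)^2$, $\tfrac{d}{dt}\sum_iU(\tilde Q_i,\zeta_i)=\sum_i(\Phi_i^\top\omega_i-k_Q\|\Phi_i\|^2-k_\zeta(\nabla_{\zeta_i}U)^2)$, and $\sum_i\omega_i^\top J_i\dot\omega_i=\sum_i\omega_i^\top\tau_i$. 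Since the attitude part of $\tau_i$ equals $-k_R(\bar H\Psi)_i$, the identity $\bar\omega=\bar H^\top\omega$ cancels $k_R\sum_k\Psi_k^\top\bar\omega_k$, and the weight $k_{\tilde Q}$ cancels the cross terms $\omega_i^\top\Phi_i$ generated by the velocity-free damping $-k_{\tilde Q}\Phi_i$ against those from $\tfrac{d}{dt}\sum_iU$. What remains is
\begin{equation}
\dot V=-k_Rk_\xi\sum_{k}(\nabla_{\xi_k}\bar U)^2-k_{\tilde Q}k_Q\sum_{i}\|\Phi_i\|^2-k_{\tilde Q}k_\zeta\sum_{i}(\nabla_{\zeta_i}U)^2\le 0.\nonumber
\end{equation}

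For the jumps, $\bar R_k,\tilde Q_i,\omega_i$ are fixed by $\hat G$, so a jump only resets some $\xi_k\to\xi_k^\ast$ and/or $\zeta_i\to\zeta_i^\ast$; by the jump rules and by $\xi_k^\ast,\zeta_i^\ast$ being minimizers over $\Xi,\Pi$, each reset lowers the corresponding $U$ by at least $\delta_{\bar R}$ or $\delta_{\tilde Q}$, so $V$ drops by at least $\epsilon:=\min\{k_R\delta_{\bar R},k_{\tilde Q}\delta_{\tilde Q}\}>0$ per jump. As $V$ is nonincreasing along flows and $V\ge0$, the jump count is at most $V(\hat x(0,0))/\epsilon$, hence finite, and every maximal solution is eventually purely continuous. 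Boundedness of $\omega,\xi,\zeta$ follows from sublevel-set boundedness of $V$ (using radial unboundedness of $U$ in $\xi,\zeta$), so solutions are complete. The basic conditions hold as in Lemma~\ref{lem_hbc}, so I would invoke the hybrid invariance principle to conclude convergence to the largest weakly invariant set contained in $\hat{\mathcal F}\cap\{\dot V=0\}$.

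Characterizing this set is the crux and the step I expect to be hardest. On $\{\dot V=0\}$ one has $\Phi_i=0$, $\nabla_{\zeta_i}U=0$, and $\nabla_{\xi_k}\bar U=0$ for all $i,k$; the first two make each $(\tilde Q_i,\zeta_i)$ a critical point of $U$, and since the trajectory stays in the flow set, Condition~\ref{hybrid_ass_2} forces $(\tilde Q_i,\zeta_i)=(I_3,0)$. The velocity-free coupling then supplies the missing damping: with $\tilde Q_i\equiv I_3$ and $\Phi_i\equiv0$, invariance in $\dot{\tilde Q}_i=\tilde Q_i[\omega_i-k_Q\Phi_i]^\times$ forces $\omega_i\equiv0$. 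Hence $\dot\omega_i=0$ gives $\tau_i=0$, and since $\Phi_i=0$ this reduces to $\bar H\Psi=0$; because $\mathcal G$ is a tree, $\bar H$ has trivial kernel (peel from the leaves, each $\bar R_k$ being invertible), so $\Psi_k=0$, \ie, $\nabla_{\bar R_k}\bar U=0$ for every $k$.

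Thus the configuration lies in $\bar\Upsilon$. By Condition~\ref{hybrid_ass}, any $x\in\bar\Upsilon\setminus\mathcal A$ satisfies a strict gap $U(\bar R_k,\xi_k)-\min_{\bar\xi_k\in\Xi}U(\bar R_k,\bar\xi_k)>\delta_{\bar R}$ for some $k$, placing such $x$ in the interior of $\mathcal J$ and hence outside $\mathcal F$; since the invariant set lies in $\hat{\mathcal F}$, we must have $x\in\mathcal A$, \ie, $\bar R_k=I_3$ and $\xi_k=0$ for all $k$. Collecting $x\in\mathcal A$, $\omega=0$, and $(\tilde Q_i,\zeta_i)=(I_3,0)$ identifies the invariant set as exactly $\hat{\mathcal A}$, which gives global attractivity. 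Lyapunov stability follows from $V$ being positive definite relative to $\hat{\mathcal A}$ and nonincreasing (including across jumps); combining stability, attractivity, and the established finiteness of jumps yields global asymptotic stability of $\hat{\mathcal A}$.
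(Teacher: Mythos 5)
Your proposal is correct and follows essentially the same route as the paper's proof in Appendix~\ref{app_3}: the same composite Lyapunov function (your $\tfrac{1}{2}$ on the kinetic term is consistent with your factor-one convention for $\dot{\bar U}$, while the paper's metric conventions yield the factor-two version), the same flow dissipation and uniform jump decrease giving a finite jump count, and the same invariance-principle argument in which Condition~\ref{hybrid_ass_2} collapses each $(\tilde Q_i,\zeta_i)$ to $(I_3,0)$, invariance of $\dot{\tilde Q}_i$ forces $\omega\equiv 0$, the tree structure gives $\bar H\Psi_\nabla^{\bar R}=0\Rightarrow\Psi_\nabla^{\bar R}=0$, and Condition~\ref{hybrid_ass} excludes $\bar\Upsilon\setminus\mathcal A$ from the flow set. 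The only cosmetic difference is that the paper verifies completeness via the tangent-cone condition of \cite[Proposition 6.10]{goebel2012hybrid} rather than from boundedness alone.
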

\begin{proof}
    See Appendix \ref{app_3}
\end{proof}

Similar to the design of \eqref{R_obs_f}-\eqref{R_obs_j}, the velocity-free distributed hybrid feedback control law \eqref{tau_w_v}-\eqref{tau_w_v_1} was developed based on generic potential functions. Considering again the potential function given in \eqref{expl_pf}, one can derive the explicit form of the distributed hybrid velocity-free feedback control law \eqref{tau_w_v}-\eqref{tau_w_v_1} as follows:

{\small
\begin{align}
             &\underbrace{
             \begin{aligned}\label{explicit_21}
                    \tau_i &=-k_R\bigg(\sum_{j \in \mathcal{O}_i} \psi\left(A \mathcal{R}_a(\xi_n, u)^\top  R_j^\top  R_i\right)+ \sum_{j \in \mathcal{I}_i} \mathcal{R}_a(\xi_p, u)\\
                    &\psi\left(A R_j^\top  R_i \mathcal{R}_a(\xi_p, u)\right)\hspace{-0.15cm}\bigg)-k_{\tilde Q} \mathcal{R}_a(\zeta_i, u) \psi\left(A Q_i^\top  R_i \mathcal{R}_a(\zeta_i, u)\right) \\
                    \dot{\xi}_k &=-k_\xi \left(\gamma \xi_k+2 u^\top  \psi\left(A \bar R_k \mathcal{R}_a(\xi_k, u)\right)\right)
                 \end{aligned}
                }_{x \in \mathcal{F}_i}\\
             &\underbrace{
             \begin{aligned}\label{explicit_22}
                    \xi_k^+&\in 
                    \begin{cases}
                    \xi_k \hspace{1.3cm}\text{if} \hspace{0.5cm} U(\bar{R}_k, \xi_k)-U(\bar{R}_k, \xi^*_k) \leq \delta_{\bar R}\\
                    \xi^*_k \hspace{1.3cm}\text{if} \hspace{0.5cm} U(\bar{R}_k, \xi_k)-U(\bar{R}_k, \xi^*_k) \geq \delta_{\bar R}
                    \end{cases}
                \end{aligned}
                }_{x \in \mathcal{J}_i}
\end{align}
}for every $i \in \mathcal{V}$ and $ k \in \mathcal{M}_i^+$. In addition, the hybrid dynamics of the auxiliary state $(Q_i, \zeta_i)$ are also given explicitly as follows:
{\small
\begin{align}
             &\underbrace{
                  \begin{aligned}\label{q_1}
                    \dot{Q}_i =&k_Q Q_i \left[Q^\top_i R_i \mathcal{R}_a(\zeta_i, u) \psi\left(A Q_i^\top  R_i \mathcal{R}_a(\zeta_i, u)\right)\right]^\times\\
                    \dot{\zeta}_i =&-k_\zeta \left(\gamma \zeta_i+2 u^\top  \psi\left(A Q^\top_i R_i \mathcal{R}_a(\zeta_i, u)\right)\right)
                 \end{aligned}
                }_{(Q_i, \zeta_i) \in \mathcal{F}_i^{\tilde Q}}\\
             &\underbrace{
                \begin{aligned}\label{q_2}
                    Q_i^+&=Q_i\\
                    \zeta_i^+&\in 
                    \begin{cases}
                    \zeta_i \hspace{1.3cm}\text{if} \hspace{0.5cm} U(\Tilde{Q}_i, \zeta_i)-U(\Tilde{Q}_i, \zeta^*_i) \leq \delta_{\tilde Q}\\
                    \zeta^*_i \hspace{1.3cm}\text{if} \hspace{0.5cm} U(\Tilde{Q}_i, \zeta_i)-U(\Tilde{Q}_i, \zeta^*_i) \geq \delta_{\tilde Q}
                    \end{cases}
                \end{aligned}
                }_{(Q_i, \zeta_i) \in \mathcal{J}_i^{\tilde Q}}
\end{align}
}where $i \in \mathcal{V}$. For the practical implementation of the velocity-free distributed hybrid feedback control law \eqref{explicit_21}-\eqref{explicit_22}, each agent will execute the dynamics of its corresponding auxiliary system. 
%As previously mentioned, the dynamics of the auxiliary variable $\xi_k$, for each $k \in \mathcal{M}$, will be implemented by the agent at the head of the oriented edge $k$. 
Note that while the feedback control scheme \eqref{explicit_21}-\eqref{explicit_22} eliminates the need for individual angular velocity measurements, it still requires each agent's orientation measurements. In contrast, the feedback control scheme \eqref{explicit_11}-\eqref{explicit_12} relies solely on relative orientations and individual angular velocities, avoiding the need for absolute orientation measurements. Unfortunately, eliminating angular velocity measurements in \eqref{explicit_21}-\eqref{explicit_22} comes at the cost of requiring absolute orientation measurements.

\section{SIMULATION}\label{s8}
In this section, we provide some numerical simulation results to illustrate the performance of the two distributed hybrid feedback control laws \eqref{explicit_11}-\eqref{explicit_12} and \eqref{explicit_21}-\eqref{explicit_22}, referred to as \textit{Hybrid Controller} and \textit{ Velocity-free Hybrid Controller}, respectively. Additionally, we include numerical simulations for the continuous feedback control law given in \eqref{continuous_tau}, referred to as \textit{ Continuous Controller}. We consider a network of seven satellites, where the satellites interact with each other according to the undirected graph topology depicted in Figure \ref{graph}. The neighbor sets are given as $\mathcal{N}_1=\{2\}$, $\mathcal{N}_2 = \{1, 3\}$, $\mathcal{N}_3 = \{2, 4, 6\}$, $\mathcal{N}_4 = \{3, 5\}$, $\mathcal{N}_5 = \{4\}$, $\mathcal{N}_6 = \{3, 7\}$ and $\mathcal{N}_7 = \{6\}$.
\begin{figure}[H]
    \centering
    \includegraphics[width=0.7\linewidth]{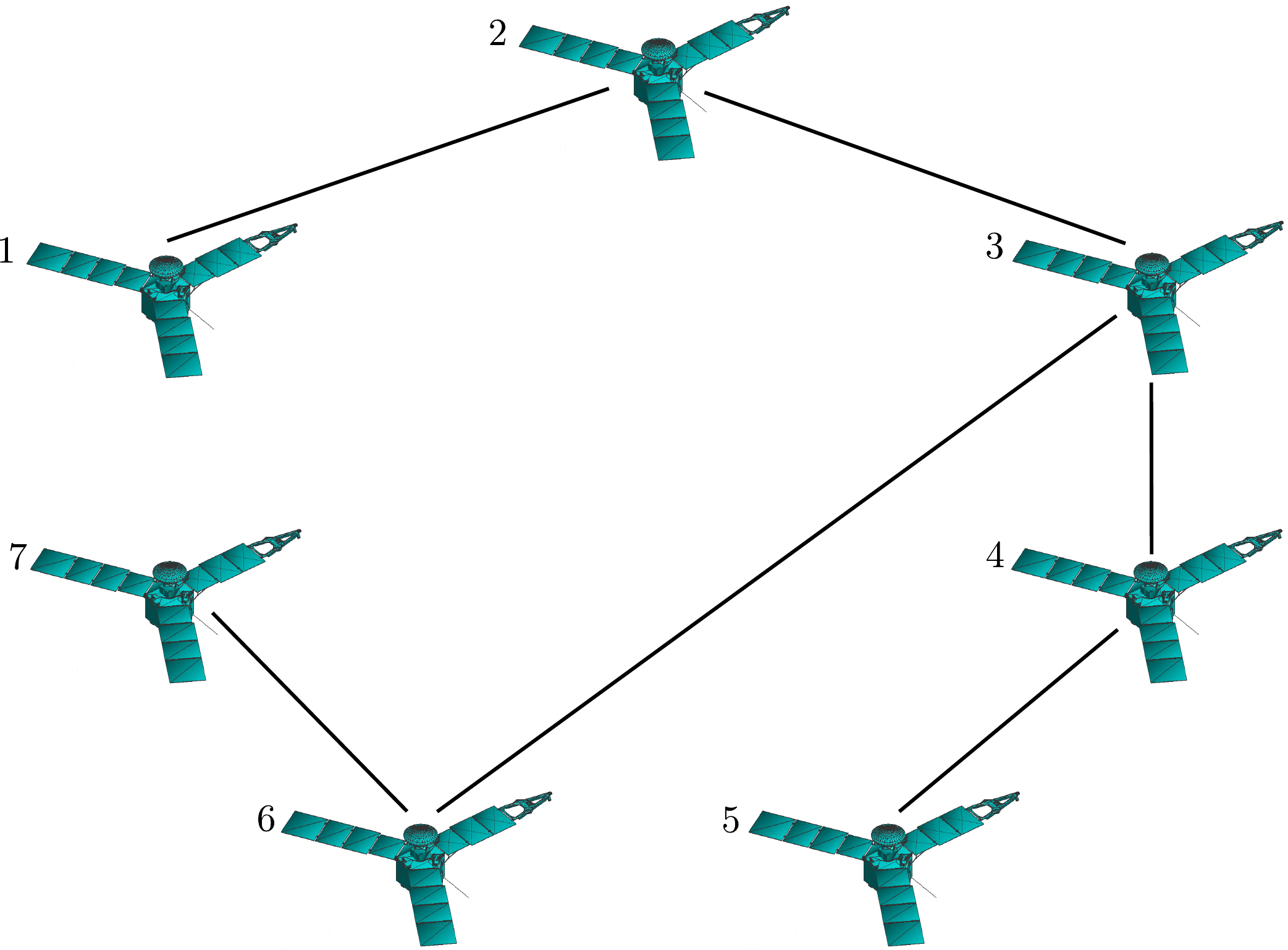}
    \caption{A network of seven satellites with an interaction graph.}
    \label{graph}
\end{figure}
We assign an arbitrary orientation to the interaction graph given in Figure \ref{graph}, and we index each oriented edge with a number as shown in Figure \ref{oriented_graph}. We consider the following initial conditions: $\omega(0)=0$, $\xi(0) = 0$, $\zeta(0)=0$, $R_1(0)=\mathcal{R}(-\frac{\pi}{2},\bar{u})$, $R_2(0)=\mathcal{R}(\frac{\pi}{2},\bar{u})$, $R_3(0)=\mathcal{R}(-\frac{\pi}{2},\bar{u})$, $R_4(0)=\mathcal{R}(\frac{\pi}{2},\bar{u})$, $R_5(0)=\mathcal{R}(-\frac{\pi}{2},\bar{u})$, $R_6(0)=\mathcal{R}(\frac{\pi}{2},\bar{u})$, $R_7(0)=\mathcal{R}(-\frac{\pi}{2},\bar{u})$, $Q_1(0)=\mathcal{R}(\frac{\pi}{2},\bar{u})$, $Q_2(0)=\mathcal{R}(-\frac{\pi}{2},\bar{u})$, $Q_3(0)=\mathcal{R}(\frac{\pi}{2},\bar{u})$, $Q_4(0)=\mathcal{R}(-\frac{\pi}{2},\bar{u})$, $Q_5(0)=\mathcal{R}(\frac{\pi}{2},\bar{u})$, $Q_6(0)=\mathcal{R}(-\frac{\pi}{2},\bar{u})$ and $Q_7(0)=\mathcal{R}(\frac{\pi}{2},\bar{u})$, with $\bar{u}=[0~0~1]^\top $. Note that these initial conditions are chosen such that the state belongs to the set of undesired equilibria. In addition, the gains and hybrid scheme parameters are set to $k_R=1$, $k_\omega=\bar k_\omega=0.1$, $k_Q=20$, $k_{\tilde Q}=2$, $k_\xi=k_\zeta=20$, $\delta_{\bar R}=\delta_{\tilde Q}=0.3848$, $\gamma=1.9251$, $\Xi=\Pi=\{0.9 \pi\}$, $u=[0~0.6455~0.7638]^\top $ and $A=\text{diag}([5, 8.57, 12])$. To simulate the \textit{Hybrid Controller} and the \textit{ Velocity-free Hybrid Controller}, we used the HyEQ Toolbox \cite{Sanfelice_matlab}.

\begin{figure}[H]
    \centering
    \includegraphics[width=0.7\linewidth]{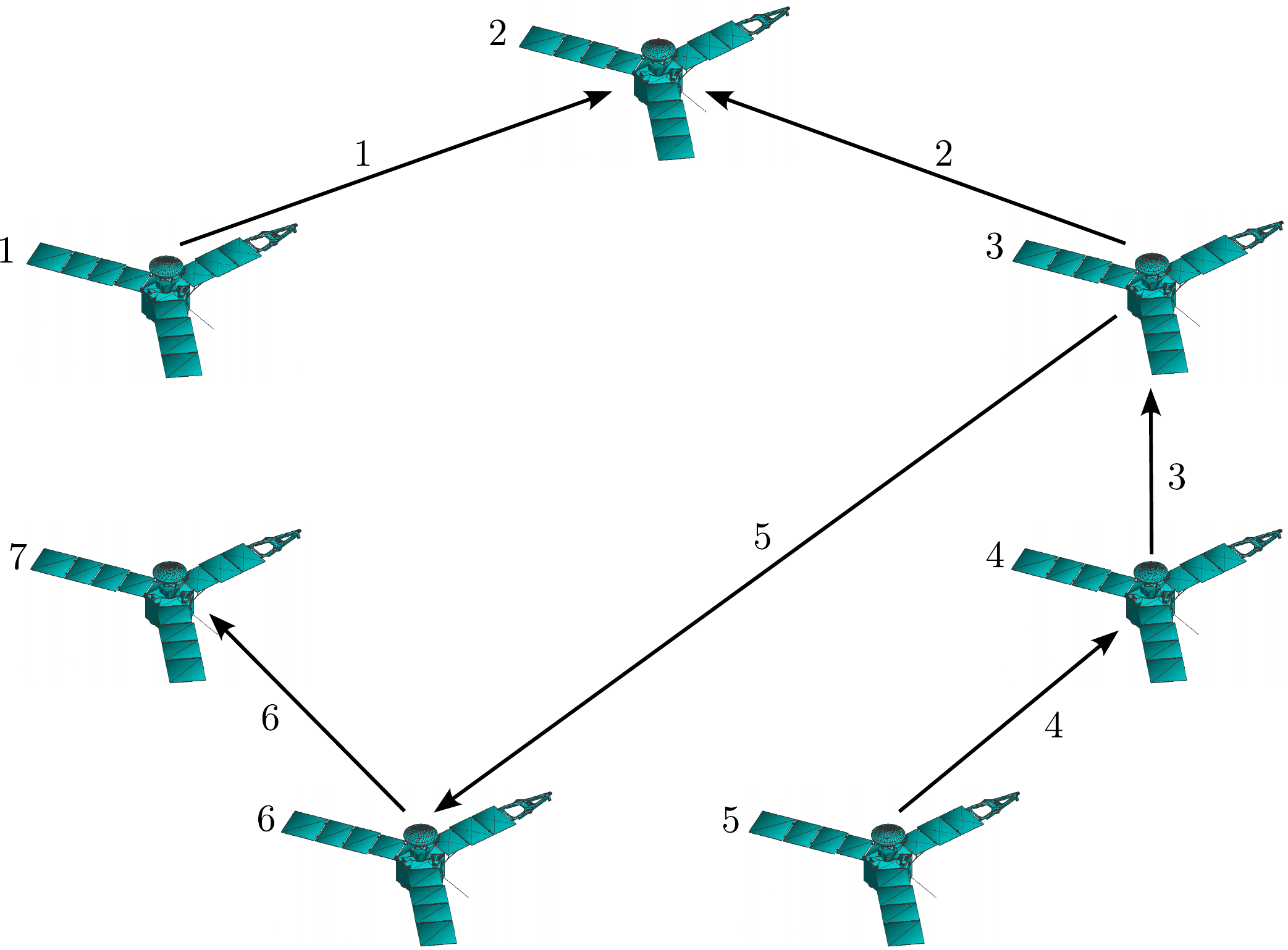}
    \caption{A network of seven satellites with an oriented interaction graph.}
    \label{oriented_graph}
\end{figure}
Figures \ref{R_bar}–\ref{xi} present the simulation results for the two distributed hybrid feedback control laws \eqref{explicit_11}-\eqref{explicit_12} and \eqref{explicit_21}-\eqref{explicit_22}, as well as for the continuous feedback scheme \eqref{continuous_tau}. As shown in Figure \ref{R_bar}, the two hybrid schemes, \ie, \eqref{explicit_11}-\eqref{explicit_12} and \eqref{explicit_21}-\eqref{explicit_22}, achieve faster convergence compared to the continuous controller \eqref{continuous_tau}. Ideally, under the continuous scheme where the initial states lie in the undesired equilibrium set, the agents' attitudes should remain at these undesired equilibria. However, due to numerical artifacts such as quantization or rounding errors inherent in MATLAB simulations\footnote{This phenomenon can also be interpreted as the effect of small perturbations or noise.}, the agents eventually escape from the undesired equilibria. As a result, the continuous scheme remains temporarily stuck before beginning to converge to the synchronized configuration. In contrast, the hybrid schemes avoid this issue through the switching mechanism associated with the variables $\xi_k$, which triggers an initial discrete jump from $0$ to $0.9\pi$. This jump allows the agents’ attitudes to escape the undesired equilibria from the start, leading to faster convergence. This behavior is also evident in Figure \ref{w}, where $\omega_i$, for every $i \in \mathcal{V}$, shows no response under the continuous scheme until approximately $t \approx 1.5\,s$, whereas the two hybrid schemes exhibit a response starting at $t = 0\,s$. Furthermore, Figures \ref{xi_1} and \ref{xi} illustrate the trajectories of the hybrid variables $\xi_k$ (for both hybrid controllers), for every $k \in \mathcal{M}$. These variables initially perform a discrete jump from $0$ to $0.9\pi$ (since the initial conditions lie in the jump set) and then converge smoothly to zero according to the vector fields given in \eqref{explicit_11} and \eqref{explicit_21}. Simulation videos are available at \href{https://youtu.be/VViBrnqnkns}{\textcolor{Rhodamine}{https://youtu.be/VViBrnqnkns}}.
\begin{figure}[H]
    \centering
    \includegraphics[width=1\linewidth,height=4.5cm]{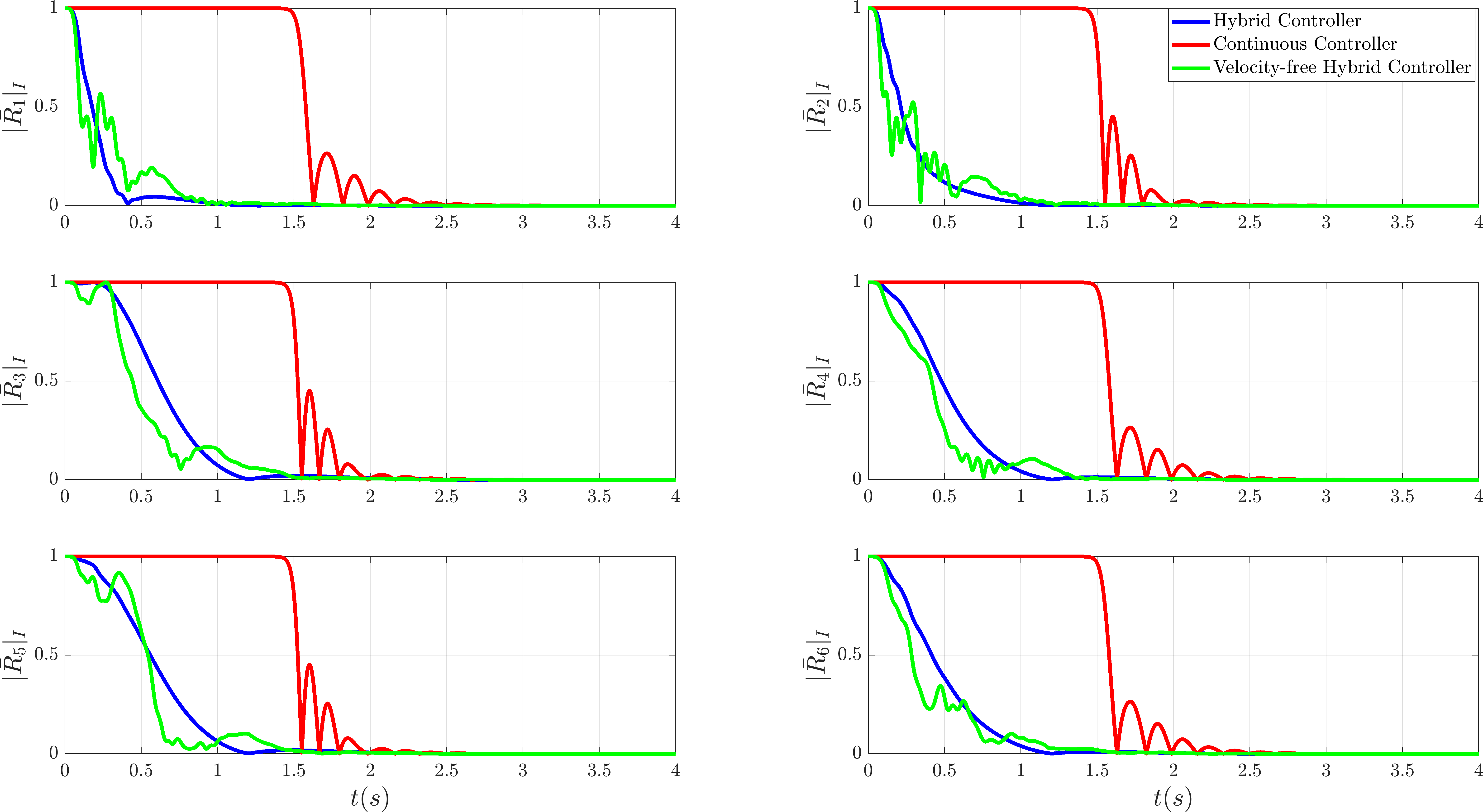}
    \caption{The time evolution of the relative attitude associated with each edge.}
    \label{R_bar}
\end{figure}

\begin{figure}[H]
    \centering
    \includegraphics[width=1\linewidth]{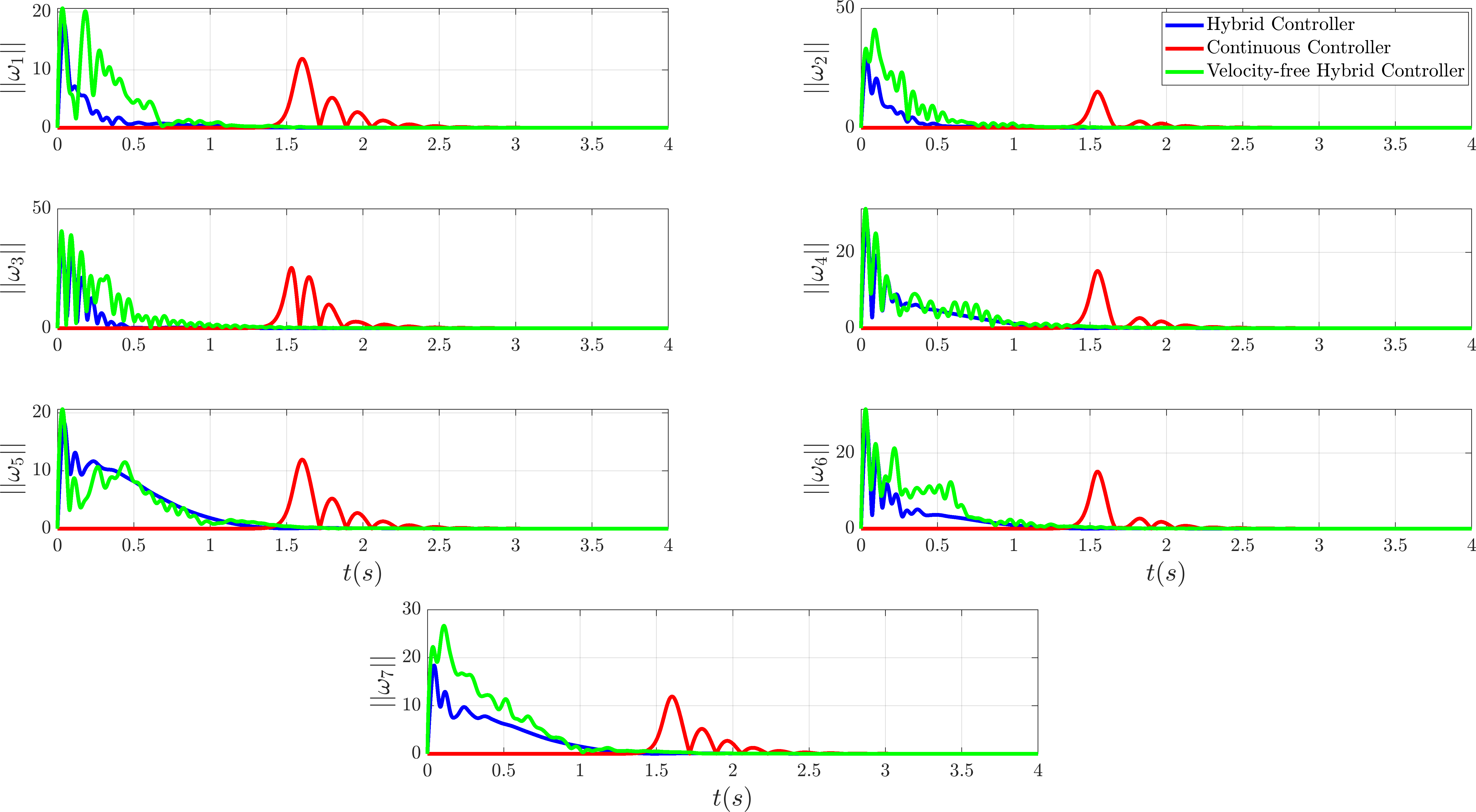}
    \caption{The time evolution of the angular velocity of each agent.}
    \label{w}
\end{figure}

\begin{figure}[h]
    \centering
    \includegraphics[width=0.7\linewidth]{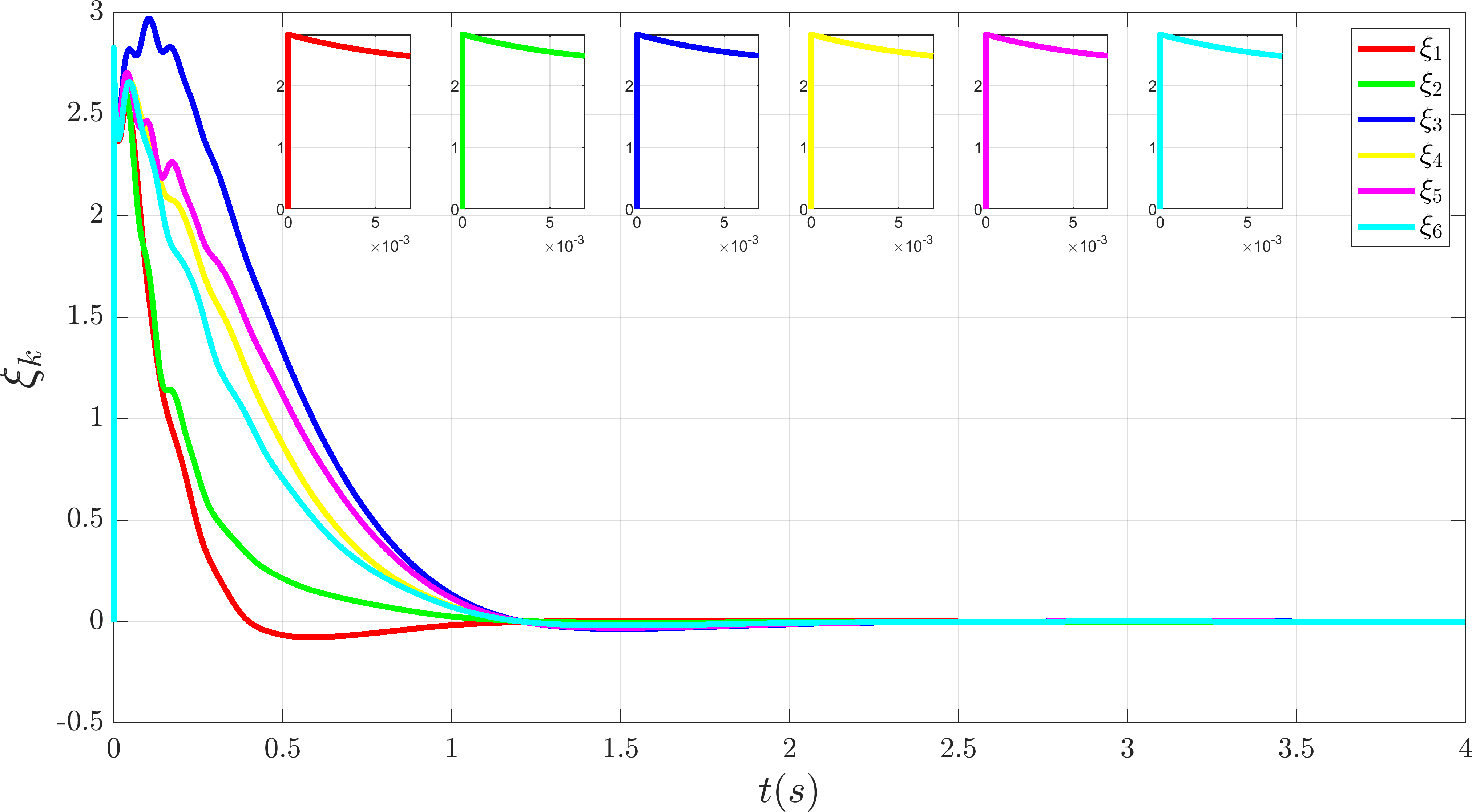}
    \caption{The time evolution of the auxiliary variable $\xi_k$ associated with each edge considering \textit{Hybrid Controller}.}
    \label{xi_1}
\end{figure}

\begin{figure}[H]
    \centering
    \includegraphics[width=0.7\linewidth]{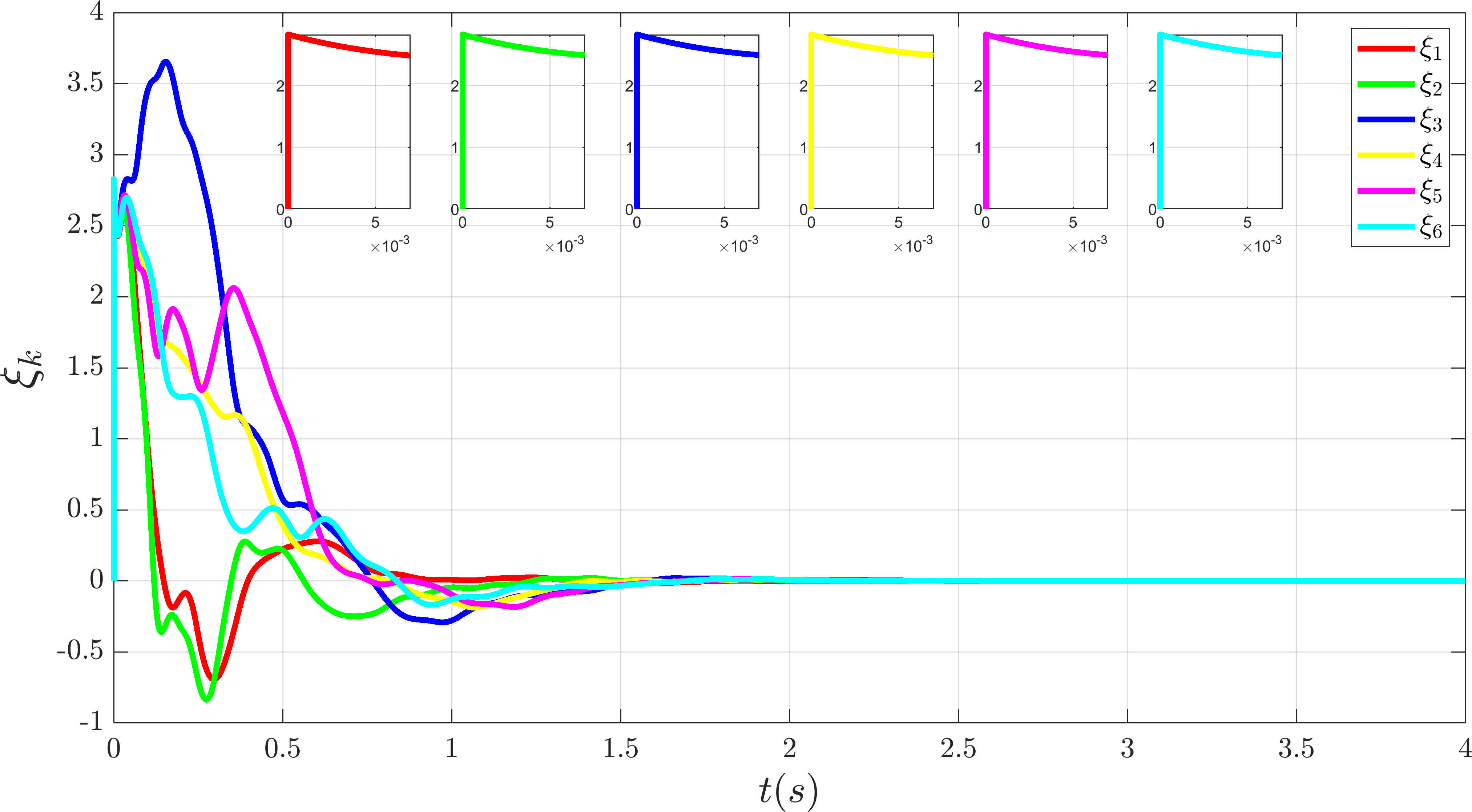}
    \caption{The time evolution of the auxiliary variable $\xi_k$ associated with each edge \textit{Velocity-free Hybrid Controller}.}
    \label{xi}
\end{figure}

In the second simulation, we consider the continuous feedback scheme \eqref{continuous_tau} and the hybrid one \eqref{explicit_11}-\eqref{explicit_12} with $k_\omega = 0$ and $\bar{k}_\omega > 0$. As highlighted in Remark \ref{rmk_tv_ori} and shown in Figures \ref{R_contin_k_w_0} and \ref{R_hybrid_k_w_0}, the absolute attitude of each agent under both feedback laws \eqref{continuous_tau} and \eqref{explicit_11}-\eqref{explicit_12}, with $k_\omega = 0$ and $\bar{k}_\omega > 0$, converges to a common time-varying orientation.

\begin{figure}[H]
    \centering
    \includegraphics[width=0.6\linewidth]{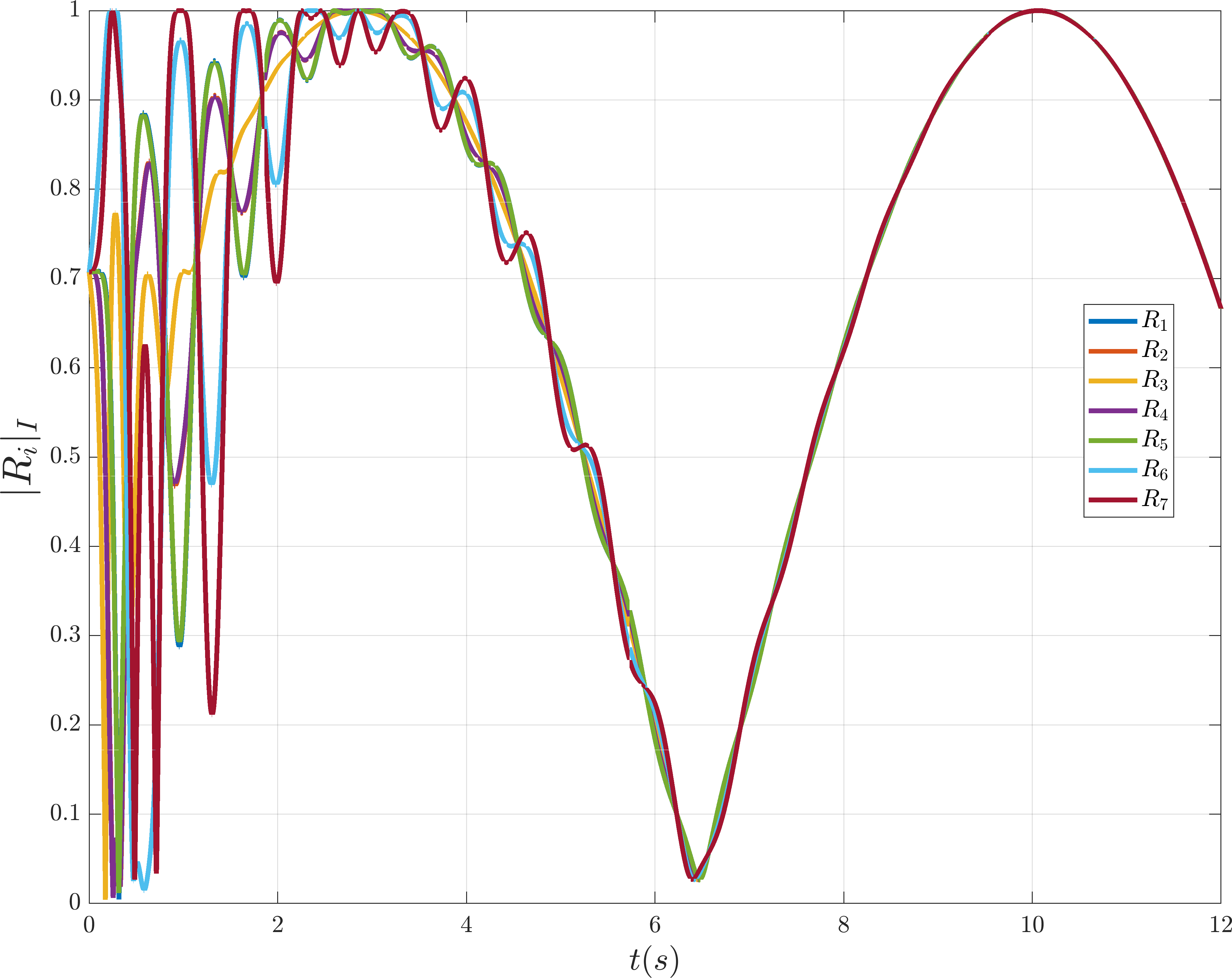}
    \caption{The time evolution of the absolute attitude norm $|R_i|_I$ for each agent under the control law \eqref{continuous_tau} with $k_\omega = 0$ and $\bar{k}_\omega > 0$.}
    \label{R_contin_k_w_0}
\end{figure}

\begin{figure}[H]
    \centering
    \includegraphics[width=0.6\linewidth]{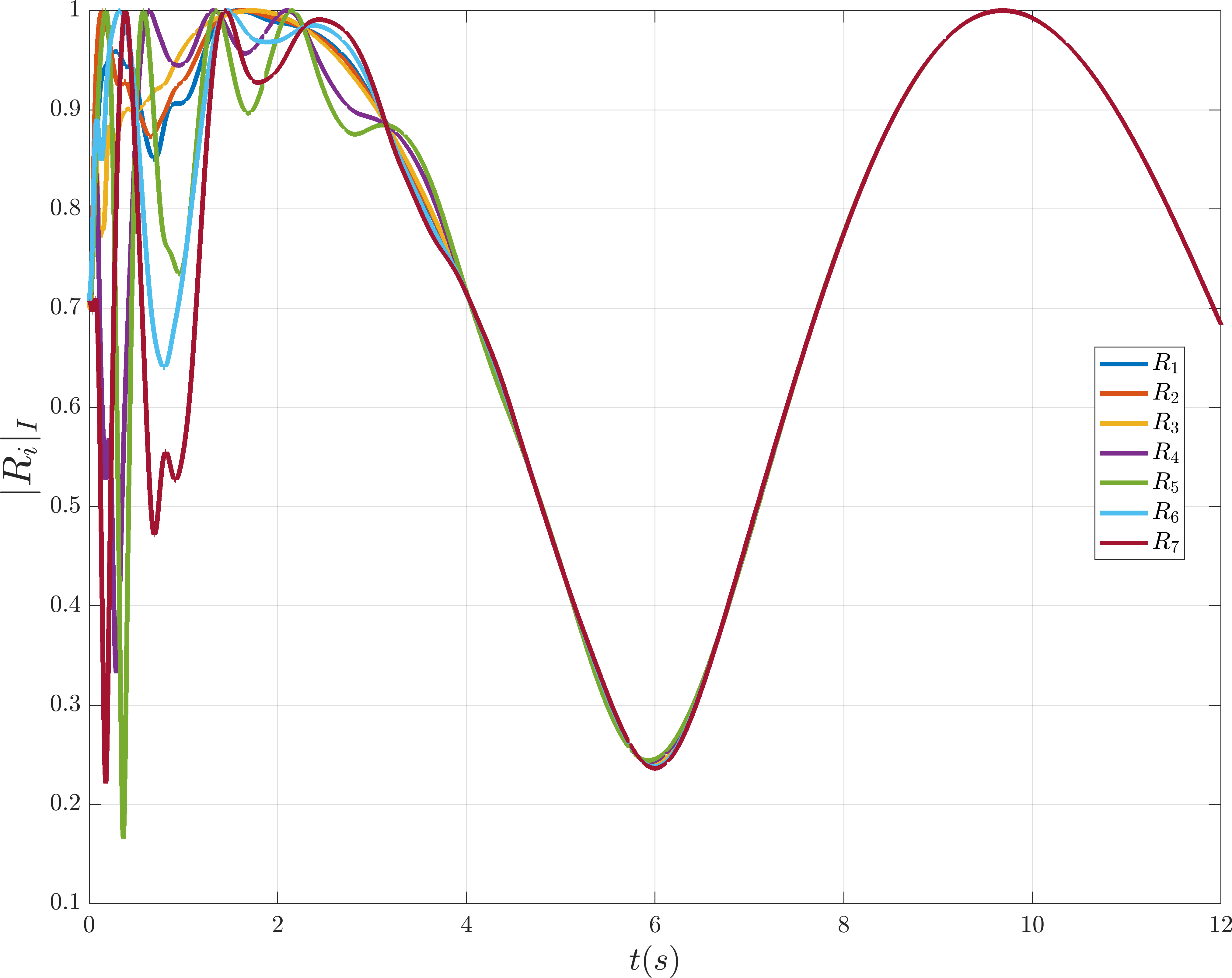}
    \caption{The time evolution of the absolute attitude norm $|R_i|_I$ for each agent under the control law \eqref{explicit_11} with $k_\omega = 0$ and $\bar{k}_\omega > 0$.}
    \label{R_hybrid_k_w_0}
\end{figure}

\section{CONCLUSIONS}\label{s9}
We have presented three attitude synchronization schemes for rigid body systems on $SO(3)$ under undirected, connected, and acyclic communication graph topology. The first continuous distributed attitude synchronization scheme is quite similar to the scheme proposed in \cite{SARLETTE2009572}, where only local asymptotic stability was claimed. We provided a rigorous proof showing that this scheme has almost global asymptotic stability, which is the strongest possible result with smooth control inputs. This result stems from the existence of a finite set of unstable (undesired) equilibria with a stable manifold of zero Lebesgue measure. To overcome this problem and achieve global asymptotic synchronization, we proposed a new distributed hybrid feedback control scheme on $SO(3)$, relying on angular velocity measurements and relative attitude information, guaranteeing global convergence of the individual orientations to a common orientation. Furthermore, a velocity-free distributed hybrid attitude synchronization scheme on $SO(3)$, with global asymptotic stability guarantees, relying only on attitude measurements, has been proposed. The proposed velocity-free control law uses an auxiliary dynamical system for each agent to generate the necessary damping that compensates for the lack of angular velocity information. Note that both proposed hybrid distributed attitude synchronization schemes have been designed under the assumption that the inter-agent interaction topology is fixed with no communication time delays. Unfortunately, this is not the case in many real-world multi-agent application scenarios.  Therefore, redesigning our proposed schemes for multi-agent rigid body systems under dynamically changing and delayed inter-agent communication topology is an interesting future work.

\appendices 
\section{Proof of Theorem \ref{theorem_continuous}} \label{app_1}
Considering the undirected graph $\mathcal{G}$ with a virtual (arbitrary) orientation, one has $\mathcal{N}_i=\mathcal{I}_i\cup\mathcal{O}_i$, with $\mathcal{I}_i:=\{j\in \mathcal{N}_i: j \hspace{0.1cm} \text{is the tail of the oriented edge} \hspace{0.1cm} (i,j) \in \mathcal{E}\}$ and $\mathcal{O}_i:=\{j\in \mathcal{N}_i: j \hspace{0.1cm} \text{is the head of the oriented edge} \hspace{0.1cm} (i,j) \in \mathcal{E}\}$. From this, one can rewrite the first term of \eqref{continuous_tau} as follows:

{\small
\begin{align}
    \sum_{j \in \mathcal{N}_i}& \psi(A R_j^\top R_i) =\sum_{j\in \mathcal{I}_i}\psi(AR_j^\top R_i)+\sum_{j\in \mathcal{O}_i}\psi(AR_j^\top R_i)\nonumber\\
    &= \sum_{j\in \mathcal{I}_i}\psi(AR_j^\top R_i)-\sum_{j\in \mathcal{O}_i}\psi(R_i^\top R_j A)\label{eq_1}\\
    &= \sum_{j\in \mathcal{I}_i}\psi(AR_j^\top R_i)-\sum_{j\in \mathcal{O}_i}R_i^\top R_j\psi(AR_i^\top R_j)\label{eq_2}\\
    &=\nonumber\sum_{n\in \mathcal{M}_i^+}\psi(A\bar{R}_n)-\sum_{l\in \mathcal{M}_i^-}\bar{R}_l\psi(A\bar{R}_l)= \sum_{k=1}^M H_{ik}\psi(A\bar{R}_k),
\end{align}
}where $H_{ik}$ is given in (\ref{H_bar}). Equations (\ref{eq_1}) and (\ref{eq_2}) are obtained using the facts that $\psi(BR)=-\psi(R^{\top}B)$ and $\psi(GR)=R^{\top}\psi(RG)$, $\forall G, B=B^{\top}\in \mathbb{R}^{3\times3} $ and $R\in SO(3)$. Let $\tau:=[\tau_1^\top, \tau_2^\top, \hdots, \tau_N^\top]^\top \in \mathbb{R}^{3N}$. One can derive the following stacked form of control torque \eqref{continuous_tau}:
\begin{equation}\label{compact_continuous_tau} 
    \tau = - k_R \bar H \Psi -k_\omega \omega -\bar k_\omega (\mathcal{L}\otimes I_3) \omega,
\end{equation}
where $\Psi:=\left[\psi(A\bar R_1)^\top, \psi(A\bar R_2)^\top, \hdots, \psi(A\bar R_M)^\top\right]^\top \in \mathbb{R}^{3M}$ and $\mathcal{L}:= H H^\top  \in \mathbb{R}^{N\times N}$ is the Laplacian matrix corresponding to the graph $\mathcal{G}$ and the matrix $H$ is given in \eqref{h_matrix}. Furthermore, the stacked form of the angular velocity dynamics can be written as follows:
\begin{equation}\label{compact_omega_dynamics} 
    \mathbf{J} \dot \omega =  - \mathbf{W}^\times\mathbf{J} \omega- k_R \bar H \Psi -k_\omega \omega - \bar k_\omega (\mathcal{L}\otimes I_3) \omega,
\end{equation}
where $\mathbf{W}^\times=:\text{diag}([\omega_1]^\times, [\omega_2]^\times, \hdots, [\omega_N]^\times)\in \mathbb{R}^{3N\times 3N}$ and $\mathbf{J} :=\text{diag}(J_1, J_2, \hdots, J_N)\in \mathbb{R}^{3N\times 3N}$. Now, let us proceed with the stability analysis for the dynamics \eqref{R_bar_dynamics_k}-\eqref{w_dynamics_k} with control torque \eqref{continuous_tau}. Consider the following Lyapunov function candidate:
\begin{equation}\label{pf}
    \mathcal{V}_z(z) = k_R\sum_{k=1}^{M} \text{tr}\left(A(I_3-\bar{R}_k)\right)+\omega^\top \mathbf{J} \omega.
\end{equation}
This Lyapunov function candidate is positive definite on $\mathcal{S}_z$ with respect to $\mathcal{A}_z$. The time-derivative of $ \mathcal{V}_z(z)$, along the trajectories of the closed-loop system \eqref{R_bar_dynamics_k} and \eqref{compact_omega_dynamics}, is given by
\begin{equation}\label{v_z_dot}
    \dot{\mathcal{V}}_z(z)=- 2 k_\omega ||\omega||^2 -2 \bar k_\omega ||(H^\top \otimes I_3) \omega||^2.
\end{equation}
The last equality was obtained using the facts that $\text{tr}\left(B[x]^\times\right)=\text{tr}\left(\mathbb{P}_a(B)[x]^\times\right)$, $\text{tr}\left([x]^\times [y]^\times\right)=-2x^{\top}y$, $\forall x, y\in \mathbb{R}^3$ and $\forall B \in \mathbb{R}^{3\times3}$ and $\omega^\top (\mathcal{L}\otimes I_3)\omega=||(H^\top \otimes I_3) \omega||^2$. The equality \eqref{v_z_dot} shows that the desired equilibrium set $\mathcal{A}_z$ is stable. Moreover, as per LaSalle’s invariance theorem, any solution $z$ to the dynamics \eqref{R_bar_dynamics_k}-\eqref{w_dynamics_k} with \eqref{continuous_tau}, $\forall i \in \mathcal{V}$, must converge to the largest invariant set contained in the set characterized by $\dot{\mathcal{V}}_z(z)=0$, \ie, $\omega=0$. From \eqref{compact_omega_dynamics}, it follows that $\bar{H} \Psi = 0$. Applying \cite[Lemma 2]{Mouaad_ACC23}, the last equality implies $\Psi = 0$, which further leads to  
\begin{equation} \label{eq_AR}  
     A \bar{R}_k = \bar{R}_k^{\top} A,    
\end{equation}  
for all \(k \in \mathcal{M}\). From \eqref{eq_AR}, we can deduce that \(\mathcal{A}_z \subset \Upsilon_z\). Furthermore, by employing similar arguments to those presented in \cite[Lemma 2]{Mayhew_ACC2011}, every solution \(z\) of the dynamics \eqref{R_bar_dynamics_k}-\eqref{w_dynamics_k} with \eqref{continuous_tau}, \(\forall i \in \mathcal{V}\), must converge to the set \(\Upsilon_z\). This concludes the proof of item \eqref{set_of_equilibrium}.

To prove items (\ref{unstability_of_equilibrium}) and (\ref{stability_of_equilibrium}), we first present the following lemma:
\begin{lem}\label{lem_cp_pf}
    Consider the potential function $\mathcal{V}_z(z)$ given in \eqref{pf}. The following statements hold:
    \begin{enumerate}[I)]
        \item The critical points of $\mathcal{V}_z(z)$ coincide with the set of equilibria $\Upsilon_z$. \label{itm_I1}
        \item The critical points of \( \mathcal{V}_z(z) \) within \( \Upsilon_z \setminus \mathcal{A}_z \) are saddle points.\label{itm_I2}
    \end{enumerate}
\end{lem}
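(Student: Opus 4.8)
The plan is to treat $\mathcal{V}_z$ as a smooth function on the product manifold $\mathcal{S}_z = SO(3)^M \times \mathbb{R}^{3N}$ and to analyze its gradient and Hessian directly. The key structural observation is that, for a tree, the relative orientations $\bar R_1,\dots,\bar R_M$ are independent coordinates of $SO(3)^M$, so $\mathcal{V}_z$ is a separable sum of one trace term $k_R\,\text{tr}(A(I_3-\bar R_k))$ per edge plus the decoupled kinetic term $\omega^\top\mathbf{J}\omega$. Consequently both the gradient and the Hessian split edge-by-edge, which is what makes the analysis tractable.

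For item \eqref{itm_I1}, I would compute the Riemannian gradient of $\mathcal{V}_z$. Its $\omega$-component is $2\mathbf{J}\omega$, which vanishes iff $\omega=0$ since $\mathbf{J}>0$. For each edge, using $d\,\text{tr}(A(I_3-\bar R_k)) = 2\psi(A\bar R_k)^\top\,\text{vex}(\bar R_k^\top d\bar R_k)$, which follows from $\mathbb{P}_a(A\bar R_k)=[\psi(A\bar R_k)]^\times$ together with $\text{tr}([x]^\times[y]^\times)=-2x^\top y$, the $\bar R_k$-component of the gradient is proportional to $\psi(A\bar R_k)$ and hence vanishes iff $A\bar R_k=\bar R_k^\top A$, i.e.\ \eqref{eq_AR}. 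The pair of conditions $\omega=0$ and \eqref{eq_AR} is exactly the characterization of $\Upsilon_z$ obtained in the proof of item \eqref{set_of_equilibrium}, so the critical points of $\mathcal{V}_z$ coincide with $\Upsilon_z$.

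For item \eqref{itm_I2}, the separable structure makes the Hessian at any critical point block diagonal: a positive-definite block $2\mathbf{J}$ for $\omega$ and one block per edge equal to the Hessian of $k_R\,\text{tr}(A(I_3-\bar R_k))$. I would evaluate each edge block along the group-exponential curve $\bar R_k\mapsto\bar R_k\exp([\eta]^\times)$, using $[\eta]^\times[\eta]^\times=\eta\eta^\top-\|\eta\|^2 I_3$, to obtain the quadratic form $-\tfrac12\big(\eta^\top A\bar R_k\,\eta-\|\eta\|^2\,\text{tr}(A\bar R_k)\big)$, where $A\bar R_k$ is symmetric at a critical point. Writing $\eta$ in the eigenbasis $\{q_1,q_2,q_3\}$ of $A$ with distinct eigenvalues $\lambda_1<\lambda_2<\lambda_3$, this form is positive definite at $\bar R_k=I_3$, while at $\bar R_k=\mathcal{R}(\pi,q_i)$ it is diagonal with entry $-\tfrac12\sum_{l\neq i}\lambda_l<0$ along the axis $q_i$, the direction that perturbs the rotation angle away from $\pi$, and transverse entries $\tfrac12(\lambda_i-\lambda_j)$ for the two indices $j\neq i$. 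Distinctness of the eigenvalues makes every such block nondegenerate.

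Finally I would assemble the conclusion: for any $z^\ast\in\Upsilon_z\setminus\mathcal{A}_z$ one has $|\mathcal{M}^\pi|>0$, so at least one edge block carries the negative eigenvalue $-\tfrac12\sum_{l\neq i}\lambda_l$, whereas the $\omega$-block $2\mathbf{J}$ and every $\bar R_m=I_3$ block are positive definite; hence the full Hessian of $\mathcal{V}_z$ at $z^\ast$ is nondegenerate and indefinite, so $z^\ast$ is a saddle point. The main obstacle is the Hessian computation at the $\pi$-rotations and, in particular, identifying the always-negative ``angle'' direction along $q_i$; this is precisely the topological feature of $SO(3)$ that prevents the undesired equilibria from being minimizers, and it is what later yields their instability and the zero-measure stable manifolds invoked for items \eqref{unstability_of_equilibrium}--\eqref{stability_of_equilibrium} of Theorem \ref{theorem_continuous}.
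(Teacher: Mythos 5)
Your proposal is correct and takes essentially the same route as the paper's proof: the paper likewise obtains the critical-point characterization $\{\Psi=0,\ \omega=0\}=\Upsilon_z$ from the first variation along the curves $\bar R_k\exp\left(t[\zeta_k]^\times\right)$ and $\omega_i+t v_i$, then computes the second variation to get the block-diagonal Hessian $\mathrm{diag}(k_R\,\mathbf{A}^*,\mathbf{J})$ with $A_k^*=\mathrm{tr}(A\bar R_k^*)I_3-A\bar R_k^*$, whose eigenvalues at the $\pi$-rotations are $-\sum_{l\neq i}\lambda_l$ together with the differences $\lambda_i-\lambda_j$, and concludes saddle behavior from this indefiniteness against the positive-definite $\mathbf{J}$ block. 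One harmless indexing slip: the transverse eigenvalue $\lambda_i-\lambda_j$ is carried by the eigenvector $q_h$ with $h\neq i,j$ (as in the paper's eigenpair lists), not by $q_j$, but as a multiset your spectrum --- and hence the nondegeneracy and saddle conclusion --- matches the paper exactly.
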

\begin{proof}
    See Appendix \ref{proof:lem_cp_pf}
\end{proof}
 Now, we proceed with the proof of the instability of undesired equilibrium points $\Upsilon_z \setminus \mathcal{A}_z$ for the dynamics \eqref{R_bar_dynamics_k}-\eqref{w_dynamics_k} with the control torque \eqref{continuous_tau}. Consider the real-valued function $\bar{\mathcal{V}}_z(z): SO(3)^M \times \mathbb{R}^{3N}\rightarrow \mathbb{R}$, defined as follows:  
\begin{align}\label{V_bar_1}
    \bar{\mathcal{V}}_z(z) = 2\, k_R \sum_{n \in \mathcal{M}^\pi} (\lambda_{p_n} + \lambda_{d_n}) - \mathcal{V}_z(z),
\end{align}  
where $\lambda_{p_n}$ and $\lambda_{d_n}$ are two distinct eigenvalues of the matrix $A$, \ie, $p_n, d_n \in \{1, 2, 3\}$ with $p_n \neq d_n$. Let $z^* \in \Upsilon_z \setminus \mathcal{A}_z$ denote an undesired equilibrium point such that $\bar{R}_n = \mathcal{R}(\pi, u_{l_n})$ for $n \in \mathcal{M}^\pi$, where $l_n \in \{1, 2, 3\}$ satisfies $l_n \neq p_n$ and $l_n \neq d_n$. Clearly, $\bar{\mathcal{V}}_z(z^*) = 0$. Define the set $\mathbb{B}_r := \{ (\bar{R}_1, \bar{R}_2, \dots, \bar{R}_M, \omega_1, \omega_2, \dots, \omega_N) \in \mathcal{S}_z : |\bar{R}_1^\top \bar{R}_1^*|_I + |\bar{R}_2^\top \bar{R}_2^*|_I + \dots + |\bar{R}_M^\top \bar{R}_M^*|_I+||\omega_1||+||\omega_2||+\dots+||\omega_N|| \leq r \}$ with $r > 0$. Since the set \(\Upsilon_z \setminus \mathcal{A}_z\) consists only of the saddle critical points of \(\mathcal{V}_z(z)\) (as established in Lemma~\ref{lem_cp_pf}), the function \(\mathcal{V}_z(z)\) does not attain a strict local minimum or maximum at these points. Consequently, in the neighborhood of any \(z^* \in \Upsilon_z \setminus \mathcal{A}_z\), there exists at least one point \(\tilde z\) such that \(\mathcal{V}_z(\tilde z) < \mathcal{V}_z(z^*)\) and another point \(\hat z\) such that \(\mathcal{V}_z(\hat z) > \mathcal{V}_z(z^*)\). Therefore, for any \(z^* \in \Upsilon_z \setminus \mathcal{A}_z\), we can always find points in \(\mathbb{B}_r\) where \(\bar{\mathcal{V}}(z) = \mathcal{V}_z(z) - \mathcal{V}_z(z^*) > 0\). This implies that the set $\mathbb{U} = \{ z \in \mathbb{B}_r \mid \bar{\mathcal{V}}(z) > 0 \}$ is non-empty. Moreover, in view of \eqref{v_z_dot} and \eqref{V_bar_1}, one has that $\dot{\bar{\mathcal{V}}}(z) = -\dot{\mathcal{V}}(z) > 0$ in $\mathbb{U}$, which implies that any trajectory originating in the set $\mathbb{U}$ must exit $\mathbb{U}$ from boundary surface of $\mathbb{B}_r$. By virtue of \textit{Chetaev's theorem} \cite{khalil2002nonlinear}, it can be concluded that all points in the undesired equilibrium set $\Upsilon_z \setminus \mathcal{A}_z$ are unstable. This completes the proof of item (\ref{unstability_of_equilibrium}).

Now, to show the almost global asymptotic stability of the desired equilibrium set $\mathcal{A}_z$, we need to prove that the stable manifold associated with the undesired equilibrium set $\Upsilon_z \setminus \mathcal{A}_z$ has zero Lebesgue measure. To do so, we need to show that the linearization of the system \eqref{R_bar_dynamics_k}--\eqref{w_dynamics_k}, together with \eqref{continuous_tau}, at the undesired equilibrium set $\Upsilon_z \setminus \mathcal{A}_z$, leads to a Jacobian matrix with no eigenvalues on the imaginary axis, and admits at least one eigenvalue with a positive real part. Let $\bar R_k=\bar R_k^* \exp{\left([\bar r_k]^\times\right)}$ and $\omega_i = y_i$,  where $\bar r_k , y_i\in \mathbb{R}^3$ are sufficiently small and $\bar R_k^* \in \{\bar R_n, \bar R_m\}$. Considering the expression of $\bar R_k$ and using the first-order approximation $\exp\big([\bar r_k]^\times\big) \approx I_3 + [\bar r_k]^\times$ for sufficiently small $\bar r_k$, we obtain the following linearization of $\bar R_k$ and $\omega_i$ around any point in the undesired equilibrium set $\Upsilon_z \setminus \mathcal{A}_z$:
\begin{equation}\label{first_order_app}
    \bar R_k = \bar R_k^*(I_3 + [\bar r_k]^\times), \quad \omega_i = y_i,
\end{equation}
for every $k \in \mathcal{M}$ and $i \in \mathcal{V}$. Define $\bar r:=\left[\bar r_1^\top, \bar r_2^\top,\dots, \bar r_M^\top\right]^\top \in \mathbb{R}^{3M}$ and $y:=\left[y_1^\top, y_2^\top,\dots, y_N^\top\right]^\top \in \mathbb{R}^{3N}$, and let $\mathbf{A}^* = \text{diag}(A^*_1, A^*_2, \hdots, A^*_M)\in \mathbb{R}^{3M\times 3M}$, where $A^*_k=\text{tr}(A\bar R^*_k)I_3 - A \bar R^*_k$. It follows from the dynamics in \eqref{R_bar_dynamics_k}--\eqref{w_dynamics_k} together with \eqref{continuous_tau} that
\begin{align}\label{linear_sys_und}
\begin{bmatrix}
    \dot{\bar r} \\[4pt] \dot y
\end{bmatrix} =
\mathbf{M}
\begin{bmatrix}
    \bar r \\[4pt] y
\end{bmatrix},
\end{align}
where {\small $$\mathbf{M}:=\begin{pmatrix}
    0_{3M\times 3M} & \bar{H}_c^\top\\
    -\frac{k_R}{2} \mathbf{J}^{-1} \bar{H}_c \mathbf{A}^* & -\mathbf{J}^{-1} \left(k_\omega I_{3N}+\bar k_\omega (\mathcal{L}\otimes I_3)\right) 
\end{pmatrix}$$}with 
{\small \begin{equation}\label{H_bar}
\bar{H}_c := [\bar h^c_{ik}]_{N\times M}, \qquad 
\bar h^c_{ik} =
\begin{cases}
    I_3, & k \in \mathcal{M}_i^+,\\
    -\bar R_k^*, & k \in \mathcal{M}_i^-,\\
    0, & \text{otherwise}.
\end{cases}
\end{equation}
}Note that $\bar{H}_c$ is a constant matrix. Since the interaction graph is an undirected tree, and by following arguments similar to those in the proof of \cite[Lemma 2]{Mouaad_ACC23}, one can show that $\bar{H}_c$ has full column rank. The linear dynamics \eqref{linear_sys_und} has been obtained using the fact that $B [z]^\times + [z]^\times B^{\top} = \left[\left(\mathrm{tr}(B)I_3-B^{\top}\right) z\right]^\times$, $ \forall z\in \mathbb{R}^3$, $\forall B \in \mathbb{R}^{3\times3}$. Moreover, the determinant of the matrix $\mathbf{M}$ is given by
%for
{\small
\begin{align}\label{det_M}
  \text{det}(\mathbf{M}) &= \frac{k_R}{2}\det(\mathbf{J}^{-1}) 
   \text{det}\left( k_\omega I_{3N} + \bar k_\omega (\mathcal{L}\otimes I_3)\right) \nonumber\\
  &\quad \times 
   \text{det}\left(\bar H_c^\top \bigl(k_\omega I_{3N} + \bar k_\omega (\mathcal{L}\otimes I_3)\bigr)^{-1}\bar H_c\right)
   \text{det}(\mathbf{A}^*) \neq 0.
\end{align}
}Hence, $\mathbf{M}$ does not have an eigenvalue at the origin. The inequality in \eqref{det_M} follows from the facts that $\mathbf{J}$ is positive definite, $\mathbf{A}^*$ is invertible (its eigenvalues are nonzero), $\bar H_c$ has full column rank, and $\mathcal{L}$ is positive semidefinite. Now, we show that $\mathbf{M}$ has no eigenvalues with zero real part. To do so, we proceed by contradiction. Assume that $\mathbf{M}$ has an imaginary eigenvalue $i\lambda$, where $i^2=-1$ and $\lambda \in \mathbb{R}\setminus\{0\}$, with a corresponding eigenvector
$
z=(z_1,z_2),
$
with $z_1\in\mathbb{R}^{3M},\; z_2\in\mathbb{R}^{3N}$
such that $\mathbf{M}z=i\lambda z$. It follows from \eqref{linear_sys_und} that 
\begin{align}
&\bar{H}_c^\top z_2 = i\lambda z_1, \label{sys_equ_1}\\
- \frac{k_R}{2} \mathbf{J}^{-1} \bar{H}_c \mathbf{A}^* z_1 -&\mathbf{J}^{-1} \left(k_\omega I_{3N}+\bar k_\omega (\mathcal{L}\otimes I_3)\right) z_2 = i\lambda z_2. \label{sys_equ_2}
\end{align}
Equation \eqref{sys_equ_1} can be rewritten as follows: 
\begin{equation}\label{sys_equ_3}
    z_1 = -\frac{i}{\lambda} \bar{H}_c^\top z_2.
\end{equation}
From \eqref{sys_equ_2} and \eqref{sys_equ_3}, one has 
\begin{equation}\label{sys_equ_4}
    \frac{ik_R}{2\lambda} \mathbf{J}^{-1} \bar{H}_c \mathbf{A}^*\bar H_c^\top z_2 -\mathbf{J}^{-1} \left(k_\omega I_{3N}+\bar k_\omega (\mathcal{L}\otimes I_3)\right) z_2 = i\lambda z_2.
\end{equation}
Furthermore, by multiplying \eqref{sys_equ_4} on the left by $z_2^\top \mathbf{J}$ and rearranging it, one obtains:
\begin{align}\label{sys_equ_5}
    &\left(\frac{k_R}{2\lambda} z_2^\top\bar{H}_c \mathbf{A}^*\bar H_c^\top z_2-\lambda z_2^\top \mathbf{J} z_2 \right)i\nonumber\\
    &~~~~~~~~~~~~~~~~~~~~~~~-z_2^\top \left(k_\omega I_{3N}+\bar k_\omega (\mathcal{L}\otimes I_3)\right) z_2 =0.
\end{align}
It follows from the real part of \eqref{sys_equ_5} that
\begin{align}
    k_\omega ||z_2||^2 +\bar k_\omega ||(H^\top\otimes I_3) z_2||^2 =0,
\end{align}
which implies that $z_2 = 0$. This further implies that $z_1 = 0$ according to \eqref{sys_equ_3}. Hence, the corresponding eigenvector is the zero vector, which is a contradiction. Therefore, the matrix $\mathbf{M}$ does not have any eigenvalues with zero real part. Together with the fact that all points in the undesired equilibrium set $\Upsilon_z \setminus \mathcal{A}_z$ are unstable and the linearized system \eqref{linear_sys_und} does not have an eigenvalue at the origin, this implies that $\mathbf{M}$ must have at least one eigenvalue with a positive real part. Thus, by the stable manifold theorem \cite{Perko_book} and the fact that the vector field defined by the dynamics \eqref{R_bar_dynamics_k}–\eqref{w_dynamics_k} under the continuous control torque \eqref{continuous_tau} is at least $C^1$, the stable manifold associated with the undesired equilibrium set $\Upsilon_z \setminus \mathcal{A}_z$ has zero Lebesgue measure. Consequently, the equilibrium set $\mathcal{A}_z$ is almost globally asymptotically stable. This completes the proof of item (\ref{stability_of_equilibrium}).

\section{Proof of Lemma \ref{lem_cp_pf}}\label{proof:lem_cp_pf}
To verify that the set of critical points of the potential function \eqref{pf} coincides with the set of equilibrium points of the dynamics described by \eqref{R_bar_dynamics_k}-\eqref{w_dynamics_k} under the control torque given in \eqref{continuous_tau}, we will compute the gradient of the potential function $\mathcal{V}_z$ with respect to $\bar{R}_k$ and $\omega_i$ for all $k \in \mathcal{M}$ and $i \in \mathcal{V}$. Let $\mathbb{O} \subset \mathbb{R}$ be an open interval containing zero in its interior. For each $k \in \mathcal{M}$ and $i \in \mathcal{V}$, we define the two smooth curves $\varphi_k: \mathbb{O} \to SO(3)$ and $\gamma_i: \mathbb{O} \to \mathbb{R}^3$ as follows: 
\begin{align}
    \varphi_k(t) = \bar{R}_k \exp\left(t [\zeta_k]^\times\right), \quad
    \gamma_i(t) = \omega_i+v_i t,\label{curve_2}
\end{align}
where $\bar R_k \in SO(3)$, $\omega_i\in \mathbb{R}^3$, $ \zeta_k\in \mathbb{R}^3$ and $ v_i \in \mathbb{R}^3$ for every $k \in \mathcal{M}$ and $i \in \mathcal{V}$. Define $\bar z(t) := \left(\varphi_1(t), \varphi_2(t), \ldots, \varphi_M(t), \gamma_1(t), \gamma_2(t), \ldots, \gamma_N(t)\right) \in \mathcal{S}_z$. The derivative of $\mathcal{V}_z\left(\bar z (t)\right)$ with respect to $t$ is given by:  
\begin{align}\label{hess}
    \frac{d}{dt}\mathcal{V}_z\left(\bar z (t)\right)=&-k_R\sum_{k=1}^{M} \text{tr}\left(A\bar{R}_k\exp{\left(t [\zeta_k]^\times\right)}[\zeta_k]^\times\right)\nonumber\\
    &+2\sum_{i=1}^N v_i^\top J_i (\omega_i+v_i t).
\end{align}
Moreover, at $t=0$, one has
\begin{align}\label{gradient_v}
    &\left.\frac{d}{dt}\mathcal{V}_z\left(\bar z (t)\right)\right|_{t=0}=-k_R\sum_{k=1}^{M} \text{tr}\left(A\bar{R}_k[\zeta_k]^\times\right)+2\sum_{i=1}^N v_i^\top J_i \omega_i\nonumber\\
    =&2\,k_R\sum_{k=1}^{M} \zeta_k^\top \psi(A \bar R_k)+2\sum_{i=1}^N v_i^\top J_i \omega_i
    =2 \begin{bmatrix}
        \zeta\\v
    \end{bmatrix}^\top
    \begin{bmatrix}
        k_R\,\Psi\\ \mathbf{J} \omega
    \end{bmatrix}.
\end{align}
Note that {\small $\mathbf{J}\omega=\left[\left(\nabla_{\omega_1} \mathcal{V}_z\right)^\top,\left(\nabla_{\omega_2} \mathcal{V}_z\right)^\top, \hdots, \left(\nabla_{\omega_N} \mathcal{V}_z\right)^\top\right]^\top$ $\in \mathbb{R}^{3N}$} and {\small $\Psi=\Bigg[\psi\left(\bar R^\top_1\nabla_{\bar R_1} \mathcal{V}_z\right)^\top,\psi\left(\bar R^\top_2\nabla_{\bar R_2} \mathcal{V}_z\right)^\top, \hdots, $ $\psi\left(\bar R^\top_M\nabla_{\bar R_M} \mathcal{V}_z\right)^\top\Bigg]^\top$ $\in \mathbb{R}^{3M}$}, where $\nabla_{\bar R_k} \mathcal{V}_z$ and $\nabla_{\omega_i} \mathcal{V}_z$ are the gradients of $ \mathcal{V}_z$ with respect to $\bar R_k$ and $\omega_i$, respectively, according to the \textit{Riemannian} metrics $\langle \eta_1, \eta_2 \rangle_{SO(3)} = \frac{1}{2} \operatorname{tr}(\eta_1^\top \eta_2)$ and $\langle y_1, y_2 \rangle_{\mathbb{R}^3} = y_1^\top y_2$ for every $\eta_1, \eta_2\in \mathfrak{so}(3)$ and $y_1, y_2 \in \mathbb{R}^3$. The critical points of the potential function $\mathcal{V}_z$ are given by the set $\{z \in \mathcal{S}_z : \Psi = 0 , \omega = 0\}$, which corresponds exactly to the set of equilibria $\Upsilon_z$ of the dynamics \eqref{R_bar_dynamics_k}-\eqref{w_dynamics_k} under the control torque \eqref{continuous_tau}. This completes the proof of item \eqref{itm_I1}.

To proceed with the proof of item \eqref{itm_I2} in Lemma \ref{lem_cp_pf}, we evaluate the \textit{Hessian} of \( \mathcal{V}_z(z) \), denoted as \( \text{\textit{Hess}}\,\mathcal{V}_z(z) \), to analyze the nature of the critical points of \( \mathcal{V}_z(z) \) that lie in the set \( \Upsilon_z \setminus \mathcal{A}_z \). Consider the two smooth curves defined in \eqref{curve_2}, where $\bar R_k=\bar R^*_k$ and $\omega_i=\omega_i^*$ with $(\bar{R}^*_1, \bar{R}^*_2, \ldots, \bar{R}^*_M, \omega^*_1, \omega^*_2, \ldots, \omega^*_N) \in \Upsilon_z \setminus \mathcal{A}_z$. The second derivative of $\mathcal{V}_z(\bar z)$ with respect to $t$ is given by:  
\begin{align}\label{hess}
    \frac{d^2}{dt^2}\mathcal{V}_z(\bar z)=&-k_R\sum_{k=1}^{M} \text{tr} \left(A\bar{R}^*_k\exp{\left(t [\zeta_k]^\times\right)}\left([\zeta_k]^\times\right)^2\right)\nonumber\\
    &-k_R\sum_{k=1}^{M} \text{tr}\left(A\bar{R}^*_k\exp{\left(t [\zeta_k]^\times\right)}[\dot \zeta_k]^\times\right)\nonumber\\
    &+\sum_{i=1}^N v_i^\top J_i v_i+\sum_{i=1}^N \dot{v}_i^\top J_i (\omega^*_i+v_i t).
\end{align}
Given that $\mathbb{P}_a(A\bar{R}^*_k) = 0$ and $\omega_i^* = 0$ for all $(\bar{R}^*_1, \bar{R}^*_2, \ldots, \bar{R}^*_M, \omega^*_1, \omega^*_2, \ldots, \omega^*_N) \in \Upsilon_z \setminus \mathcal{A}_z$, it follows that
{\small
\begin{align}\label{hess_2}
    \left.\frac{d^2}{dt^2}\mathcal{V}_z(\bar z)\right|_{t=0}=&-k_R\sum_{k=1}^{M} \text{tr} \left(A\bar{R}^*_k\left([\zeta_k]^\times\right)^2\right)+\sum_{i=1}^N v_i^\top J_i v_i.
\end{align}
}Using the identity $\left([\zeta]^\times\right)^2 = -\|\zeta\|^2 I_3 + \zeta \zeta^{\top}$ and the property $\text{tr}(\zeta_1 \zeta_2^{\top}) = \zeta_1^{\top} \zeta_2$ for all $\zeta, \zeta_1, \zeta_2 \in \mathbb{R}^3$, we further simplify equation \eqref{hess_2} as follows:
\begin{align}\label{hess_3}
    \left.\frac{d^2}{dt^2}\mathcal{V}_z(\bar z)\right|_{t=0}\!= k_R\sum_{k=1}^{M} \zeta^{\top}_k A^*_k \zeta_k+\sum_{i=1}^N v_i^\top J_i v_i,
\end{align}
where $A^*_k=\text{tr}(A\bar R^*_k)I_3 - A \bar R^*_k$. Letting $\zeta = [\zeta_1^\top, \zeta_2^\top, \hdots, \zeta_M^\top]^\top \in \mathbb{R}^{3M}$ and $v=[v_1^\top, v_2^\top, \ldots, v_N^\top]^\top \in \mathbb{R}^{3N}$, one has
\begin{align}\label{hess_4}
    \left.\frac{d^2}{dt^2}\mathcal{V}_z(\bar z)\right|_{t=0}\!&=k_R\,\zeta^\top \mathbf{A}^* \zeta+v^\top \mathbf{J} v\nonumber\\
    &=\begin{bmatrix}
        \zeta^\top&v^\top
    \end{bmatrix} \begin{pmatrix}
        k_R\,\mathbf{A}^*&0_{3M\times3N}\\
        0_{3N\times3M}&\mathbf{J}
    \end{pmatrix}\begin{bmatrix}
        \zeta \\ v
    \end{bmatrix}.
\end{align}
Recall that $\mathbf{A}^* = \text{diag}(A^*_1, A^*_2, \hdots, A^*_M)\in \mathbb{R}^{3M\times 3M}$. According to \cite{Mahony_book_OAMM}, it follows from \eqref{hess_4} that  
\begin{equation}\label{mtx_hess}
    \text{\textit{Hess}}\,\mathcal{V}_z(z)=\begin{pmatrix}
        k_R\,\mathbf{A}^*&0_{3M\times3N}\\
        0_{3N\times3M}&\mathbf{J}
    \end{pmatrix},
\end{equation}
for every $z \in \Upsilon_z \setminus \mathcal{A}_z$. The matrix \eqref{mtx_hess} represents the \textit{Hessian} of $\mathcal{V}_z$ evaluated at the critical points on $\Upsilon_z \setminus \mathcal{A}_z$. Note that $\text{\textit{Hess}}\,\mathcal{V}_z(z)$ is a block diagonal matrix. To determine whether the critical points on $\Upsilon_z \setminus \mathcal{A}_z$ are minima, maxima, or saddle points, it is essential to analyze the eigenvalues of the matrices $\mathbf{A}^*$ and $\mathbf{J}$. Given that the matrix $\mathbf{J}$ is positive definite, the focus should be on evaluating the eigenvalues of $\mathbf{A}^*$. Since $\mathbf{A}^* = \text{diag}(A^*_1, A^*_2, \hdots, A^*_M)$, we will explicitly determine the eigenvalues of the matrices \( A^*_k \) for every \( k \in \mathcal{M} \). According to the set $\Upsilon_z \setminus \mathcal{A}_z$, we have $A_m^* = \text{tr}(A) - A$ and $A_n^* \in \{{}^{1}A_n, {}^{2}A_n, {}^{3}A_n\}$ for every $m \in \mathcal{M}^I$ and $n \in \mathcal{M}^\pi$, where the matrix ${}^{\beta_n}A_n$ can be calculated as follows:
\begin{align}\label{A_equ}
    {}^{\beta_n}A_n = &\text{tr}\left(A \mathcal{R}(\pi, u_{\beta_n})\right)I_3 - A \mathcal{R}(\pi, u_{\beta_n}),
\end{align}
where $u_{\beta_n} \in \mathcal{E}(A)$. Using the fact that $\mathcal{R}(\pi, u_{\beta_n})=-I_3+2u_{\beta_n} u_{\beta_n}^\top$, it follows from \eqref{A_equ} that 
\[
{}^{\beta_n}A_n = -\text{tr}(A)I_3 + 2\lambda_{\beta_n}I_3 + A - 2\lambda_{\beta_n}u_{\beta_n}u_{\beta_n}^\top,
\]  
where $\lambda_{\beta_n}$ is the eigenvalue of $A$ corresponding to the eigenvector $u_{\beta_n}$ for each $\beta_n \in \{1, 2, 3\}$. Note that, for each \(n \in \mathcal{M}^\pi \), the matrix \( A_n^* \) can take one of three possible values (\ie, ${}^{1}A_n, {}^{2}A_n$ or ${}^{3}A_n$), depending on the choice of the eigenvector of the matrix \( A \). The set of eigenpairs of the matrix \( A_m^* \) is given by 
\[
\left\{(\lambda_2 + \lambda_3, u_1), (\lambda_1 + \lambda_3, u_2), (\lambda_1 + \lambda_2, u_3)\right\}, 
\]
for all \( m \in \mathcal{M}^I \). For the matrices \( {}^{1}A_n \), \( {}^{2}A_n \), and \( {}^{3}A_n \), the eigenpair sets  are found to be:
\begin{itemize}
    \item For \( {}^{1}A_n \): $\{(-\lambda_2 - \lambda_3, u_1), (\lambda_1 - \lambda_3, u_2), (\lambda_1 - \lambda_2, u_3)\},$
    \item For \( {}^{2}A_n \): $\{(\lambda_2 - \lambda_3, u_1), (-\lambda_1 - \lambda_3, u_2), (\lambda_2 - \lambda_1, u_3)\},$
    \item For \( {}^{3}A_n \): $\{(\lambda_3 - \lambda_2, u_1), (\lambda_3 - \lambda_1, u_2), (-\lambda_1 - \lambda_2, u_3)\},$
\end{itemize}
for all \( n \in \mathcal{M}^\pi \). It follows from the fact that \( \lambda_1 >\lambda_2 > \lambda_3 \), the eigenvalues of \( A^*_k \) must either be all negative or contain a mixture of positive and negative values. Combined with the fact that the matrix \( \mathbf{J} \) is positive definite, this indicates that the critical points of \( \mathcal{V}_z(z) \) within \( \Upsilon_z \setminus \mathcal{A}_z \) are saddle points of \( \mathcal{V}_z(z) \). This completes the proof of Lemma \ref{lem_cp_pf}.

\section{Proof of Theorem \ref{theorem1}} \label{app_2}
In view of \eqref{w_dynamics_k} and \eqref{R_obs_f}, one can verify that 
    \begin{equation}\label{w_total_dyn}
        \mathbf{J} \dot \omega = -k^R \bar{H}~\Psi_\nabla^{\bar R}-k_\omega \omega - \bar k_\omega (\mathcal{L}\otimes I_3) \omega,
    \end{equation}
    where {\small $\Psi_\nabla^{\bar R} :=\Bigg[\psi\left(\bar R^\top_1\nabla_{\bar R_1}\bar U\right)^\top ,\psi\left(\bar R^\top_2\nabla_{\bar R_2}\bar U\right)^\top , \hdots,$ $ \psi\left(\bar R^\top_M\nabla_{\bar R_M}\bar U\right)^\top \Bigg]^\top  \in \mathbb{R}^{3M}$}, $\mathbf{J} :=\text{diag}(J_1, J_2, \hdots, J_N)\in \mathbb{R}^{3N\times 3N}$, $\mathcal{L}:= H H^\top  \in \mathbb{R}^{N\times N}$ is the Laplacian matrix corresponding to the graph $\mathcal{G}$ and the matrix $\bar H$ is given in (\ref{H_bar}). Consider the following Lyapunov function candidate:
    \begin{align}\label{potential_fct_V}
        \mathcal{V}(\bar x)=&k_R \bar U(x)+\omega^\top  \mathbf{J}  \omega.
    \end{align}
    Note that $\mathcal{V}$ is positive definite on $\bar{\mathcal{S}}$ with respect to $\bar{\mathcal{A}}$. The time-derivative of $\mathcal{V}$, along the trajectories generated by the flows of the hybrid closed-loop dynamics (\ref{hybrid_sys}), is given by
    \begin{align}\label{v_dot_}
        \dot{\mathcal{V}}(\bar x)= k_R \dot{\bar U}(x)+2 \omega^\top  \mathbf{J}  \dot{\omega}.
    \end{align}
    The time-derivative of the first term of \eqref{v_dot_} can be calculated as follows
    {\small
     \begin{align}\label{U_grd}
        \dot{\bar U}(x) =& \sum_{k=1}^{M}\langle \nabla_{\bar R_k}\bar U, \bar R_k [\bar \omega_k]^\times\rangle_{\bar R_k}+\sum_{k=1}^{M}\langle \langle \nabla_{\xi_k}\bar U, \dot \xi_k \rangle \rangle \nonumber\\
        =& \sum_{k=1}^{M}\langle \langle \bar R_k^\top  \nabla_{\bar R_k}\bar U, [\bar \omega_k]^\times\rangle \rangle+\sum_{k=1}^{M}\langle \langle \nabla_{\xi_k}\bar U, \dot \xi_k \rangle \rangle \nonumber\\
        =& 2\sum_{k=1}^{M} \bar \omega_k^\top  \psi\left(\bar R^\top_k\nabla_{\bar R_k}\bar U\right)+\sum_{k=1}^{M}\dot \xi_k\nabla_{\xi_k}\bar U\\
        =& 2~\bar \omega^\top  \Psi_\nabla^{\bar R}+\dot \xi^\top  \Psi_\nabla^\xi= 2~\omega^\top  \bar H \Psi_\nabla^{\bar R}-k_\xi ||\Psi_\nabla^\xi||^2,\label{U_grd}
    \end{align}
    }where {\small$\Psi_\nabla^\xi :=\left[\nabla_{\xi_1}\bar U,\nabla_{\xi_2}\bar U, \hdots, \nabla_{\xi_M}\bar U\right]^\top  \in \mathbb{R}^M$}. To derive the above equations, the following identities have been used: $\langle \eta_1, \eta_2 \rangle_R=\langle\langle R^\top \eta_1, R^\top \eta_2 \rangle\rangle$, $\text{tr}\left(B[x]^\times\right)=\text{tr}\left(\mathbb{P}_a(B)[x]^\times\right)$ and $\text{tr}\left([x]^\times [y]^\times\right)=-2x^\top y$, $\forall x, y\in \mathbb{R}^3$, $\forall B \in \mathbb{R}^{3\times3}$, $\forall \eta_1, \eta_2 \in \mathfrak{so}(3)$ and $\forall R \in SO(3)$. Furthermore, from \eqref{w_dynamics_k}, \eqref{R_obs_f} and \eqref{U_grd}, one obtains the following time-derivative of $\mathcal{V}(\bar x)$ 
    {\small
    \begin{align}\label{v_dot}
        \dot{\mathcal{V}}(\bar x)=&-k_R k_\xi ||\Psi_\nabla^\xi||^2- 2 k_\omega ||\omega||^2 -2 \bar k_\omega ||(H^\top \otimes I_3) \omega||^2,
    \end{align}
    }which implies that $\mathcal{V}$ is non-increasing along the flows of (\ref{hybrid_sys}). Moreover, in view of (\ref{hybrid_sys}) and (\ref{potential_fct_V}), one has
    {\small
    \begin{align}\label{v_+}
        \mathcal{V}(\bar x)&-\mathcal{V}(\bar x^+)=k_R \left( \bar U(x)-\bar U(x^+)\right)\nonumber\\
        &=k_R\sum_{k=1}^{M}\left(U(\bar{R}_k, \xi_k)-U(\bar{R}_k^+, \xi_k^+)\right)
        \geq& k_R \delta_{\bar R},
    \end{align}
    }which indicates that $\mathcal{V}(\bar x)$ is strictly decreasing over the jumps of (\ref{hybrid_sys}). In view of \eqref{v_dot} and \eqref{v_+}, it follows that set $\bar{\mathcal{A}}$ is stable \cite[Theorem 23]{Goebel_ieee_magazine}, and hence all maximal solutions of \eqref{hybrid_sys} are bounded. From (\ref{v_dot}) and (\ref{v_+}), one can also verify that $\mathcal{V}(\bar x(t,j))\leq \mathcal{V}(\bar x(t_j,j))$ and $\mathcal{V}(\bar x(t_j,j))\leq \mathcal{V}(\bar x(t_j,j-1))-k_R \delta_{\bar R}$, $\forall (t,j), (t_j,j), (t_j,j-1) \in \text{dom}~ \bar x$, with $(t,j)\geq (t_j,j) \geq (t_j,j-1)$. Thus, one has $0\leq \mathcal{V}(\bar x(t,j))\leq \mathcal{V}(\bar x(0,0))-jk_R$, $\forall (t,j)\in \text{dom}~ \bar x$, which leads to $j\leq \lceil \frac{\mathcal{V}(\bar x(0,0))}{k_R\delta_{\bar R}}\rceil$, where $\lceil . \rceil$ denotes the ceiling function. The last inequality implies that the number of jumps is finite and depends on the initial conditions.\\
   Next, using the invariance principle for hybrid systems \cite[Section 8.2]{goebel2012hybrid}, we will demonstrate the global attractivity of $\bar{\mathcal{A}}$. Defining the following functions:
    
    {\small
    \begin{align}
         u_{\bar{\mathcal{F}}}(\bar x) &:=\begin{cases}
                        -k_R k_\xi ||\Psi_\nabla^\xi||^2- 2 k_\omega ||\omega||^2 -2 \bar k_\omega ||(H^\top \otimes I_3) \omega||^2\\
                        ~~~~~~~~~~~~~~~~~~~~~~~ \text{if}~\bar x \in \bar{\mathcal{F}},\\
                         -\infty~~~~~~~~~~~~ \text{otherwise},
                    \end{cases}\label{uf}\\
         u_{\bar{\mathcal{J}}}(\bar x) &:=\begin{cases}
                        &-k_R \delta_{\bar R}~~~~~~~~~~~~ \text{if}~\bar x \in \bar{\mathcal{J}},\\
                        & -\infty~~~~~~~~~~ \text{otherwise}.
                    \end{cases}\label{uj}
    \end{align}}One verifies that the growth of $\mathcal{V}$ is upper bounded during the flows by $u_{\bar{\mathcal{F}}}(\bar x) \leq 0$ and during the jumps by $u_{\bar{\mathcal{J}}}(\bar x) \leq 0$ for every $\bar x \in \bar{\mathcal{S}}$. Consequently, according to \cite[Corollary 8.4]{goebel2012hybrid}, every maximal solution of the hybrid system \eqref{hybrid_sys} converges to the following largest weakly\footnote{The reader is referred to \cite{goebel2012hybrid} for the definition of \textit{weakly invariant} sets in the hybrid systems context.} invariant subset: $$\mathcal{V}^{-1}(r)\cap \bar{\mathcal{S}} \cap \left[\overline{u_{\bar{\mathcal{F}}}^{-1}(0)} \cup \left(u_{\bar{\mathcal{J}}}^{-1}(0)\cap \bar G \left(u_{\bar{\mathcal{J}}}^{-1}(0)\right)\right)\right],$$ for some $r \in \mathbb{R}$, where $\overline{u_{\bar{\mathcal{F}}}^{-1}(0)}$ denotes the closure of the set $u_{\bar{\mathcal{F}}}^{-1}(0)$. For $k_\omega >0$, one can verify that 
\begin{align}
    u_{\bar{\mathcal{F}}}^{-1}(0)=\{\bar x \in \bar{\mathcal{F}}:~\Psi_\nabla^\xi=0,~ \omega=0\},\quad
    u_{\mathcal{J}}^{-1}(0)= \emptyset. \nonumber
\end{align}
Since for every $\bar x \in u_{\bar{\mathcal{F}}}^{-1}(0)$, one has $\omega=0$, which implies $\dot \omega =0$, it follows from \eqref{w_total_dyn} that $\bar H \Psi_\nabla^{\bar R} =0$, which further implies, as per \cite[Lemma 2]{Mouaad_ACC23}, that $\Psi_\nabla^{\bar R}=0$. Therefore, one can verify that any solution to the hybrid closed-loop system \eqref{hybrid_sys} converges to the largest invariant set contained in 
\begin{align}
    u_{\bar{\mathcal{F}}}^{-1}(0)&=\{\bar x \in \bar{\mathcal{F}}:~x \in \mathcal{F}\cap \bar \Upsilon, ~\omega=0\}. \nonumber
\end{align}On the other hand, given $x \in \mathcal{A}$, one has, for all $k \in \mathcal{M}$, $U(\bar{R}_k, \xi_k)-\underset{\bar{\xi}_k\in \Xi}{\text{min}} U(\bar{R}_k,\bar{\xi}_k)=-\underset{\bar{\xi}_k\in \Xi}{\text{min}} U(\bar{R}_k,\bar{\xi}_k)\leq 0$. Therefore, from (\ref{network_f_j_set}), and according to Condition \ref{hybrid_ass}, one can verify that $\mathcal{A} \subset \mathcal{F}\cap \bar \Upsilon$ and $\mathcal{F}\cap (\bar \Upsilon \setminus \mathcal{A})=\varnothing$. In addition, applying some set-theoretic arguments, one has $\mathcal{F} \cap \bar \Upsilon \subset (\mathcal{F} \cap (\bar \Upsilon\setminus \mathcal{A})) \cup (\mathcal{F} \cap \mathcal{A})= \varnothing \cup \mathcal{A}$. It follows from $\mathcal{A} \subset \mathcal{F} \cap \bar \Upsilon$ and $\mathcal{F} \cap \bar \Upsilon \subset \mathcal{A}$ that $\mathcal{F} \cap \bar \Upsilon= \mathcal{A}$. This implies that $u_{\bar{\mathcal{F}}}^{-1}(0)=\bar{\mathcal{A}}$.  Hence, every maximal solution of the hybrid system \eqref{hybrid_sys} converges to the largest weakly invariant subset $\mathcal{V}^{-1}(0)\cap \bar{\mathcal{A}} = \bar{\mathcal{A}}$. Since every maximal solution of the hybrid closed-loop system (\ref{hybrid_sys}) is bounded, $\bar G(\bar x) \in \bar{\mathcal{F}} \cup \bar{\mathcal{J}}$ for every $\bar x \in \bar{\mathcal{J}}$, and $\bar F(\bar x)\subset T_{\bar{\mathcal{F}}}(\bar x)$, for every $\bar x \in \bar{\mathcal{F}}\setminus \bar{\mathcal{J}}$, where $T_{\bar{\mathcal{F}}}(\bar x)$ denotes the tangent cone to $\bar{\mathcal{F}}$ at the point $\bar{x}$, according to \cite[Proposition 6.10]{goebel2012hybrid}, one can conclude that every maximal solution of the hybrid closed-loop system (\ref{hybrid_sys}) is complete. This, together with Lemma \ref{lem_hbc}, allows us to conclude that the set $\bar{\mathcal{A}}$ is globally asymptotically stable for the hybrid closed-loop system (\ref{hybrid_sys}). This completes the proof.

\section{Proof of Theorem \ref{theorem2}} \label{app_3}
 Consider the following Lyapunov function candidate: 
 {\small
\begin{align*}
       \hat{\mathcal{V}}(\hat x)=& k_R \sum_{i=1}^M U(\bar R_k, \xi_k)+k_{\tilde Q} \sum_{i=1}^N U(\tilde Q_i, \zeta_i) + \sum_{i=1}^N \omega_i^\top  J_i \omega_i.
   \end{align*}
}
   One can verify that the above Lyapunov function candidate is positive definite on $\hat{\mathcal{S}}$ with respect to $\hat{\mathcal{A}}$, and its time-derivative, along the trajectories generated by the flows of the hybrid closed-loop dynamics \eqref{hybrid_sys_w_v}, is given by
  {\small
   \begin{align}
       \dot{\hat{\mathcal{V}}}(\hat x)=-k_R k_\xi ||\Psi_\nabla^\xi||^2-2 k_{\tilde Q} k_Q ||\Psi_\nabla^{\tilde Q}||^2-k_{\tilde Q} k_\zeta||\Psi_\nabla^\zeta||^2,
   \end{align}
}
   where {\small $\Psi_\nabla^{\tilde Q} :=\Bigg[\psi\left(\tilde Q^\top_1\nabla_{\tilde Q_1}U(\tilde Q_1, \zeta_1)\right)^\top \hspace{-0.2cm},\psi\left(\tilde Q^\top_2\nabla_{\tilde Q_2}U(\tilde Q_2, \zeta_2)\right)^\top $ $, \hdots, \psi\left(\tilde Q^\top_N\nabla_{\tilde Q_N}U(\tilde Q_N, \zeta_N)\right)^\top \Bigg]^\top  \in \mathbb{R}^{3N}$, $\Psi_\nabla^\zeta :=\bigg[\nabla_{\zeta_1}U(\tilde Q_1, \zeta_1),\nabla_{\zeta_2}U(\tilde Q_2, \zeta_2), \hdots, \nabla_{\zeta_N}U(\tilde Q_N, \zeta_N)\bigg]^\top  $ $ \in \mathbb{R}^N$} and $\zeta :=[\zeta_1, \zeta_2, \hdots, \zeta_N]^\top \in \mathbb{R}^N$. This implies that $\hat{\mathcal{V}}$ is non-increasing along the flow of \eqref{hybrid_sys_w_v}. Furthermore, one has 
   \begin{align}
       \hat{\mathcal{V}}(\hat x)-&\hat{\mathcal{V}}(\hat x^+)=k_R \sum_{k=1}^{M}\left(U(\bar R_k, \xi_k)-U(\bar R_k^+, \xi_k^+)\right)\nonumber\\
       &+k_{\tilde Q} \sum_{i=1}^{N}\left(U(\tilde Q_i, \zeta_i)-U(\tilde Q_i^+, \zeta_i^+)\right)
       \geq \underline{k}~\underline \delta
   \end{align}
   where $\underline k :=\text{min}\{k_R, k_{\tilde Q}\}$ and $\underline \delta :=\text{min}\{\delta_{\bar R}, \delta_{\tilde Q}\}$. Following the same steps as in the proof of Theorem \ref{theorem1}, it can be shown that the set $\hat{\mathcal{A}}$ is stable, every maximal solution of the hybrid closed-loop dynamics \eqref{hybrid_sys_w_v} is complete, and the number of jumps is finite. Furthermore, consider the following two functions:
   \begin{align}
     u_{\hat{\mathcal{F}}}(\hat x) &:=\begin{cases}
                     -k_R k_\xi ||\Psi_\nabla^\xi||^2-2 k_{\tilde Q} k_Q ||\Psi_\nabla^{\tilde Q}||^2-k_{\tilde Q} k_\zeta||\Psi_\nabla^\zeta||^2\nonumber\\
                     ~~~~~~~~~~~~~~~~~~~~\text{if}~\hat x \in \hat{\mathcal{F}},\\
                     -\infty~~~~~~~~~~~~ \text{otherwise},
                \end{cases}\\
     u_{\hat{\mathcal{J}}}(\hat x) &:=\begin{cases}
                    &-\underline{k}~\underline \delta~~~~~~~~~~~~ \text{if}~\hat x \in \hat{\mathcal{J}},\\
                    & -\infty~~~~~~~~~~ \text{otherwise}.
                \end{cases}
\end{align}
It follows from the invariance principle for hybrid systems, given in \cite[Section 8.2]{goebel2012hybrid}, that every maximal solution of the hybrid system \eqref{hybrid_sys_w_v} converges to the following largest weakly invariant subset: $$\hat{\mathcal{V}}^{-1}(r)\cap \hat{\mathcal{S}} \cap \left[\overline{u_{\hat{\mathcal{F}}}^{-1}(0)} \cup \left(u_{\hat{\mathcal{J}}}^{-1}(0)\cap G \left(u_{\hat{\mathcal{J}}}^{-1}(0)\right)\right)\right],$$ for some $r \in \mathbb{R}$, where $u_{\hat{\mathcal{J}}}^{-1}(0)=\emptyset$ and $u_{\hat{\mathcal{F}}}^{-1}(0)=\{\hat x \in \hat{\mathcal{F}}:~\Psi_\nabla^\xi=0,~\Psi_\nabla^{\tilde Q}=0,~\Psi_\nabla^\zeta=0\}$. Note that, for every $\hat x \in u_{\hat{\mathcal{F}}}^{-1}(0)$, one has $\Psi_\nabla^\xi=0$ and, for $i \in \mathcal{V}$, $(\tilde Q_i, \zeta_i) \in \mathcal{F}_i^{\tilde Q} \cap \Upsilon$. According to \cite{miaomiao_TAC2022}, along with Condition \ref{hybrid_ass_2}, it can be shown that $\mathcal{F}_i^{\tilde Q} \cap \Upsilon = \{(I_3, 0)\}$. Moreover, from the fact that $\dot{\tilde Q}_i=0$ (since $\tilde Q_i = I_3$), one has $\omega_i = k_Q \psi\left(\tilde Q^\top_i \nabla_{\tilde Q_i}U(\tilde Q_i, \zeta_i)\right)=0$. This implies that $k_R\bar H \Psi_\nabla^{\bar R}=0$. This fact together with $\Psi_\nabla^\xi=0$ and considering the last part of the proof of Theorem \ref{theorem1}, one has $\bar x \in \bar{\mathcal{A}}$. Finally, one concludes that $u_{\hat{\mathcal{F}}}^{-1}(0)=\hat{\mathcal{A}}$ and every maximal solution of the hybrid system \eqref{hybrid_sys_w_v} converges to the largest weakly invariant subset $\hat{\mathcal{V}}^{-1}(0)\cap \hat{\mathcal{A}} = \hat{\mathcal{A}}$. Combining this with the fact that every maximal solution of the hybrid closed-loop system (\ref{hybrid_sys_w_v}) is complete and (\ref{hybrid_sys_w_v}) satisfies the basic hybrid conditions, implies that the set $\hat{\mathcal{A}}$ is globally asymptotically stable for the hybrid closed-loop system (\ref{hybrid_sys_w_v}). This completes the proof.

\bibliographystyle{IEEEtran}
\bibliography{References}

% Generated by IEEEtran.bst, version: 1.14 (2015/08/26)
\begin{thebibliography}{10}
\providecommand{\url}[1]{#1}
\csname url@samestyle\endcsname
\providecommand{\newblock}{\relax}
\providecommand{\bibinfo}[2]{#2}
\providecommand{\BIBentrySTDinterwordspacing}{\spaceskip=0pt\relax}
\providecommand{\BIBentryALTinterwordstretchfactor}{4}
\providecommand{\BIBentryALTinterwordspacing}{\spaceskip=\fontdimen2\font plus
\BIBentryALTinterwordstretchfactor\fontdimen3\font minus \fontdimen4\font\relax}
\providecommand{\BIBforeignlanguage}[2]{{%
\expandafter\ifx\csname l@#1\endcsname\relax
\typeout{** WARNING: IEEEtran.bst: No hyphenation pattern has been}%
\typeout{** loaded for the language `#1'. Using the pattern for}%
\typeout{** the default language instead.}%
\else
\language=\csname l@#1\endcsname
\fi
#2}}
\providecommand{\BIBdecl}{\relax}
\BIBdecl

\bibitem{Mouaad_ACC24}
M.~Boughellaba and A.~Tayebi, ``Global attitude alignment for multi-agent systems on so(3) without angular velocity measurements,'' in \emph{2024 American Control Conference (ACC)}, 2024, pp. 2041--2046.

\bibitem{Oh_TAC2014}
K.-K. Oh and H.-S. Ahn, ``Formation control and network localization via orientation alignment,'' \emph{IEEE Transactions on Automatic Control}, vol.~59, no.~2, pp. 540--545, 2014.

\bibitem{Moshtagh_TAC2009}
N.~Moshtagh, N.~Michael, A.~Jadbabaie, and K.~Daniilidis, ``Vision-based, distributed control laws for motion coordination of nonholonomic robots,'' \emph{IEEE Transactions on Robotics}, vol.~25, no.~4, pp. 851--860, 2009.

\bibitem{Dimos_SCL_2009}
D.~V. Dimarogonas, P.~Tsiotras, and K.~J. Kyriakopoulos, ``Leader–follower cooperative attitude control of multiple rigid bodies,'' \emph{Systems \& Control Letters}, vol.~58, no.~6, pp. 429--435, 2009.

\bibitem{Bayezit_TIE_2013}
I.~Bayezit and B.~Fidan, ``Distributed cohesive motion control of flight vehicle formations,'' \emph{IEEE Transactions on Industrial Electronics}, vol.~60, no.~12, pp. 5763--5772, 2013.

\bibitem{Ren_TCAT_2010}
W.~Ren, ``Distributed cooperative attitude synchronization and tracking for multiple rigid bodies,'' \emph{IEEE Transactions on Control Systems Technology}, vol.~18, no.~2, pp. 383--392, 2010.

\bibitem{Xin_automatic_2020}
X.~Jin, Y.~Shi, Y.~Tang, and X.~Wu, ``Event-triggered attitude consensus with absolute and relative attitude measurements,'' \emph{Automatica}, vol. 122, p. 109245, 2020.

\bibitem{Ziyang_automatic_2010}
Z.~Meng, W.~Ren, and Z.~You, ``Distributed finite-time attitude containment control for multiple rigid bodies,'' \emph{Automatica}, vol.~46, no.~12, pp. 2092--2099, 2010.

\bibitem{Chen_TAES2019}
T.~Chen, J.~Shan, and H.~Wen, ``Distributed adaptive attitude control for networked underactuated flexible spacecraft,'' \emph{IEEE Transactions on Aerospace and Electronic Systems}, vol.~55, no.~1, pp. 215--225, 2019.

\bibitem{Shuster1993ASO}
M.~D. Shuster, ``A survey of attitude representation,'' \emph{Journal of The Astronautical Sciences}, vol.~41, pp. 439--517, 1993.

\bibitem{Ren_IJACSP2007}
W.~Ren, ``Distributed attitude alignment in spacecraft formation flying,'' \emph{International Journal of Adaptive Control and Signal Processing}, vol.~21, no. 2-3, pp. 95--113, 2007.

\bibitem{BAI20083170}
H.~Bai, M.~Arcak, and J.~T. Wen, ``Rigid body attitude coordination without inertial frame information,'' \emph{Automatica}, vol.~44, no.~12, pp. 3170--3175, 2008.

\bibitem{Liu_automatic2018}
T.~Liu and J.~Huang, ``Leader-following attitude consensus of multiple rigid body systems subject to jointly connected switching networks,'' \emph{Automatica}, vol.~92, pp. 63--71, 2018.

\bibitem{Pedro_TAC2020}
P.~O. Pereira, D.~Boskos, and D.~V. Dimarogonas, ``A common framework for complete and incomplete attitude synchronization in networks with switching topology,'' \emph{IEEE Transactions on Automatic Control}, vol.~65, no.~1, pp. 271--278, 2020.

\bibitem{SAVINO2020142}
H.~J. Savino, L.~C. Pimenta, J.~A. Shah, and B.~V. Adorno, ``Pose consensus based on dual quaternion algebra with application to decentralized formation control of mobile manipulators,'' \emph{Journal of the Franklin Institute}, vol. 357, no.~1, pp. 142--178, 2020.

\bibitem{Zhang_TCS2022}
D.~Zhang, Y.~Tang, X.~Jin, and J.~Kurths, ``Quaternion-based attitude synchronization with an event-based communication strategy,'' \emph{IEEE Transactions on Circuits and Systems I: Regular Papers}, vol.~69, no.~3, pp. 1333--1346, 2022.

\bibitem{Abdessameud_TAC2009}
A.~Abdessameud and A.~Tayebi, ``Attitude synchronization of a group of spacecraft without velocity measurements,'' \emph{IEEE Transactions on Automatic Control}, vol.~54, no.~11, pp. 2642--2648, 2009.

\bibitem{Abdessameud_TAC2012}
A.~Abdessameud, A.~Tayebi, and I.~G. Polushin, ``Attitude synchronization of multiple rigid bodies with communication delays,'' \emph{IEEE Transactions on Automatic Control}, vol.~57, no.~9, pp. 2405--2411, 2012.

\bibitem{Tayebi_TAC2008}
A.~Tayebi, ``Unit quaternion-based output feedback for the attitude tracking problem,'' \emph{IEEE Transactions on Automatic Control}, vol.~53, no.~6, pp. 1516--1520, 2008.

\bibitem{Mayhew_TAC2011}
C.~G. Mayhew, R.~G. Sanfelice, J.~Sheng, M.~Arcak, and A.~R. Teel, ``Quaternion-based hybrid feedback for robust global attitude synchronization,'' \emph{IEEE Transactions on Automatic Control}, vol.~57, no.~8, pp. 2122--2127, 2012.

\bibitem{GUI2018225}
H.~Gui and A.~H. {de Ruiter}, ``Global finite-time attitude consensus of leader-following spacecraft systems based on distributed observers,'' \emph{Automatica}, vol.~91, pp. 225--232, 2018.

\bibitem{Huang_automatic2021}
Y.~Huang and Z.~Meng, ``Global finite-time distributed attitude synchronization and tracking control of multiple rigid bodies without velocity measurements,'' \emph{Automatica}, vol. 132, p. 109796, 2021.

\bibitem{Koditschek}
D.~Koditschek, ``The application of total energy as a lyapunov function for mechanical control systems,'' \emph{Contemporary Mathematics, American Mathematical Society, 1989}, vol.~97, 02 1989.

\bibitem{Bhat_SCL2000}
S.~P. Bhat and D.~S. Bernstein, ``A topological obstruction to continuous global stabilization of rotational motion and the unwinding phenomenon,'' \emph{Systems \& Control Letters}, vol.~39, no.~1, pp. 63--70, 2000.

\bibitem{Maadani_ACC2020}
M.~Maadani, E.~A. Butcher, and A.~K. Sanyal, ``Finite-time attitude consensus control of a multi-agent rigid body system,'' in \emph{2020 American Control Conference (ACC)}, 2020, pp. 877--882.

\bibitem{Tran_ACC22}
Q.~Van~Tran, H.-S. Ahn, and J.~Kim, ``Direction-only orientation alignment of leader-follower networks,'' in \emph{2022 American Control Conference (ACC)}, 2022, pp. 2142--2147.

\bibitem{Tron_CDC2012}
R.~Tron, B.~Afsari, and R.~Vidal, ``Intrinsic consensus on so(3) with almost-global convergence,'' in \emph{2012 IEEE 51st IEEE Conference on Decision and Control (CDC)}, 2012, pp. 2052--2058.

\bibitem{Tron_TAC2012}
------, ``Riemannian consensus for manifolds with bounded curvature,'' \emph{IEEE Transactions on Automatic Control}, vol.~58, no.~4, pp. 921--934, 2013.

\bibitem{Markdahl_TAC2020}
J.~Markdahl, ``Synchronization on riemannian manifolds: Multiply connected implies multistable,'' \emph{IEEE Transactions on Automatic Control}, vol.~66, no.~9, pp. 4311--4318, 2021.

\bibitem{SARLETTE2009572}
A.~Sarlette, R.~Sepulchre, and N.~E. Leonard, ``Autonomous rigid body attitude synchronization,'' \emph{Automatica}, vol.~45, no.~2, pp. 572--577, 2009.

\bibitem{SARLETTE20072232}
------, ``Cooperative attitude synchronization in satellite swarms: A consensus approach,'' \emph{IFAC Proceedings Volumes}, vol.~40, no.~7, pp. 223--228, 2007, 17th IFAC Symposium on Automatic Control in Aerospace.

\bibitem{Alain_SIAM}
A.~Sarlette and R.~Sepulchre, ``Consensus optimization on manifolds,'' \emph{SIAM journal on control and optimization}, vol.~48, no.~1, 2009.

\bibitem{Wei_TAC2018}
J.~Wei, S.~Zhang, A.~Adaldo, J.~Thunberg, X.~Hu, and K.~H. Johansson, ``Finite-time attitude synchronization with distributed discontinuous protocols,'' \emph{IEEE Transactions on Automatic Control}, vol.~63, no.~10, pp. 3608--3615, 2018.

\bibitem{Maadani_2022}
M.~Maadani and E.~A. Butcher, ``6-dof consensus control of multi-agent rigid body systems using rotation matrices,'' \emph{International Journal of Control}, vol.~95, no.~10, pp. 2667--2681, 2022.

\bibitem{Mayhew_TAC2013}
C.~G. Mayhew and A.~R. Teel, ``Synergistic hybrid feedback for global rigid-body attitude tracking on $so(3)$,'' \emph{IEEE Transactions on Automatic Control}, vol.~58, no.~11, pp. 2730--2742, 2013.

\bibitem{Lee2015Global}
T.~Lee, ``{Global Exponential Attitude Tracking Controls on SO ( 3 )},'' vol.~60, no.~10, pp. 2837--2842, 2015.

\bibitem{Berkane_TAC2016}
S.~Berkane and A.~Tayebi, ``Construction of synergistic potential functions on so(3) with application to velocity-free hybrid attitude stabilization,'' \emph{IEEE Transactions on Automatic Control}, vol.~62, no.~1, pp. 495--501, 2017.

\bibitem{miaomiao_TAC2022}
M.~Wang and A.~Tayebi, ``Hybrid feedback for global tracking on matrix lie groups $so(3)$ and $se(3)$,'' \emph{IEEE Transactions on Automatic Control}, vol.~67, no.~6, pp. 2930--2945, 2022.

\bibitem{Mahony_book_OAMM}
P.-A. Absil, R.~Mahony, and R.~Sepulchre, \emph{Optimization Algorithms on Matrix Manifolds}.\hskip 1em plus 0.5em minus 0.4em\relax Princeton University Press, 2007.

\bibitem{Ren_book}
W.~Ren and R.~W. Beard, \emph{Distributed Consensus in Multi-Vehicle Cooperative Control: Theory and Applications}, 1st~ed.\hskip 1em plus 0.5em minus 0.4em\relax Springer Publishing Company, Incorporated, 2007.

\bibitem{Mesbahi_book}
M.~Mesbahi and M.~Egerstedt, \emph{Graph Theoretic Methods in Multiagent Networks}.\hskip 1em plus 0.5em minus 0.4em\relax Princeton: Princeton University Press, 2010.

\bibitem{Goebel_automatica_2006}
R.~Goebel and A.~Teel, ``Solutions to hybrid inclusions via set and graphical convergence with stability theory applications,'' \emph{Automatica}, vol.~42, no.~4, pp. 573--587, 2006.

\bibitem{Goebel_ieee_magazine}
R.~Goebel, R.~G. Sanfelice, and A.~R. Teel, ``Hybrid dynamical systems,'' \emph{IEEE Control Systems Magazine}, vol.~29, no.~2, pp. 28--93, 2009.

\bibitem{goebel2012hybrid}
R.~Goebel, R.~Sanfelice, and A.~Teel, \emph{Hybrid Dynamical Systems: Modeling, Stability, and Robustness}.\hskip 1em plus 0.5em minus 0.4em\relax Princeton University Press, 2012.

\bibitem{mahony_tac2008}
R.~{Mahony}, T.~{Hamel}, and J.~{Pflimlin}, ``Nonlinear complementary filters on the special orthogonal group,'' \emph{IEEE Transactions on Automatic Control}, vol.~53, no.~5, pp. 1203--1218, 2008.

\bibitem{Mayhew_ACC2011}
C.~G. Mayhew and A.~R. Teel, ``Synergistic potential functions for hybrid control of rigid-body attitude,'' in \emph{Proceedings of the 2011 American Control Conference}, 2011, pp. 875--880.

\bibitem{Mayhew_ACC2011_2}
------, ``Hybrid control of rigid-body attitude with synergistic potential functions,'' in \emph{Proceedings of the 2011 American Control Conference}, 2011, pp. 287--292.

\bibitem{7462234}
S.~{Berkane} and A.~{Tayebi}, ``Construction of synergistic potential functions on so(3) with application to velocity-free hybrid attitude stabilization,'' \emph{IEEE Transactions on Automatic Control}, vol.~62, no.~1, pp. 495--501, 2017.

\bibitem{boughellaba2023distributed}
M.~Boughellaba and A.~Tayebi, ``Distributed attitude estimation for multi-agent systems on $so(3)$,'' \emph{arXiv:2304.01928}, 2023.

\bibitem{Sanfelice_matlab}
R.~Sanfelice, D.~Copp, and P.~Nanez, ``A toolbox for simulation of hybrid systems in matlab/simulink: Hybrid equations (hyeq) toolbox,'' in \emph{Proceedings of the 16th International Conference on Hybrid Systems: Computation and Control}, New York, USA, 2013, p. 101–106.

\bibitem{Mouaad_ACC23}
M.~Boughellaba and A.~Tayebi, ``Distributed hybrid attitude estimation for multi-agent systems on so(3),'' in \emph{2023 American Control Conference (ACC)}, 2023, pp. 1048--1053.

\bibitem{khalil2002nonlinear}
H.~Khalil, \emph{Nonlinear Systems}, ser. Pearson Education.\hskip 1em plus 0.5em minus 0.4em\relax Prentice Hall, 2002.

\bibitem{Perko_book}
L.~Perko, \emph{Differential Equations and Dynamical Systems}, 3rd~ed.\hskip 1em plus 0.5em minus 0.4em\relax Springer, 2000.

\end{thebibliography}
\end{document}